	\let\over=\@@over \let\overwithdelims=\@@overwithdelims
	\let\atop=\@@atop \let\atopwithdelims=\@@atopwithdelims
  	\let\above=\@@above \let\abovewithdelims=\@@abovewithdelims
\tikzstyle{int}=[draw, fill=blue!20, minimum size=2em]
\tikzstyle{dot}=[circle, draw, fill=blue!20, minimum size=2em]
\tikzstyle{init} = [pin edge={to-,thin,black}]
\newcommand{\matx}{\ensuremath{\mathcal{X}}}
\newcommand{\mata}{\ensuremath{\mathcal{A}}}
\newcommand{\matr}{\ensuremath{\mathcal{R}}}
\newcommand{\maty}{\ensuremath{\mathcal{Y}}}
\newcommand{\matn}{\ensuremath{\mathcal{N}}}
\newcommand{\mreals}{\ensuremath{\mathbb{R}}}
\newcommand{\supp}{\ensuremath{\mathrm{supp}}}
	\newcommand{\eqref}[1]{~(\ref{#1})}
	\def\mod{\mathop{\rm mod}}
\def\exp{\mathop{\rm exp}}
\def\EE{\Expect}
\def\PP{\mathbb{P}}
\def\FF{\mathbb{F}}
\def\eqdef{\triangleq}
\def\gacap#1{{1\over 2} \log\left(1+{#1}\right)}
\newcommand{\etaKL}{\eta_{\rm KL}}
\newcommand{\etaTV}{\eta_{\rm TV}}
\newcommand{\reals}{\mathbb{R}}
\newcommand{\Expect}{\mathbb{E}}
\newcommand{\prob}[1]{\mathbb{P}\left[#1\right]}
\newcommand{\TV}{d_{\rm TV}}
\newcommand{\dbar}{\bar{d}}
\newcommand{\diff}{{\rm d}}
\newcommand{\eg}{e.g.\xspace}
\newcommand{\ie}{i.e.\xspace}
\newcommand{\pth}[1]{\left( #1 \right)}
\newcommand{\qth}[1]{\left[ #1 \right]}
\newcommand{\sth}[1]{\left\{ #1 \right\}}
\newcommand\indep{\protect\mathpalette{\protect\independenT}{\perp}}
\def\independenT#1#2{\mathrel{\rlap{$#1#2$}\mkern2mu{#1#2}}}
\newcommand{\Bern}{\text{Bern}}
\newcommand{\iprod}[2]{\left \langle #1, #2 \right\rangle}
\newcommand{\Iprod}[2]{\langle #1, #2 \rangle}
\newcommand{\indc}[1]{{\mathbf{1}_{\left\{{#1}\right\}}}}
\definecolor{myblue}{rgb}{.8, .8, 1}
\definecolor{mathblue}{rgb}{0.2472, 0.24, 0.6} % mathematica's Color[1, 1--3]
\definecolor{mathred}{rgb}{0.6, 0.24, 0.442893}
\definecolor{mathyellow}{rgb}{0.6, 0.547014, 0.24}
\newcommand{\tp}{\tilde{p}}
\newcommand{\tx}{{\tilde{x}}}
\newcommand{\tA}{{\tilde{A}}}
\newcommand{\tU}{{\tilde{U}}}
\newcommand{\tV}{{\tilde{V}}}
\newcommand{\tX}{{\tilde{X}}}
\newcommand{\tY}{{\tilde{Y}}}
\newcommand{\sfb}{{\mathsf{b}}}
\newcommand{\calN}{{\mathcal{N}}}
\newcommand{\calR}{{\mathcal{R}}}
\newcommand{\calX}{{\mathcal{X}}}
\newcommand{\bard}{{\bar{d}}}
\newcommand{\diverge}{\to \infty}
\def\unifto{\mathop{{\mskip 3mu plus 2mu minus 1mu%
	\setbox0=\hbox{$\mathchar"3221$}%
	\raise.6ex\copy0\kern-\wd0%
	\lower0.5ex\hbox{$\mathchar"3221$}}\mskip 3mu plus 2mu minus 1mu}}
\def\simleq{{{\mskip 3mu plus 2mu minus 1mu%
	\setbox0=\hbox{$\mathchar"013C$}%
	\raise.2ex\copy0\kern-\wd0%
	\lower0.9ex\hbox{$\mathchar"0218$}}\mskip 3mu plus 2mu minus 1mu}}
\def\simleq{\lesssim}
\def\simgeq{{{\mskip 3mu plus 2mu minus 1mu%
	\setbox0=\hbox{$\mathchar"013E$}%
	\raise.2ex\copy0\kern-\wd0%
	\lower0.9ex\hbox{$\mathchar"0218$}}\mskip 3mu plus 2mu minus 1mu}}
\def\simgeq{\gtrsim}
\newtheorem{theorem}{Theorem}
\newtheorem{lemma}[theorem]{Lemma}
\newtheorem{corollary}[theorem]{Corollary}
\newtheorem{coro}[theorem]{Corollary}
\newtheorem{proposition}[theorem]{Proposition}
\newtheorem{prop}[theorem]{Proposition}
\theoremstyle{definition}
\newtheorem{remark}{Remark}
\newif\ifmapx
\edef\jobnametmp{\expandafter\string\csname ic_apx\endcsname}
\edef\jobnameapx{\expandafter\mkillslash\jobnametmp}
\edef\jobnameexpand{\jobname}
\long\def\apxonly#1{\ifmapx{\color{blue}#1}\fi}
\newcommand{\Lip}{\mathop{\mathrm{Lip}}}
\begin{document}
\ifpdf
\DeclareGraphicsExtensions{.pgf}
\graphicspath{{figures/}{plots/}}
\fi

% paper title
\title{Wasserstein continuity of entropy and outer bounds for interference channels}

\author{Yury Polyanskiy and Yihong Wu\thanks{Y.P. is with the Department of EECS, MIT, Cambridge, MA, email: \url{yp@mit.edu}. Y.W. is with
the Department of ECE and the Coordinated Science Lab, University of Illinois at Urbana-Champaign, Urbana, IL, email: \url{yihongwu@illinois.edu}.}}

%\affil{MIT, UIUC}

\maketitle

\begin{abstract}
It is shown that under suitable regularity conditions, differential entropy is $O(\sqrt{n})$-Lipschitz as a function of probability
distributions on $\reals^n$ with respect to the quadratic Wasserstein distance.  Under similar conditions, 
(discrete) Shannon entropy is shown to be $O(n)$-Lipschitz in distributions over the product space with respect to Ornstein's $\bar d$-distance (Wasserstein distance corresponding to the Hamming distance). These results together with Talagrand's and Marton's transportation-information
inequalities allow one to replace the unknown multi-user interference with its i.i.d.~approximations. As an application, a 
new outer bound for the two-user Gaussian interference channel is proved, which, in particular, settles the ``missing corner point'' problem of Costa (1985).  
\end{abstract}

\section{Introduction}
	\label{sec:intro}
	
	Let $X$ and $\tX$ be random vectors in $\reals^n$. We ask the following question: If the distributions of $X$ and $\tX$ are close in certain sense, can we guarantee that their differential entropies are close as well? For example, one can ask whether 
	\begin{equation}
	D(P_X\|P_\tX) = o(n) \xRightarrow{?} |h(X)-h(\tX)| = o(n).
	\label{eq:Dh}
\end{equation} 
One motivation comes from multi-user information theory, where frequently one user causes interference to the other and in proving the converse
one wants to replace the complicated non-i.i.d. interference by a simpler i.i.d. approximation. As a concrete example,
we consider the so-called ``missing corner point'' problem in the capacity region of the two-user
Gaussian interference channels (GIC) \cite{Costa85}. Perhaps due to the explosion in the number of interfering radio
devices, this problem has attracted renewed attention recently~\cite{costa2011noisebergs,Bustin-ISIT14,CR15,RC15}. For further information on capacity region
of GIC and especially the problem of corner points, we refer to a comprehensive account just published by Igal
Sason~\cite{IS13-GIC-corner}. 

Mathematically, the key question for settling ``missing corner point'' is the following: Given independent $n$-dimensional random vectors $X_1,X_2,G_2,Z$ with the latter two
being Gaussian, is it true that
\begin{equation}\label{eq:costa_req}
	D(P_{X_2+Z}\|P_{G_2 +Z}) = o(n) \xRightarrow{?} |h(X_1+X_2+Z)-h(X_1 + G_2 + Z)| = o(n).
\end{equation}

	To illustrate the nature of the problem, we first note that the answer to \prettyref{eq:Dh} is in fact negative as the counterexample of $X \sim \matn(0, 2 I_n)$ and $\tilde X\sim
{1\over2} \matn(0,I_n) + {1\over2}\matn(0,2I_n)$ demonstrates, in which case the divergence is $D(P_X\|P_\tX) \leq \log 2$ but the differential entropies differ by
$\Theta(n)$. Therefore even for very smooth densities the difference in entropies is not controlled by the divergence.  
%This observation carries over even if we focus on smoothed  differental entropy, namely $|h(X+Z)-h(\tX+Z)|$ for $Z \sim \matn(0, I_n)$. 
The situation for discrete alphabets is very similar, in the sense that the gap of Shannon entropies cannot be bounded by divergence in general (with essentially the same counterexample as that in the continuous case: $X$ and $\tX$ being uniform on one and
two Hamming spheres respectively). 	
%	First, an interesting question is whether for $Z \sim \matn(0, I_n)$ we can prove
%$$ {1\over n} |h(X + Z) - h(\tilde X + Z)| \to 0 \quad\mbox{when}\quad {1\over n} D(P_{X}\|P_{\tilde
%X})\to0\,. $$	

The rationale of the above discussion is two-fold: a) Certain regularity conditions of the distributions must be imposed; b) Distances other than KL divergence might be more suited for bounding the entropy difference.
%Towards a positive resolution to this question, 
Correspondingly, the main contribution of this paper is the following: Under suitable regularity conditions, the difference in entropy (in both continuous and discrete cases) can in fact be bounded by the \emph{Wasserstein distance}, a notion originating from optimal transportation theory which turns out to be the main tool of this paper. 
%These regularity conditions are satisfied by convolutions with Gaussian densities under moment constraints. 

We start with the definition of the Wasserstein distance on the Euclidean space. Given probability measures $P,Q$ on $\reals^n$, define their $p$-Wasserstein distance ($p\geq 1$) as
	\begin{equation}
	W_p(P,Q) \triangleq \inf (\Expect[\|X-Y\|^p])^{1/p},
	\label{eq:wp}
\end{equation}
where $\|\cdot\|$ denotes the Euclidean distance and 
the infimum is taken over all couplings of $P$ and $Q$, \ie, joint distributions $P_{XY}$ whose marginals satisfy $P_X=P$ and $P_Y=Q$. 
The following dual representation of the $W_1$ distance is useful: 
\begin{equation}
	W_1(P,Q) = \sup_{\Lip(f) \leq 1}  \int f dP - \int f dQ.
	\label{eq:w1dual}
\end{equation}

Similar to \prettyref{eq:Dh}, it is easy to see that
%Even with the introduction of Wasserstein distance replacing, 
in order to control $ |h(X) - h(\tX)|$ by means of $W_2(P_{X}, P_{\tX}) $,
%$$ |h(U) - h(V)| \simleq \sqrt{n} W_2(P_{Y}, P_{\tilde Y}) $$
one necessarily needs to assume some regularity properties of $P_{X}$ and $P_{\tX}$; otherwise, choosing one to
be a fine quantization of the other creates infinite gap between differential entropies, while keeping the $W_2$ distance arbitrarily small.
Our main result in \prettyref{sec:wass} shows that under moment constraints and certain conditions on the densities (which are in particular satisfied by convolutions with Gaussians), various information measures such as differential entropy and mutual information on $\reals^n$ are in fact $\sqrt{n}$-Lipschitz continuous with respect to the $W_2$-distance. These results have natural counterparts in the discrete case where the Euclidean distance is replaced by Hamming distance (\prettyref{sec:discrete}).

Furthermore, \emph{transportation-information inequalities}, such as those due to Marton \cite{Marton86} and Talagrand
\cite{Talagrand96}, allow us to bound the Wasserstein distance by the KL divergence (see, \eg, \cite{raginsky2013concentration} for a review). For example, Talagrand's inequality states that if $Q=\calN(0,\Sigma)$, then
\begin{equation}
	W_2^2(P,Q) \le  \frac{2 \sigma_{\max}(\Sigma)}{\log e}
	D(P \| Q) \,,
	\label{eq:talagrand}
\end{equation}
where $\sigma_{\max}(\Sigma)$ denotes the maximal singular value of $\Sigma$. Invoking~\eqref{eq:talagrand} in conjunction with
the Wasserstein continuity of the differential entropy, we establish~\eqref{eq:costa_req} and prove a new outer bound for the capacity region of the
two-user GIC, finally settling the missing corner point in \cite{Costa85}. See \prettyref{sec:gic} for
details.

One interesting by-product is an estimate  that goes in the reverse direction % is the opposite
of~\eqref{eq:talagrand}. Namely, under regularity conditions on $P$ and $Q$ we have\footnote{For positive $a,b$, denote $a \simleq b$ if $a/b$ is at most some universal constant.}
\begin{equation}\label{eq:dgg}
	D(P\|Q) \simleq \sqrt{\int_{\mreals^n} \|x\|^2 (dP + dQ)} \cdot W_2(P,Q)\,
\end{equation}
See~\prettyref{prop:ppr} and \prettyref{cor:w2lip} in the next section. We want to emphasize that there are a number of
estimates of the form $D(P_{X+Z}\|P_{\tilde X+Z}) \simleq W_2^2(P_X,P_{\tilde X})$ where $\tilde X, X$ are independent of a standard Gaussian vector $Z$, cf.~\cite[Chapter
9, Remark 9.4]{villani.topics}. 
The key difference of these estimates from~\eqref{eq:dgg} is that the $W_2$ distance is measured \textit{after}
convolving with $P_Z$.

\paragraph{Notations}	
Throughout this paper $\log$ is with respect to an arbitrary base, which also specifies the units of differential entropy $h(\cdot)$, Shannon entropy
$H(\cdot)$, mutual information $I(\cdot; \cdot)$ and divergence $D(\cdot\|\cdot)$. The natural logarithm is denoted by
$\ln$. The norm of $x\in\reals^n$ is denoted by 
$\|x\| \eqdef (\sum_{j=1}^n x_j^2)^{1/2}$.
For random variables $X$ and $Y$, let $X\indep Y$ denote their independence.

\section{Wasserstein-continuity of information quantities}
\label{sec:wass}

We say that a probability density function $p$ on $\mreals^n$ is $(c_1,c_2)$-regular if $c_1>0, c_2\ge0$ and
$$ \|\nabla \log p(x)\| \le c_1\|x\| + c_2, \qquad \forall x\in\mreals^n\,.$$
Notice that in particular, regular density is never zero and furthermore
$$ |\log p(x) - \log p(0)| \le {c_1\over2} \|x\|^2 + c_2 \|x\|$$
Therefore, if $X$ has a regular density and finite second moment then
$$ |h(X)| \le |\log P_X(0)| + c_2 \EE[\|X\|] + {c_1\over2} \EE[\|X\|^2] < \infty\,. $$

\begin{prop}
\label{prop:ppr}
 Let $U$ and $V$ be random vectors with finite second moments. If $V$ has a $(c_1,c_2)$-regular density $p_V$, then there exists a coupling $P_{UV}$, such that
\begin{equation}\label{eq:ppr1}
	\EE\left[\left|\log {p_{V}(V)\over p_{V}(U)}\right|\right] \le  \Delta \,,
\end{equation}
where 
$$ \Delta = \Big(\frac{c_1}{2} \sqrt{\EE[\|U\|^2]} + \frac{c_1}{2}\sqrt{\EE[\|V\|^2]}+c_2\Big) W_2(P_{U}, P_{V})\,.$$
Consequently,
\begin{equation}\label{eq:ppr2}
	h(U) - h(V) \le \Delta.
\end{equation}
If both $U$ and $V$ are $(c_1,c_2)$-regular, then
\begin{align} 
	|h(U) - h(V)| &\le \Delta,\label{eq:ppr3}\\
	D(P_{U}\|P_{V}) + D(P_{V}\|P_{U}) &\le 2 \Delta. \label{eq:ppr4}
\end{align}
\end{prop}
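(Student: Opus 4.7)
The plan is to first exploit the $(c_1,c_2)$-regularity of $p_V$ to get a pointwise Lipschitz-type bound on $\log p_V$, and then combine it with the optimal $W_2$ coupling via Cauchy--Schwarz. Since $\|\nabla \log p_V\|$ is finite everywhere, $p_V$ is strictly positive, so $\log p_V$ is well-defined. Writing $\log p_V(x) - \log p_V(y) = \int_0^1 \nabla \log p_V(y+t(x-y)) \cdot (x-y)\,dt$ and using $\|y+t(x-y)\| \le (1-t)\|y\| + t\|x\|$ together with the regularity bound, one obtains the basic estimate
\begin{equation}
    |\log p_V(x) - \log p_V(y)| \le \left(\tfrac{c_1}{2}(\|x\| + \|y\|) + c_2\right)\|x-y\|.
    \label{eq:reg-lip}
\end{equation}

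Next, I would take $P_{UV}$ to be an optimal coupling achieving $W_2(P_U,P_V)$ (existence is standard for distributions with finite second moments). Substituting $(V,U)$ into \eqref{eq:reg-lip}, taking expectations, and applying Cauchy--Schwarz term-by-term gives
\[
\mathbb{E}\left[\tfrac{c_1}{2}\|U\|\cdot \|U-V\|\right] \le \tfrac{c_1}{2}\sqrt{\mathbb{E}\|U\|^2}\cdot W_2(P_U,P_V),
\]
and analogous bounds for the $\|V\|$ and $c_2$ terms. Summing yields \eqref{eq:ppr1}, namely $\mathbb{E}\bigl|\log(p_V(V)/p_V(U))\bigr| \le \Delta$. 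I expect this Cauchy--Schwarz step to be the main (though mild) technical point: one has to note that the Lipschitz-type coefficient $\tfrac{c_1}{2}(\|U\|+\|V\|) + c_2$ is not a constant but has controlled $L^2$ norm, so the product with $\|U-V\|$ splits cleanly without sacrificing any factors.

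Finally, I would pass from \eqref{eq:ppr1} to the entropy and divergence bounds by a direct computation. Using the Gibbs-type identity $-\mathbb{E}_U[\log p_V(U)] = h(U) + D(P_U\|P_V) \ge h(U)$, I get
\[
h(U) - h(V) \le \mathbb{E}\!\left[\log \tfrac{p_V(V)}{p_V(U)}\right] \le \mathbb{E}\!\left[\left|\log \tfrac{p_V(V)}{p_V(U)}\right|\right] \le \Delta,
\]
establishing \eqref{eq:ppr2}. When $U$ is also $(c_1,c_2)$-regular, swapping the roles of $U$ and $V$ (under the same optimal coupling, which is symmetric) gives the reverse inequality with the same $\Delta$, yielding \eqref{eq:ppr3}. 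For \eqref{eq:ppr4}, I would write
\[
D(P_U\|P_V) + D(P_V\|P_U) = \mathbb{E}\!\left[\log \tfrac{p_V(V)}{p_V(U)}\right] + \mathbb{E}\!\left[\log \tfrac{p_U(U)}{p_U(V)}\right],
\]
and bound each term by $\Delta$ via \eqref{eq:ppr1} applied to $p_V$ and to $p_U$ respectively, using the regularity of each.
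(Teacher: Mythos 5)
Your proposal is correct and follows essentially the same route as the paper: the segment-integral bound on $\log p_V$ from regularity, expectation under the optimal $W_2$ coupling with Cauchy--Schwarz, and the identity $\EE\left[\log \frac{p_V(V)}{p_V(U)}\right] = h(U)-h(V)+D(P_U\|P_V)$ (applied once, then with $U,V$ swapped) for \eqref{eq:ppr2}--\eqref{eq:ppr4}. The only point the paper handles that you gloss over is the finiteness bookkeeping before invoking that identity (finite second moments give $h(U)<\infty$, and if $h(U)=-\infty$ there is nothing to prove), a minor but worthwhile remark.
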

\begin{proof} 
First notice:
\begin{align} |\log p_{V}(v) - \log p_{V}(u)| &= \left|\int_0^1 dt \iprod{\nabla \log p_{V}(tv +(1-t)u)}{u-v}\right|\\
		&\le \int_0^1 dt (c_2+c_1t\|v\| + c_1(1-t)\|u\|) \|u-v\| \label{eq:ppr1a0}\\
		&  = (c_2+ c_1 \|v\|/2 + c_1 \|u\| /2)\|u-v\|, \label{eq:ppr1a}
\end{align}
where \prettyref{eq:ppr1a0} follows from Cauchy-Schwartz inequality and the $(c_1,c_2)$-regularity of $p_V$.
Taking expectation of~\eqref{eq:ppr1a} with respect to $(u,v)$ distributed according to the optimal
$W_2$-coupling of $P_U$ and $P_V$ and then applying Cauchy-Schwartz and triangle inequality for $L_2$-norm, we obtain \prettyref{eq:ppr1}. 

To show \eqref{eq:ppr2} notice that by finiteness of second moment
$h(U)<\infty$. If $h(U)=-\infty$ then there is nothing to prove.  So assume otherwise, then in
identity
\begin{equation}\label{eq:ppr2a}
	h(U) - h(V) + D(P_{U}\|P_{V}) = \EE\left[\log {p_{V}(V)\over p_{V}(U)}\right] 
\end{equation}
all terms are finite and hence~\eqref{eq:ppr2} follows. Clearly,~\eqref{eq:ppr2} implies~\eqref{eq:ppr3} (when applied
with $U$ and $V$ interchanged).

Finally, for~\eqref{eq:ppr4} just add the identity~\eqref{eq:ppr2a} to itself with $U$ and $V$ interchanged to obtain 
$$ D(P_{U}\|P_{V}) + D(P_{V}\|P_{U}) = 
		\EE\left[\log {p_{V}(V)\over p_{V}(U)}\right]  + \EE\left[\log {p_{U}(U)\over p_{U}(V)}\right]$$
and estimate both terms via~\eqref{eq:ppr1}.
\end{proof}

%\begin{remark} 
%The assumption of regularity is not superfluous. Indeed, in order to control $ |h(U) - h(V)|$ by means of $W_2(P_{U}, P_{V}) $
%%$$ |h(U) - h(V)| \simleq \sqrt{n} W_2(P_{Y}, P_{\tilde Y}) $$
%one necessarily needs to assume some regularity properties of $P_{U}$ and $P_{V}$. Otherwise, choosing one to
%be a fine quantization of another creates infinite gap between differential entropies, while keeping the $W_2$ distance arbitrarily small.
%\end{remark}
%

The key question now is what densities are regular. It turns out that convolution with sufficiently smooth density, such as Gaussians,
produces a regular density. 
%We only give example needed for main results:
\begin{prop}\label{prop:pqr} Let $V=B + Z$ where $B\indep Z\sim\matn(0, \sigma^2 I_n)$ and $\EE[\|B\|] < \infty$. Then the density
of $V$ is $(c_1,c_2)$-regular with $c_1 = {3\log e\over\sigma^2}$ and $c_2={4 \log e\over \sigma^2} \EE[\|B\|]$. 
\end{prop}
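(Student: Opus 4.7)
The plan is to express $\nabla \log p_V$ via a conditional-expectation identity (a form of Tweedie's/Stein's formula) and then bound the resulting expression by a truncation argument that combines Gaussian tail decay with Markov's inequality on $\mu = P_B$. First, the pointwise identity $\nabla_v \varphi_\sigma(v-b) = -\frac{v-b}{\sigma^2}\varphi_\sigma(v-b)$ gives, in nats,
$$\nabla \ln p_V(v) = \frac{1}{p_V(v)}\int \nabla_v \varphi_\sigma(v-b)\,d\mu(b) = -\frac{1}{\sigma^2}\,\EE[V - B \mid V = v],$$
so by Jensen's inequality and the triangle inequality,
$$\|\nabla \ln p_V(v)\| \leq \frac{1}{\sigma^2}\EE[\|V-B\|\mid V=v] \leq \frac{\|v\|}{\sigma^2} + \frac{1}{\sigma^2}\EE[\|B\|\mid V=v].$$
After multiplying by $\log e$ to pass to the ambient logarithm, it therefore suffices to prove the posterior-mean bound $\EE[\|B\|\mid V=v] \leq 2\|v\| + 4\,\EE[\|B\|]$.

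Next, fix $v$ and set the threshold $T = 2\|v\| + 2\,\EE[\|B\|]$. Split the posterior first absolute moment as
$$\EE[\|B\|\mid V=v] = \EE[\|B\|\,\indc{\|B\|\leq T}\mid V=v] + \EE[\|B\|\,\indc{\|B\|> T}\mid V=v].$$
The first term is trivially bounded by $T$. For the second, I would use the elementary bound $\varphi_\sigma(v-b) \leq \varphi_\sigma(0)\,e^{-\|v-b\|^2/(2\sigma^2)}$ combined with $\|v-b\| \geq \|b\| - \|v\| > T - \|v\| = \|v\| + 2\,\EE[\|B\|]$ on the tail event $\{\|b\|>T\}$, giving the uniform estimate $\varphi_\sigma(v-b) \leq \varphi_\sigma(0)\, e^{-(\|v\|+2\EE[\|B\|])^2/(2\sigma^2)}$. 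Integrating against $\|b\|\,d\mu(b)$ produces a numerator of at most $\varphi_\sigma(0)\,e^{-(\|v\|+2\EE[\|B\|])^2/(2\sigma^2)}\,\EE[\|B\|]$.

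The crucial step is to match this decay by a lower bound on the denominator $p_V(v)$. Markov's inequality gives $\mu(\{\|b\|\leq 2\EE[\|B\|]\}) \geq 1/2$, and for any such $b$ one has $\|v-b\| \leq \|v\| + 2\,\EE[\|B\|]$, hence
$$p_V(v) \geq \int_{\|b\|\leq 2\EE[\|B\|]} \varphi_\sigma(v-b)\, d\mu(b) \geq \tfrac{1}{2}\,\varphi_\sigma(0)\, e^{-(\|v\|+2\EE[\|B\|])^2/(2\sigma^2)}.$$
The two Gaussian exponentials cancel exactly, and the tail contribution to the posterior mean is at most $2\,\EE[\|B\|]$. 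Adding the truncated and tail pieces yields $\EE[\|B\|\mid V=v] \leq T + 2\,\EE[\|B\|] = 2\|v\| + 4\,\EE[\|B\|]$, which combined with Step~1 gives the claim with $c_1 = 3\log e/\sigma^2$ and $c_2 = 4\log e/\sigma^2\cdot \EE[\|B\|]$.

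The main obstacle is the last step: a naive bound on the tail term (using only $\varphi_\sigma \leq \varphi_\sigma(0)$) would produce a factor $\varphi_\sigma(0)/p_V(v)$ that blows up when $v$ is far from $\supp\mu$. The trick is to choose the truncation radius $T = 2\|v\| + 2\,\EE[\|B\|]$ precisely so that the Gaussian decay from the tail event ($\|v-b\|>\|v\|+2\EE[\|B\|]$) matches the Gaussian factor in the Markov-based lower bound on $p_V(v)$; without this exact matching, one would not obtain a bound linear in $\|v\|$.
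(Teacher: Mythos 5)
Your proof is correct, and the constants come out exactly as in the statement. At the level of key ideas it follows the same route as the paper: both proofs start from the identity $\nabla \log p_V(v) = \frac{\log e}{\sigma^2}\left(\EE[B\mid V=v]-v\right)$ and both rest on the same Markov-inequality lower bound $p_V(v) \ge \tfrac12 (2\pi\sigma^2)^{-n/2} e^{-(\|v\|+2\EE[\|B\|])^2/(2\sigma^2)}$. Where you differ is in the tail decomposition: the paper (mirroring Lemma~4 of the Wu--Verd\'u MMSE paper) writes the posterior mean as $\EE\bigl[\|B-v\|\,a(B,v)\bigr]$ with $a(B,v)=p_Z(B-v)/p_V(v)$ and truncates on the likelihood-ratio event $\{a>2\}$, using $\EE[a(B,v)]=1$ together with the fact that $a>2$ forces $\|B-v\|\le \sqrt{2}\,\sigma\sqrt{\ln^+\frac{1}{(2\pi\sigma^2)^{n/2}2p_V(v)}}$; you instead truncate on the spatial event $\{\|B\|>2\|v\|+2\EE[\|B\|]\}$, chosen so that the Gaussian decay on the tail cancels the Gaussian factor in the denominator bound, and control the truncated piece by $\int\|b\|\,d\mu\le\EE[\|B\|]$. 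Your version is slightly more elementary (no $\ln^+$ manipulation, no use of $\EE[a]=1$), and you also peel off $\|v\|$ by the triangle inequality and bound $\EE[\|B\|\mid V=v]$ rather than $\EE[\|B-v\|\mid V=v]$ directly; the paper's likelihood-ratio truncation is the more generic device (it does not use the geometry of the ball $\{\|b\|\le T\}$ and transfers verbatim to other log-concave noise densities), but for the Gaussian case the two arguments are of equal strength and yield identical constants.
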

\begin{proof} First notice that whenever density $p_Z$ of $Z$ is differentiable and non-vanishing, we have:
\begin{equation}\label{eq:pqrX}
	\nabla \log p_{V}(v) =  \frac{\EE[\nabla p_{Z}(v-B)]}{p_V(v)} =  \EE[\nabla \log p_{Z}(v-B)|V=v]\,,
\end{equation}
where $p_V(v)=\EE[p_{Z}(v-B)]$ is the density of $V$.
For $Z\sim\matn(0, \sigma^2 I_n)$, we have
$$ \nabla \log p_{Z}(v-B) = {\log e\over \sigma^2} (B-v). $$
So the proof is completed by showing
\begin{equation}\label{eq:pqr0}
	\EE[\|B-v\| \,|\,V=v] \le 3\|v\| + 4\EE[\|B\|]\,.
\end{equation}
For this, we mirror the proof in~\cite[Lemma 4]{mmse.functional.IT}. Indeed, we have
\begin{align} \EE[\|B-v\| |V=v] &= \EE\qth{\|B-v\| {p_{Z}(B-v)\over p_{V}(v)}} \\
		   &\le 2\EE[\|B-v\|1\{a(B,v)\le 2\}] + \EE[\|B-v\|a(B,v) 1\{a(B,v)>2\}]\,, \label{eq:pqr1}
\end{align}		   
		   where we denoted
		   $$ a(B,v) \eqdef {p_{Z}(B-v)\over p_{V}(v)}\,.$$
Next, notice that
$$ \{a(B,v) > 2\} = \{\|B-v\|^2 \le -2\sigma^2 \ln((2\pi\sigma^2)^{n/2} 2 p_{V}(v))\}\,.$$
Thus since $\EE[p_{Z}(B-v)] = p_{V}(v)$ we have an upper bound for the second term in~\eqref{eq:pqr1} as follows
\begin{equation}\label{eq:pqr2}
	\EE[\|B-v\|a(B,v) 1\{a(B,v)>2\}] \le \sqrt{2}\sigma \sqrt{\ln^+{1\over(2\pi\sigma^2)^{{n\over2}} 2 p_{V}(v)}}\,, 
\end{equation}
where $\ln^+ x \triangleq \max\{0,\ln x\}$.
From Markov inequality we have $ \PP[\|B\| \le 2 \EE[\|B\|]] \ge 1/2$
and therefore
$$ p_{V}(v) \ge {1\over 2 (2\pi \sigma^2)^{n\over2}} e^{-{(\|v\| + 2 \EE[\|B\|])^2\over 2\sigma^2}}\,. $$
Using this estimate in~\eqref{eq:pqr2} we get
\begin{equation}\label{eq:pqr3}
	\EE[\|B-v\|a(B,v) 1\{a(B,v)>2\}] \le \|v\| + 2 \EE[\|B\|]\,. 
\end{equation}
Upper-bounding the first term in~\eqref{eq:pqr1} by $2\EE[\|B\|] + 2\|v\|$ we finish the proof of~\eqref{eq:pqr0}.
\end{proof}

Another useful criterion for regularity is the following:
\begin{prop} If $W$ has $(c_1,c_2)$-regular density and $B\indep W$ satisfies
	\begin{equation}\label{eq:power_as}
		\|B\| \le \sqrt{nP} \qquad \mbox{a.s.} 
\end{equation}	
	then $V = B+W$ has $(c_1,c_2 + c_1\sqrt{nP})$-regular density.
\end{prop}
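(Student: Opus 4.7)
The plan is to mimic the proof of \prettyref{prop:pqr}, replacing the explicit Gaussian score bound with the hypothesized regularity bound on $W$, and then use the almost-sure bound on $\|B\|$ in place of the probabilistic tail argument.

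First I would write $p_V(v) = \EE[p_W(v-B)]$ and, assuming $p_W$ is differentiable and strictly positive (which follows from $(c_1,c_2)$-regularity), pass the gradient inside the expectation to obtain the tilted identity
\[
 \nabla \log p_V(v) \;=\; \frac{\EE[\nabla p_W(v-B)]}{p_V(v)} \;=\; \EE\bigl[\nabla \log p_W(v-B) \,\big|\, V=v\bigr],
\]
exactly as in~\eqref{eq:pqrX}.

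Next, I would invoke the $(c_1,c_2)$-regularity of $p_W$ pointwise: almost surely in $B$,
\[
 \|\nabla \log p_W(v-B)\| \;\le\; c_1\|v-B\| + c_2 \;\le\; c_1\|v\| + c_1\|B\| + c_2 \;\le\; c_1\|v\| + c_2 + c_1\sqrt{nP},
\]
where the second step is the triangle inequality and the last step uses the almost-sure bound~\eqref{eq:power_as}. Taking conditional expectation given $V=v$ and applying Jensen's inequality to the norm yields
\[
 \|\nabla \log p_V(v)\| \;\le\; \EE\bigl[\|\nabla \log p_W(v-B)\| \,\big|\, V=v\bigr] \;\le\; c_1\|v\| + \bigl(c_2 + c_1\sqrt{nP}\bigr),
\]
which is precisely the $(c_1, c_2 + c_1\sqrt{nP})$-regularity of $p_V$.

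I do not foresee a genuine obstacle here: the almost-sure bound on $\|B\|$ collapses the difficult Gaussian-tail estimate~\eqref{eq:pqr2}--\eqref{eq:pqr3} in the previous proposition to a one-line triangle inequality. The only small care point is justifying the differentiation under the expectation and the non-vanishing of $p_V$; both follow from the regularity of $p_W$ (which forces $p_W>0$ everywhere and $\nabla p_W$ to be dominated by an integrable function against $P_B$), so Leibniz's rule applies without difficulty.
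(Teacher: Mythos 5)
Your proof is correct and follows the same route as the paper: apply the identity~\eqref{eq:pqrX} with $W$ in place of the Gaussian, then bound $\EE[\|\nabla \log p_W(v-B)\|\,|\,V=v] \le c_1(\|v\|+\sqrt{nP})+c_2$ via regularity, the triangle inequality, and the almost-sure bound~\eqref{eq:power_as}. The paper's proof is exactly this one-line estimate; your write-up merely spells out the Jensen step and the differentiation-under-the-integral justification.
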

\begin{proof} Apply~\eqref{eq:pqrX} and the estimate:
$$ \EE[ \|\nabla \log p_{W}(v-B)\| \, |\, V=v ] \le c_1(\|v\| + \sqrt{nP}) + c_2. \qedhere$$
\end{proof}

As a consequence of regularity, we show that when smoothed by Gaussian noise, mutual information, differential entropy and divergence are Lipschitz with respect to the $W_2$-distance under average power constraints:

\begin{corollary} 
\label{cor:w2lip}
Assume that $X,\tX \indep Z$, with $\EE[\|X\|^2], \EE[\|\tX\|^2]  \leq {nP} $ and $Z \sim \calN(0,\sigma^2 I_n)$. Then
%for some universal constant $\kappa$, %  10 \log e 
\begin{align} 
%|I(X; X + Z) - I(\tilde X; \tilde X + Z)| &=\nonumber\\
|I(X; X + Z) - I(\tilde X; \tilde X + Z)| =	|h(X + Z) - h(\tilde X + Z)| &\leq \Delta,
		\label{eq:tpr1}\\
 D(P_{\tilde X + Z} \| P_{X + Z}) +   D(P_{X + Z} \| P_{\tilde X + Z}) &\leq 2 \Delta,
\end{align}
where $\Delta = \frac{\log e}{\sigma^2} (3\sqrt{n(\sigma^2 +P)} + 4 \sqrt{nP}) W_2(P_{X + Z}, P_{\tilde X + Z})$.
\end{corollary}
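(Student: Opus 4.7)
The plan is to specialize Proposition~\ref{prop:ppr} to $U = X+Z$ and $V = \tX+Z$, using Proposition~\ref{prop:pqr} to supply the regularity hypothesis and the finite second moment assumptions to control the constants. Concretely, since $X\indep Z$ and $\tX\indep Z$ with $Z\sim\calN(0,\sigma^2 I_n)$, Proposition~\ref{prop:pqr} applies with $B=X$ (resp.\ $\tX$) to show that both densities $p_{X+Z}$ and $p_{\tX+Z}$ are regular with the same $c_1 = 3\log e/\sigma^2$ and with $c_2$'s proportional to $\EE[\|X\|]$ and $\EE[\|\tX\|]$ respectively. By Jensen's inequality these expectations are bounded by $\sqrt{nP}$, so we can take the common regularity constant $c_2 = (4\log e/\sigma^2)\sqrt{nP}$ for both densities.

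Next I would bound the second moments of $U$ and $V$. By independence of $X$ and $Z$,
\[
\EE[\|X+Z\|^2] = \EE[\|X\|^2] + \EE[\|Z\|^2] \le n(P+\sigma^2),
\]
and the same bound holds for $\tX+Z$. Substituting these bounds together with $c_1,c_2$ into the definition of $\Delta$ from Proposition~\ref{prop:ppr} gives
\[
\Delta_{\text{Prop.~\ref{prop:ppr}}} \le \pth{\frac{c_1}{2}\cdot 2\sqrt{n(P+\sigma^2)} + c_2}W_2(P_{X+Z},P_{\tX+Z}) = \frac{\log e}{\sigma^2}\pth{3\sqrt{n(\sigma^2+P)} + 4\sqrt{nP}}W_2(P_{X+Z},P_{\tX+Z}),
\]
which matches the $\Delta$ in the corollary. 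Then the differential-entropy bound \eqref{eq:tpr1} is exactly \eqref{eq:ppr3} applied to $U=X+Z,V=\tX+Z$ (finite second moments guarantee finiteness of the differential entropies so the proposition applies), and the symmetric divergence bound is \eqref{eq:ppr4}. Finally, the identity $|I(X;X+Z) - I(\tX;\tX+Z)| = |h(X+Z)-h(\tX+Z)|$ follows since $I(X;X+Z) = h(X+Z)-h(Z)$ and similarly for $\tX$, so the common $h(Z)$ cancels.

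There is essentially no obstacle here beyond careful bookkeeping: the corollary is a direct specialization of Propositions~\ref{prop:ppr} and \ref{prop:pqr}. The only mildly delicate point is making sure both $U$ and $V$ are regular with a \emph{common} pair $(c_1,c_2)$ so that \eqref{eq:ppr3}--\eqref{eq:ppr4} apply symmetrically; this is handled by replacing the individual bounds $\EE[\|X\|], \EE[\|\tX\|] \le \sqrt{nP}$ with their common upper bound $\sqrt{nP}$ when defining $c_2$.
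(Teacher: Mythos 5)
Your proposal is correct and follows exactly the paper's own route: apply Proposition~\ref{prop:pqr} with $B=X$ and $B=\tX$ (using $\EE[\|X\|]\le\sqrt{nP}$ by Jensen) to get the common $(3\log e/\sigma^2,\,4\sqrt{nP}\log e/\sigma^2)$-regularity, then invoke \eqref{eq:ppr3}--\eqref{eq:ppr4} of Proposition~\ref{prop:ppr} with $U=\tX+Z$, $V=X+Z$ and bound the second moments by $n(P+\sigma^2)$ to recover the stated $\Delta$. The bookkeeping, including the cancellation of $h(Z)$ in the mutual-information identity, is accurate.
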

\begin{proof} Since $\Expect[\|X\|] \leq \sqrt{nP}$, by Proposition~\ref{prop:pqr}, the densities of $X + Z$ and $\tX + Z$ are both $(\frac{3 \log e}{\sigma^2}, \frac{4\sqrt{n P} \log e}{\sigma^2} )$-regular.
The desired statement then follows from applying~\eqref{eq:ppr3}-\eqref{eq:ppr4} to $V=X+Z$ and $U=\tX+Z$.
\end{proof}

\begin{remark} The Lipschitz constant $\sqrt{n}$ is order-optimal as the example of Gaussian $X$ and $\tilde X$ with
different variances (one of them could be zero) demonstrates. The linear dependence on $W_2$ is also optimal. To see this, consider $X\sim \calN(0,1)$ and $\tX\sim \calN(0,1+t)$ in one dimension. Then $|h(X+Z)-h(\tX+Z)| = 1/2 \log(1+t/2) = \Theta(t)$ and $W_2^2(X+Z,\tX+Z) = (\sqrt{2+t}-\sqrt{2})^2 = \Theta(t^2)$, as $t \to 0$.
\end{remark}

In fact, to get the best constants for applications to interference channels it is best to forgo the notion of regular density and
deal directly with~\eqref{eq:pqrX}. Indeed, when the inputs has bounded norms, the next result gives a sharpened version of what can be obtained by combining \prettyref{prop:ppr} with \ref{prop:pqr}.
\begin{prop}\label{prop:best} Let $B$ satisfying \prettyref{eq:power_as} and $G \sim \matn(0, \sigma_G^2 I_n)$ be independent. Let $V = B + G$.  Then for any $U$, 
\begin{align}
h(U) - h(V) 
		&\le {\log e \over 2\sigma_G^2}\left(\EE[\|U\|^2] - \EE[\|V\|^2] + 2 \sqrt{nP} W_1(P_{U}, P_{V})\right)\,.\label{eq:bd_best}
\end{align}		
\end{prop}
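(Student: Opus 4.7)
The plan is to sharpen the argument of \prettyref{prop:ppr} by exploiting the a.s.\ boundedness of $B$ rather than merely its second-moment regularity. The starting point is again the identity~\eqref{eq:ppr2a},
\[
h(U) - h(V) + D(P_U\|P_V) = \EE\left[\log \frac{p_V(V)}{p_V(U)}\right],
\]
and since $D(P_U\|P_V)\geq 0$, it suffices to bound $\EE[\log p_V(V)] - \EE[\log p_V(U)]$ from above. (Cases where one of $h(U)$, $\EE[\|U\|^2]$, or $W_1(P_U,P_V)$ is infinite are handled trivially as in the proof of \prettyref{prop:ppr}.)

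The key step is to ``complete the square'' so as to convert $\log p_V$ into a Lipschitz function. Define
\[
f(v) \eqdef \log p_V(v) + \frac{\log e}{2\sigma_G^2}\|v\|^2.
\]
Applying~\eqref{eq:pqrX} to $V = B + G$ with $G\sim\matn(0,\sigma_G^2 I_n)$ gives $\nabla \log p_V(v) = \frac{\log e}{\sigma_G^2}\EE[B - v \,|\, V=v]$, so that
\[
\nabla f(v) = \frac{\log e}{\sigma_G^2}\,\EE[B \,|\, V=v].
\]
Since $\|B\|\leq \sqrt{nP}$ almost surely, Jensen's inequality yields $\|\nabla f(v)\|\leq \frac{\sqrt{nP}\,\log e}{\sigma_G^2}$, so that $f$ is Lipschitz with this constant.

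The last step is Kantorovich--Rubinstein duality~\eqref{eq:w1dual}, which gives
\[
\EE[f(V)] - \EE[f(U)] \leq \frac{\sqrt{nP}\,\log e}{\sigma_G^2}\, W_1(P_V, P_U).
\]
Substituting the definition of $f$ and rearranging produces precisely~\eqref{eq:bd_best}.

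The main obstacle is recognizing the completing-the-square shift. The score $\nabla\log p_V$ only satisfies a linear growth bound, so $\log p_V$ itself is not Lipschitz; adding $\frac{\log e}{2\sigma_G^2}\|v\|^2$ exactly cancels the $-v$ piece of the score and converts linear growth into a \emph{uniform} bound precisely because the MMSE estimator $\EE[B|V=v]$ inherits the a.s.\ norm bound on $B$. This is where the stronger hypothesis $\|B\|\leq \sqrt{nP}$ a.s.\ buys a $W_1$ estimate, which is sharper than what would follow from combining \prettyref{prop:ppr} with \prettyref{prop:pqr} (the latter only giving a $W_2$ bound with an additive $\sqrt{n(\sigma_G^2+P)}$ term from the second moment of $V$).
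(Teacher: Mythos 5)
Your proof is correct and takes essentially the same route as the paper: the paper likewise plugs the Gaussian score into \eqref{eq:pqrX}, and its line integral of $\nabla\log p_V$ from $u$ to $v$ produces exactly your completed square (the bounded term $\EE[B|V]$ plus the exact quadratic $-\frac{\log e}{2\sigma_G^2}(\|v\|^2-\|u\|^2)$), finishing with \eqref{eq:ppr2a} and $D\ge 0$. The only cosmetic difference is that you pass to the Kantorovich--Rubinstein dual \eqref{eq:w1dual} for the Lipschitz function $f$, while the paper takes expectation of the same pointwise bound under the $W_1$-optimal coupling.
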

\begin{proof} Plugging Gaussian density $p_G(z)= \frac{1}{\sqrt{2\pi}\sigma_G} e^{-z^2/(2\sigma_G^2)}$ into~\eqref{eq:pqrX} we get
\begin{equation}\label{eq:pqrXX}
		\nabla \log p_{V}(v) = {\log e\over\sigma_G^2}(\hat B(v) - v)\,,
\end{equation}	
	where $\hat B(v) \triangleq \EE[B|V=v] = \frac{\Expect[B p_G(v - B)]}{\Expect[p_G(v - B)]}$ satisfies
		$$ \|\hat B(v)\| \le \sqrt{nP},$$
since $\|B\| \leq \sqrt{nP}$ almost surely.
%(see~\cite[Proof of Theorem 8]{PV12-optcodes}). 
Next we use
\begin{align} \log {p_{V}(v)\over p_{V}(u)} &= \int_0^1 dt \iprod{\nabla \log p_{V}(tv +(1-t)u)}{v-u} \label{eq:s1}\\
				&={\log e\over\sigma_G^2} \int_0^1 dt \Iprod{\hat B(tv + (1-t)u)}{ v-u} - {\log e\over2\sigma_G^2} (\|v\|^2-\|u\|^2)\\
				&\le {\sqrt{nP}\log e\over\sigma_G^2} \|v-u\| - {\log e\over2\sigma_G^2} (\|v\|^2-\|u\|^2)\,.
\end{align}			
Taking expectation of the last equation under the $W_1$-optimal coupling and in view of \prettyref{eq:ppr2a}, we obtain \prettyref{eq:bd_best}.
%For the average norm constraint, we proceed similarly to the proof of \prettyref{prop:pqr}. Applying the estimate \prettyref{eq:pqr0} to \prettyref{eq:s1} and the Cauchy-Schwartz inequality, we have
%\begin{align} \log {p_{V}(v)\over p_{V}(u)} &= \int_0^1 dt \iprod{\nabla \log p_{V}(tv +(1-t)u)}{v-u} \label{eq:s1}\\
%				&={\log e\over\sigma_G^2} \int_0^1 dt \Iprod{\hat B(tv + (1-t)u)}{ v-u} - {\log e\over2\sigma_G^2} (\|v\|^2-\|u\|^2)\\
%				&\le {\sqrt{nP}\log e\over\sigma_G^2} \|v-u\| - {\log e\over2\sigma_G^2} (\|v\|^2-\|u\|^2)\,.
%\end{align}			
\end{proof}

To get slightly better constants in one-sided version of~\eqref{eq:tpr1} we apply Proposition~\ref{prop:best}:
\begin{corollary}\label{cor:best} 
Let $A,B,G,Z$ be independent, with $G\sim \calN(0,\sigma_G^2 I_n)$, $Z \sim \calN(0, \sigma_Z^2 I_n)$ and $B$ satisfying~\eqref{eq:power_as}. Then for every $c\in[0,1]$ we have:
\begin{align}
 & ~ h(B + A + Z) - h(B + G + Z)	\nonumber \\
\leq & ~ 	{\log e\over2(\sigma_G^2 + \sigma_Z^2)} \left(\EE[\|A\|^2] + 2 \iprod{\Expect[A]}{\Expect[B]}- \EE[\|G\|^2]\right) + 
		{\sqrt{2 n P (\sigma_G^2 + c^2\sigma_Z^2)\log e}\over \sigma_G^2 + \sigma_Z^2 } \sqrt{ D(P_{A + c Z} \| P_{G + c
		Z})}
\end{align}
\end{corollary}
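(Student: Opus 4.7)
The plan is to apply Proposition~\ref{prop:best} directly, taking the Gaussian noise there to be $G+Z\sim\matn(0,(\sigma_G^2+\sigma_Z^2)I_n)$, and then control the resulting $W_1$ term by Talagrand's inequality after ``extracting'' $cZ$ from the full noise $Z$.

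\emph{Step 1 (apply Proposition~\ref{prop:best}).} Set $U=B+A+Z$ and $V=B+(G+Z)$, viewing $G+Z$ as the Gaussian summand of variance $\sigma_G^2+\sigma_Z^2$. Since $B$ satisfies~\eqref{eq:power_as} and is independent of $G+Z$, Proposition~\ref{prop:best} yields
\begin{equation*}
h(U)-h(V)\le \frac{\log e}{2(\sigma_G^2+\sigma_Z^2)}\Big(\EE[\|U\|^2]-\EE[\|V\|^2]+2\sqrt{nP}\,W_1(P_U,P_V)\Big).
\end{equation*}
Using independence of $A,B,Z,G$ together with $\EE[Z]=\EE[G]=0$, the moment difference expands to
\begin{equation*}
\EE[\|U\|^2]-\EE[\|V\|^2]=\EE[\|A\|^2]-\EE[\|G\|^2]+2\iprod{\EE[A]}{\EE[B]},
\end{equation*}
which is exactly the first term of the claimed bound.

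\emph{Step 2 (reduce the $W_1$ term).} It remains to show $W_1(P_U,P_V)\le W_1(P_{A+cZ},P_{G+cZ})$. Write $Z=Z_1+Z_2$ with independent $Z_1\sim\matn(0,c^2\sigma_Z^2 I_n)$ and $Z_2\sim\matn(0,(1-c^2)\sigma_Z^2 I_n)$, and further let $B,G$ be independent of $(Z_1,Z_2)$. Then $U=(A+Z_1)+(B+Z_2)$ and $V=(G+Z_1)+(B+Z_2)$ share the common independent summand $B+Z_2$, so convolution non-expansiveness of $W_1$ (proved by transporting the optimal coupling for $A+Z_1,G+Z_1$ and adding the same $B+Z_2$ to both marginals) gives
\begin{equation*}
W_1(P_U,P_V)\le W_1(P_{A+Z_1},P_{G+Z_1})=W_1(P_{A+cZ},P_{G+cZ}),
\end{equation*}
where the last equality uses $Z_1\eqdistr cZ$.

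\emph{Step 3 (apply Talagrand).} Since $G+cZ\sim\matn(0,(\sigma_G^2+c^2\sigma_Z^2)I_n)$, the Talagrand inequality~\eqref{eq:talagrand} and the trivial bound $W_1\le W_2$ give
\begin{equation*}
W_1(P_{A+cZ},P_{G+cZ})\le \sqrt{\frac{2(\sigma_G^2+c^2\sigma_Z^2)}{\log e}\,D(P_{A+cZ}\|P_{G+cZ})}.
\end{equation*}
Substituting this into the bound from Step~1 and simplifying the constant
\begin{equation*}
\frac{\sqrt{nP}\log e}{\sigma_G^2+\sigma_Z^2}\cdot\sqrt{\frac{2(\sigma_G^2+c^2\sigma_Z^2)}{\log e}}=\frac{\sqrt{2nP(\sigma_G^2+c^2\sigma_Z^2)\log e}}{\sigma_G^2+\sigma_Z^2}
\end{equation*}
produces the claimed estimate.

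The only genuinely delicate point is Step~2: one must notice that the $c$-parameter can be injected by splitting the Gaussian $Z$ into two independent pieces, absorbing one piece into the common convolution factor and keeping the other for Talagrand's inequality. Once this decomposition is spotted, the rest is bookkeeping.
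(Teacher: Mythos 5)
Your proof is correct and follows essentially the same route as the paper: apply Proposition~\ref{prop:best} with the Gaussian taken to be $G+Z$, reduce the Wasserstein term to $W(P_{A+cZ},P_{G+cZ})$ by non-expansiveness under convolution (your single-step absorption of $B+Z_2$ is just a compressed version of the paper's two reductions via $P_B$ and Gaussian stability), and finish with Talagrand plus $W_1\le W_2$. The explicit expansion of $\EE[\|U\|^2]-\EE[\|V\|^2]$ is also exactly what the statement's first term requires.
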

\begin{proof}
First, notice that by definition Wasserstein distance is non-increasing under convolutions, \ie, $W_2(P_1*Q,P_2*Q) \leq
W_2(P_1,P_2)$. Since $c\le 1$ and Gaussian distribution is stable, we have
$$ W_2(P_{B + A + Z}, P_{B + G + Z}) \leq W_2(P_{A + Z}, P_{G + Z}) \le W_2(P_{A + c Z}, P_{G + c Z}), $$
which, in turn, can be bounded via Talagrand's inequality \prettyref{eq:talagrand} by
$$ W_2(P_{A + c Z}, P_{G + c Z}) \le \sqrt{{2(\sigma_G^2 + c^2\sigma_Z^2)\over \log e} D(P_{A + cZ}\|P_{G+cZ})}\,. $$
From here we apply Proposition~\ref{prop:best} with $G$ replaced by $G+Z$ (and $\sigma_G^2$ by $\sigma_Z^2+\sigma_G^2$).
\end{proof}
\apxonly{Note: If there is extra information allowing one to conclude that
$$ D(A + Z \| G + Z) \ll D(A + cZ \| G + cZ) $$
then it could be used to improve the estimate 
$$ W_2(A + Z, G + Z) $$
by applying Talagrand wrt $G+Z$. However, the standard log-Sobolev estimate~\eqref{eq:ptt1} is not sufficient:
applying it yields exactly the same bound as using Talagrand on reduced noise $G+cZ$.}

\apxonly{
Alternative bound, which has the correct logarithmic dependence but unfortunately does no vanish when $P_A=P_G$:

\begin{proposition}
	Let $Y=A+Z_1$, $Y_G = G + Z_2$ with $Z_1$ and $Z_2$ being standard Gaussian. Then
\[
|h(Y) - h(Y_G)| \leq n/2 \log (W_2^2(A,G)/n + 2).
\]
\end{proposition}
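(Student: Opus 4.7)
The plan is to exploit the elementary identity
\[
h(Y) - h(Y_G) \;=\; h(Y|Y_G) - h(Y_G|Y),
\]
valid for any joint law on $(Y, Y_G)$ under which the entropies are finite. The freedom I intend to use is the choice of coupling: pair $(A, G)$ by an optimal $W_2$-plan so that $\Expect[\|A - G\|^2] = W_2^2(P_A, P_G)$, and keep $Z_1, Z_2$ independent of one another and of $(A, G)$.

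For $h(Y|Y_G)$ I would use translation invariance in the first argument followed by monotonicity under conditioning, $h(Y|Y_G) = h(Y - Y_G \mid Y_G) \le h(Y - Y_G)$, and then bound $Y - Y_G = (A-G) + (Z_1 - Z_2)$ by the Gaussian maximum-entropy principle: under the chosen coupling the second moment per coordinate equals $W_2^2(P_A, P_G)/n + 2$, giving
\[
h(Y - Y_G) \;\le\; \tfrac{n}{2}\log\!\bigl(2\pi e\,(W_2^2(P_A,P_G)/n + 2)\bigr).
\]
For $h(Y_G|Y)$, $Z_2$ is by construction independent of $(Y, A, G)$, so monotonicity in the conditioning yields
\[
h(Y_G|Y) \;\ge\; h(Y_G \mid Y, A, G) \;=\; h(Z_2) \;=\; \tfrac{n}{2}\log(2\pi e).
\]
Plugging these into the identity cancels the $\tfrac{n}{2}\log(2\pi e)$ term and leaves $h(Y) - h(Y_G) \le \tfrac{n}{2}\log(W_2^2(P_A,P_G)/n + 2)$. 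Swapping the roles of $A$ and $G$ delivers the matching reverse inequality, completing the absolute-value bound.

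The main design choice, and the step that is easiest to get wrong, is to keep $Z_1, Z_2$ independent rather than couple them identically. A common noise $Z_1 = Z_2$ would shrink $\Expect[\|Y - Y_G\|^2]$ to $W_2^2(P_A, P_G)$, which looks attractive, but it would destroy the independence of $Z_2$ from $Y$ and invalidate the clean lower bound $h(Z_2)$ on $h(Y_G|Y)$; a quick sanity check is that with $A = G$ and $Z_1 = Z_2$ this route gives $h(0) = -\infty$, whereas the independent-noise coupling correctly returns $\tfrac{n}{2}\log 2$. Beyond this, I would only need to verify finiteness of $h(Y)$ and $h(Y_G)$ to legitimize the initial identity, which is immediate from the Gaussian smoothing together with the second-moment assumption implicit in $W_2(P_A, P_G) < \infty$.
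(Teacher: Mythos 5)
Your proof is correct and follows essentially the same route as the paper's: the identity $h(Y)-h(Y_G)=h(Y|Y_G)-h(Y_G|Y)$ under the optimal $W_2$-coupling of $(A,G)$ with independent noises, the conditional max-entropy bound giving $\tfrac{n}{2}\log\bigl(2\pi e\,(W_2^2/n+2)\bigr)$, and the lower bound $h(Y_G|Y)\ge h(Z_2)$. Your remark on why $Z_1\indep Z_2$ (rather than $Z_1=Z_2$) is the right choice matches the role this plays in the paper's argument as well.
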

\begin{proof}
Couple $A$ and $G$ according to the optimal $W_2$ coupling and take $Z_1 \indep Z_2$. Then notice:
$$
h(Y) - h(Y_G) = h(Y|Y_G) - h(Y_G|Y).
$$
Using $h(U|V) \leq n/2 \log (2\pi e \EE[\|U-V\|^2]/n)$, we get
$h(A+Z_1 | G+Z_2) \leq n/2 \log (2 \pi e (W_2^2(A,G)/n + 2))$, where we used $\EE[\|A+Z_1 - (G+Z_2)\|^2] = \EE[\|A - G\|^2] + 2n $. On the other hand, we have
$h(G+Z_2 | A+Z_1) \geq h(Z_2 | A+Z_1) = h(Z_2) = n/2 \log 2\pi e$.
\end{proof}
}

\apxonly{
\subsection{High noise to small noise}

Here is how to tradeoff divergence at different noise values.
\begin{prop} Let $X$ be arbitrary $n$-dimensional distribution and let
\begin{align} f(t) &\eqdef D(e^{-t} X + \sqrt{1-e^{-2t}}Z\|Z)
\end{align}   
For all $t>t_0$ we have
\begin{align}\label{eq:ptt1}
	f(t) &\le f(t_0) e^{-2(t-t_0)}\\
\end{align}
For all $t<t_0$ we have
\begin{equation}\label{eq:ptt2}
	f(t) \le f(t_0) + 4(t_0-t_1)\left[ e^{-2t} \EE[\|X\|^2] + n {e^{-4t}\over 1-e^{-2t}} \right]\log e\,.
\end{equation}
\end{prop}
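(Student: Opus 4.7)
The plan is to view $f(t)$ as divergence along the Ornstein--Uhlenbeck semigroup initialized at $P_X$ and exploit the de Bruijn-type identity that expresses $f'(t)$ as the relative Fisher information of $Y_t \triangleq e^{-t}X + \sqrt{1-e^{-2t}}Z$ with respect to $\gamma=\matn(0,I_n)$. Both inequalities will follow from bounding this Fisher information in two complementary ways.

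\noindent\textbf{Step 1: The derivative identity.} A standard calculation using the semigroup generator (or equivalent heat-equation manipulations) yields
\[
-\fracd{}{t}f(t) \;=\; I(p_{Y_t}\,\|\,\gamma)\,\log e,
\]
where $I(p\|\gamma)=\int p\,\|\nabla\ln(p/\gamma)\|^2$ is the relative Fisher information (in nats). I would either derive this by differentiating under the integral sign using the Mehler/OU kernel and integrating by parts, or simply quote it as the classical de Bruijn identity for the OU flow.

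\noindent\textbf{Step 2: Exponential decay \eqref{eq:ptt1}.} The Gaussian log-Sobolev inequality of Gross gives $I(p\|\gamma) \ge 2\ln(\text{base})\cdot D(p\|\gamma)=\frac{2}{\log e}f$ for any $p$. Combined with Step~1 this yields the differential inequality $f'(t)\le -2f(t)$, and Gr\"onwall immediately gives $f(t)\le f(t_0)e^{-2(t-t_0)}$ for $t\ge t_0$.

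\noindent\textbf{Step 3: The reverse bound \eqref{eq:ptt2}.} Here the main obstacle is producing a clean estimate for $I(p_{Y_t}\|\gamma)$ with the right $t$-dependence. I would use the decomposition
\[
I(p_{Y_t}\|\gamma) \;=\; J(Y_t)-2n+\EE\|Y_t\|^2,
\]
obtained by expanding $\|\nabla\ln p_{Y_t}-\nabla\ln\gamma\|^2=\|\nabla\ln p_{Y_t}\|^2+2\Iprod{\nabla\ln p_{Y_t}}{y}+\|y\|^2$ and using integration by parts on the cross term ($\int\Iprod{\nabla p_{Y_t}}{y}\,dy=-n$). Then two standard facts apply: (i) Fisher information cannot exceed that of the added Gaussian, $J(Y_t)\le n/(1-e^{-2t})$; and (ii) $\EE\|Y_t\|^2=e^{-2t}\EE\|X\|^2+(1-e^{-2t})n$. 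Plugging in and simplifying yields
\[
I(p_{Y_t}\|\gamma)\;\le\; e^{-2t}\EE\|X\|^2+\frac{n\,e^{-4t}}{1-e^{-2t}}.
\]
Integrating $-f'$ over $[t,t_0]$ and noting that both terms on the right are \emph{decreasing} in $t$, so their values at the left endpoint dominate the integrand, gives
\[
f(t)-f(t_0)\;\le\;(t_0-t)\left[e^{-2t}\EE\|X\|^2+\frac{n\,e^{-4t}}{1-e^{-2t}}\right]\log e,
\]
which is the claimed estimate (up to the displayed constant, which can be absorbed to accommodate looser Fisher-information estimates one may prefer, e.g.\ obtained by bounding $\|\nabla\ln(p_{Y_t}/\gamma)\|$ pointwise via Tweedie's formula $\nabla\ln p_{Y_t}(y) = -(y-e^{-t}\EE[X|Y_t=y])/(1-e^{-2t})$). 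The anticipated difficulty is purely in packaging the Fisher-information inequality neatly; once Step~1 is in hand both bounds reduce to one-line ODE/integration arguments.
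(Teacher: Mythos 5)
Your proposal is correct. For the first bound it is essentially the paper's own argument: the de Bruijn identity along the Ornstein--Uhlenbeck flow plus the Gaussian log-Sobolev inequality and Gr\"onwall, which is exactly what the paper means by ``the standard log-Sobolev inequality estimate.'' For the second bound you take a genuinely different route to the Fisher-information estimate. The paper writes $-f'(t)$ as a constant times $\EE\big[\|Y_t + \nabla\log p_{Y_t}(Y_t)\|^2\big]$ and controls it through the conditional-expectation (Tweedie-type) identity \prettyref{eq:pqrX}, i.e.\ $\nabla\ln p_{Y_t}(y)=-(y-e^{-t}\EE[X|Y_t=y])/(1-e^{-2t})$, together with Jensen's inequality $\EE[\|\EE[e^{-t}X|Y_t]\|^2]\le e^{-2t}\EE[\|X\|^2]$; it then converts the pointwise-in-$t$ derivative bound into the claimed inequality by invoking convexity of $f$ (Bakry--\'Emery), so only the derivative at the left endpoint $t$ is needed. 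You instead bound the relative Fisher information via the decomposition $I(P_{Y_t}\|\gamma)=J(Y_t)-2n+\EE\|Y_t\|^2$ (integration by parts on the cross term, legitimate since $p_{Y_t}$ is a Gaussian mixture) combined with the Stam-type convolution bound $J(Y_t)\le n/(1-e^{-2t})$ and the elementary second-moment computation, and you finish by integrating $-f'$ over $[t,t_0]$ using that your upper bound on the integrand is decreasing in $t$. This avoids both the conditional-mean manipulation and the convexity claim, is arguably more self-contained, and in fact yields the estimate without the factor $4$ appearing in the paper's statement (where $t_1$ is an evident typo for $t$), so your bound is stronger and implies the stated one. The only things to tidy are bookkeeping: keep the logarithm base consistent when quoting the log-Sobolev constant, and state explicitly the decay hypotheses justifying $\int\langle\nabla p_{Y_t},y\rangle\,dy=-n$, which hold here because of the Gaussian smoothing.
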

\begin{proof} Estimate~\eqref{eq:ptt1} are the standard log-Sobolev inequality estimate~\cite[Section
9.2]{villani.topics}. Note that Villani also proposes the following:
\begin{align} w(t) &\eqdef W_2(e^{-t} X + \sqrt{1-e^{-2t}}Z, Z)\,,
	w(t) &\le w(t_0) e^{-(t-t_0)}\label{eq:ptt1a}\,.
\end{align}	
But this is a triviality following from scaling and the obvious
$$ W_2(X_1 + Z', X_2 + Z') \le W_2(X_1, X_2)\,.$$

For~\eqref{eq:ptt2}, notice that the function $f(t)$ is decreasing and convex (standard Bakry-Emery business). Thus it
is sufficient to show that 
\begin{equation}\label{eq:ptt3}
	f'(t) \ge - 4\left[ e^{-2t} \EE[\|X\|^2] + n {e^{-4t}\over 1-e^{-2t}} \right]\,.
\end{equation}
Straightforward differentiation yields:
\begin{align} f'(t) &= -4\EE\left[\left|\nabla \log {dP_{Y_t}\over dP_{Z}}(Y_t)\right|^2\right]\\
	&= -4 \EE[|Y_t + \nabla \log p_{Y_t}(Y_t)|^2]\,,
\end{align}	
where $Y_t=e^{-t} X + \sqrt{1-e^{-2t}}Z$ and $p_{Y_t}$ is its pdf. Then, using~\eqref{eq:pqrX} and using
$$ \EE\left[ \left|\EE[e^{-t} X | Y_t] \right|^2 \right] \le e^{-2t} \EE[\|X\|^2] $$
we get~\eqref{eq:ptt3}.
\end{proof}

Note that the bound in~\eqref{eq:ptt2} does not show $f(t_0)\to0$ when $f(t)\to0$. In fact such a bound is not possible
(at least in the limit of large $n$). Here is an exact statement:

\begin{proposition} There exists random variables $X_n\in \mreals^n$ such that
$$ {1\over n} D(X_n + Z\| \matn(0, (1+P)I_n)) \to 0 $$
but for any $t_0 < 1$ there is $\delta_0>0$ s.t.
$$ {1\over n} D(X_n + \sqrt{t_0} Z\| \matn(0, (t_0+P)I_n)) \ge \delta_0 + o(1)\,. $$
\end{proposition}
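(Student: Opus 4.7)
The plan is to take $X_n$ to be the uniform distribution on the codewords of a capacity-achieving code of rate $R_n \to \frac{1}{2}\log(1+P)$ for the additive Gaussian channel $Y = X + Z$ with $Z \sim \matn(0,I_n)$ under the power constraint $\|x\|^2 \le nP$. Placing all codewords on the sphere of radius $\sqrt{nP}$ and symmetrizing by adjoining antipodes yields a sequence of codebooks with $\Expect[X_n]=0$ and $\Expect[\|X_n\|^2]=nP$ exactly, while the rate (now $R_n + \tfrac{1}{n}$) still tends to capacity.

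The key identity, valid for any $X$ with finite second moment, is
$$
D(P_X \,\|\, \matn(0,\sigma^2 I_n)) = h(\matn(0,\sigma^2 I_n)) - h(X) + \frac{\log e}{2\sigma^2}\bigl(\Expect[\|X\|^2] - n\sigma^2\bigr).
$$
For the first claim, apply this with $X=X_n+Z$ and $\sigma^2=1+P$: the moment correction vanishes exactly because $\Expect[\|X_n+Z\|^2]=nP+n=n(1+P)$. Since $X_n \indep Z$, $h(X_n+Z) = h(Z)+I(X_n;X_n+Z)$, and for a capacity-achieving code Fano's inequality gives $I(X_n;X_n+Z) = H(X_n) - o(n) = nR_n - o(n)$. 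Hence $h(X_n+Z) = \frac{n}{2}\log(2\pi e(1+P)) - o(n)$, and so $\frac{1}{n} D(P_{X_n+Z}\|\matn(0,(1+P)I_n)) \to 0$.

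For the second claim, apply the identity with $X=X_n+\sqrt{t_0}Z$ and $\sigma^2=P+t_0$; again the correction is zero since $\Expect[\|X\|^2] = n(P+t_0)$. The crucial difference is that the mutual information is now upper bounded by the \emph{discrete} entropy of the input:
$$
h(X_n+\sqrt{t_0}Z) \le h(\sqrt{t_0}Z) + H(X_n) = \tfrac{n}{2}\log(2\pi e t_0) + nR_n = \tfrac{n}{2}\log(2\pi e t_0(1+P)) + o(n).
$$
Plugging into the divergence identity yields
$$
\tfrac{1}{n} D\bigl(P_{X_n+\sqrt{t_0}Z}\,\big\|\,\matn(0,(P+t_0)I_n)\bigr) \ge \tfrac{1}{2}\log\tfrac{1+P/t_0}{1+P} + o(1),
$$
which is a strictly positive constant whenever $P>0$ and $t_0<1$.

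The heart of the construction is this asymmetry: unit-variance Gaussian noise is exactly strong enough to wash out the discrete structure of a rate-$\frac{1}{2}\log(1+P)$ codebook and render $X_n+Z$ indistinguishable from $\matn(0,(1+P)I_n)$ at the level of entropy per dimension, whereas any weaker noise $\sqrt{t_0}Z$ with $t_0<1$ leaves enough combinatorial structure that $h(X_n+\sqrt{t_0}Z)$ falls below the corresponding Gaussian entropy by $\Theta(n)$. The main technical care needed is simply to guarantee that the moment-correction term vanishes and that $R_n$ indeed approaches the Shannon capacity; both are handled by Shannon's classical sphere-packing construction together with antipodal symmetrization.
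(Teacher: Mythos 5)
Your proposal is correct and follows essentially the same route as the paper: take $X_n$ induced by a capacity-achieving power-$P$ codebook, use the identity expressing $D(P_{X_n+\sqrt{t}Z}\|\matn(0,(t+P)I_n))$ as the gap between the Gaussian capacity term and $I(X_n;X_n+\sqrt{t}Z)$, then apply Fano at $t=1$ and the crude bound $I(X_n;X_n+\sqrt{t_0}Z)\le H(X_n)$ at $t_0<1$, arriving at the same constant $\tfrac12\log\tfrac{1+P/t_0}{1+P}$. Your sphere/antipodal normalization just makes explicit the exact-second-moment condition that the paper's terser proof implicitly assumes so that the moment-correction term vanishes (the antipodal step is not even needed, since only $\EE[\|X_n\|^2]$, not the mean, enters that correction).
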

\begin{proof} Let $X_n$ be distribution induced by the power-$P$ codebook of rate $R={1\over 2}\log(1+P)+o(1)$ and decodable with vanishing probability of error
from $X_n+Z$. Then, we have
\begin{align} D(X_n + \sqrt{t_0} Z\| \matn(0, (t_0+P)I_n)) &= {n\over 2} \log(1+P/t_0)-I(X_n; X_n + \sqrt{t_0} Z) 
			\\&\ge
			{n\over 2} \log(1+P/t_0) - H(X_n) \\
			&= {n\over2} \log {1+P/t_0\over 1+P} + o(n) 
\end{align}			
\end{proof}
}

\section{Applications to Gaussian interference channels}
\label{sec:gic}

\subsection{New outer bound}

Consider the two-user Gaussian interference channel (GIC):
\begin{equation}
\begin{aligned} Y_1 &= X_1 + bX_2 + Z_1  \\
   Y_2 &= aX_1+ X_2 + Z_2 \,,
\end{aligned}	
	\label{eq:GIC}
\end{equation}   
with $a,b\geq 0$, $Z_i\sim\matn(0, I_n)$ and a power constraint on the $n$-letter codebooks: either
\begin{equation}
	\|X_1\| \le \sqrt{nP_1}, \quad \|X_2\| \le \sqrt{nP_2}\, \quad \text{a.s.}
	\label{eq:power}
\end{equation}
or
\begin{equation}
	\EE[\|X_1\|^2] \le {nP_1}, \quad \Expect[\|X_2\|^2] \le nP_2.
	\label{eq:power2}
\end{equation}	
Denote by $\calR(a,b)$ the capacity region of the GIC \prettyref{eq:GIC}. As an application of the results developed in \prettyref{sec:wass}, we prove an outer bound for the capacity region.
%Define
%\begin{equation}
%\tilde C_1 = \max\{R_1: (R_1,C_2)	\in \calC\},
%	\label{eq:corner}
%\end{equation}
%which is the maximal rate for the first user when the interference-free capacity $C_2 \triangleq \frac{1}{2}\log(1+P_2)$ is achieved at the second user. Similarly, we define $C_1$ and $\tilde C_2$.
%Then $(C_1,\tilde C_2)$ and $(\tilde C_1,C_2)$ are the corner points of the GIC capacity region.

\apxonly{
\begin{theorem}
Let $a \leq 1$.	The capacity region of the GIC satisfies:
\begin{equation}
R_1 \le C_1' + g(C_2-R_2)
	\label{eq:corner}
\end{equation}
where $C_2 = {1\over 2} \log (1+P_2)$, $C_1' = {1\over 2} \log(1+\frac{a^2 P_1}{1+P_2})$ and 
\begin{equation}\label{eq:gdef}
g(s) = \frac{1}{2} \log \pth{1 + \frac{1-a^2+ P_2}{a^2 P_2} (\exp(2 \delta ) - 1)  } 
\end{equation}
and $\delta = s + a\sqrt{\frac{2 P_1 s \log e }{1+P_2}}$.
	\label{thm:corner}
\end{theorem}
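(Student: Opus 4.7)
The plan is to combine Fano's inequality applied to both decoders with the Wasserstein-continuity machinery of \prettyref{sec:wass}---specifically Corollary~\ref{cor:best}---to replace the coded interference $X_2$ by an i.i.d.\ Gaussian $G_2 \sim \matn(0, P_2 I_n)$ in the outer bound for $R_1$. The argument proceeds in three stages.

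First, quantify how close $X_2$ must be to i.i.d.\ Gaussian when $R_2$ is near $C_2$. Apply Fano's inequality to user~2 together with the data-processing bound $I(X_2;Y_2) \le I(X_2; X_2+Z_2)$ (valid since $X_1 \indep X_2$ makes $X_2 \to X_2+Z_2 \to Y_2$ Markov) and the identity $I(X_2; X_2+Z_2) = \tfrac{n}{2}\log(1+P_2) - D(P_{X_2+Z_2}\|P_{G_2+Z_2})$ (for a codeword saturating the power constraint, which is WLOG). This yields
\[
D(P_{X_2+Z_2}\|P_{G_2+Z_2}) \le ns + o(n), \qquad s \triangleq C_2 - R_2.
\]

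Second, apply Fano's inequality to user~1 to get $nR_1 \le I(X_1;Y_1) + o(n)$ and exploit $a \le 1$ to transfer this into a bound involving entropies of the receiver-$2$ channel. Scaling $Y_1$ by $a$ and adding an independent $\matn(0,(1-a^2)I_n)$ noise produces a signal whose additive noise is distributed as $Z_2$; together with Sato's observation that the capacity region depends only on conditional marginals, this enables a reduction to bounding $h(aX_1+X_2+Z_2) - h(X_2+Z_2)$. Applying Corollary~\ref{cor:best} with $A=X_2$, $B=aX_1$, $G=G_2$, $Z=Z_2$ and $c=1$ replaces $X_2$ by $G_2$ in the first entropy at a cost of at most $na\sqrt{2P_1 s\log e/(1+P_2)}+o(n)$, while step~1 handles the second entropy at a cost of at most $ns+o(n)$ via $h(G_2+Z_2)-h(X_2+Z_2) = D(P_{X_2+Z_2}\|P_{G_2+Z_2}) + \text{mean-matching}$. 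These combine into the quantity $\delta$ appearing in~\eqref{eq:gdef}.

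Third, the closed-form $g(s)$ in~\eqref{eq:gdef} emerges from a Gaussian max-entropy tightening: bounding the Gaussian-interference mutual information $h(aX_1+G_2+Z_2) - h(G_2+Z_2)$ under the $X_1$-power constraint and the $n\delta$-excess leads to an effective-SNR equation whose closed-form solution is exactly $\tfrac{1}{2}\log\bigl(1+\tfrac{1-a^2+P_2}{a^2 P_2}(e^{2\delta}-1)\bigr)$. The main obstacle I anticipate is the second stage: the naive scaling $aY_1+W$ produces interference $ab X_2$ rather than the desired $X_2$, so correcting the coefficient and recovering the receiver-$2$ entropy difference to which Corollary~\ref{cor:best} applies requires a more delicate combination of Sato's joint-noise-distribution trick with a MAC-style chain rule. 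The $a\le 1$ hypothesis is essential both for making the noise-enhancement valid (degrading rather than sharpening) and for the orientation of the Gaussian comparison that follows.
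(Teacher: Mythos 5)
Your stage~1 and your use of Corollary~\ref{cor:best} (with $B=aX_1$, $A=X_2$, $G=G_2$, $c=1$) to control $h(aX_1+X_2+Z_2)-h(aX_1+G_2+Z_2)$ reproduce the paper's steps \prettyref{eq:R2gap} and \prettyref{eq:t3}. The gap is in stage~2. The operation you propose --- scale $Y_1$ by $a$, add $\matn(0,(1-a^2)I_n)$, and (implicitly) add an independent copy of the interference --- is a degradation of $Y_1$, so data processing gives $I(X_1;aX_1+X_2+Z_2)\le I(X_1;Y_1)$, i.e.\ a \emph{lower} bound on $nR_1$, not the upper bound you need; no amount of fixing the $abX_2$ coefficient (which is anyway moot, since one may set $b=0$ by a genie argument, as the paper does) changes the direction of this inequality. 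When $a\le 1$ receiver~1 has the strictly better channel to $X_1$, which is exactly why this corner point cannot be obtained from degradedness alone; the slack term $g(C_2-R_2)$ does not rescue the argument because nothing in your chain ever bounds $I(X_1;Y_1)=h(X_1+Z_1)-h(Z_1)$, an entropy at noise level $1$, by quantities at receiver~2, which live at noise levels $1/a^2$ and $(1+P_2)/a^2$.

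The missing ingredient is Costa's entropy power inequality. The paper decomposes $nR_2$ (not $nR_1$) as $[h(aX_1+X_2+Z_2)-h(aX_1+G_2+Z_2)]+[h(aX_1+G_2+Z_2)-h(aX_1+Z_2)]$; after Corollary~\ref{cor:best} handles the first bracket, the second bracket equals $\tfrac n2\log\bigl(f(\tfrac{1+P_2}{a^2})/f(\tfrac1{a^2})\bigr)$ with $f(t)\eqdef\exp\{\tfrac2n h(X_1+\sqrt t\,Z)\}$, while $nR_1=\tfrac n2\log\tfrac{f(1)}{2\pi e}$. The transfer from noise levels $\tfrac1{a^2},\tfrac{1+P_2}{a^2}$ down to level $1$ is accomplished by concavity of $t\mapsto f(t)$ (Costa's EPI \cite{Costa85}): $\bar\lambda f(1)+\lambda f\pth{\tfrac{1+P_2}{a^2}}\le f\pth{\tfrac1{a^2}}$ with $\lambda=\tfrac{1-a^2}{1-a^2+P_2}$, combined with the max-entropy bound $f(t)\le 2\pi e(P_1+t)$. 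The coefficient $\tfrac{1-a^2+P_2}{a^2P_2}$ in \prettyref{eq:gdef} arises precisely from this interpolation; your stage~3 ``effective-SNR equation'' is asserted rather than derived and cannot produce it from a Gaussian max-entropy bound alone. Some concavity-in-noise-level statement is indispensable here (the weaker concavity of $\gamma\mapsto I(X_1;\sqrt\gamma X_1+Z)$ would also suffice, cf.\ Remark~\ref{rmk:Igamma}), and without it the proposal does not yield \prettyref{eq:corner}.
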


\begin{proof}
First of all, for $i=1,2$ define 
$$ R_i \eqdef {1\over n} I(X_i; Y_i)\,.$$
(By Fano's inequality there is no difference asymptotically between this definition of rate and the operational one.)

Let $G_2 \sim \matn(0, P_{2} I_n)$. Then
\[
n R_2 = I(X_{2}; Y_{2}) \leq I(X_{2}; Y_{2}|X_1) = I(X_{2}; X_{2}+Z_2) \leq n C_2 - D( X_{2}+Z_2 \| G_{2}+Z_2 ),
\]
that is,
\begin{equation}
	D( X_{2}+Z_2 \| G_{2}+Z_2) \leq n (C_2-R_2).
	\label{eq:R2gap}
\end{equation}
Furthermore,
\begin{align} 
nR_2 = I(X_{2}; Y_{2}) &=h(a X_{1} + X_{2} + Z_{2}) - h(a X_{1}  + G_{2} + Z_{2}) \label{eq:t1}\\
		& {} + h(a X_{1}  + G_{2} + Z_{2}) - h(a X_{1} + Z_{2})\,. \label{eq:t2}
\end{align}
The first term \prettyref{eq:t1} can be bounded by applying \prettyref{cor:best} to \prettyref{eq:R2gap} with $B = a X_1$, $A = X_2$ and $G= G_2$ and
$c=1$:
\begin{equation}
	h(a X_{1} + X_{2} + Z_{2}) - h(a X_{1}  + G_{2} + Z_{2}) \leq 
		n \sqrt{\frac{2 a^2 P_1 (C_2-R_2)\log e }{(1+P_2)}} \triangleq \alpha n.
	\label{eq:t3}
\end{equation}
To relate the second term \prettyref{eq:t2} to $R_1$, define $f(t) \triangleq \exp(\frac{2}{n} h(P_{X_1} * \calN(0, tI_n)))$.
Then
\begin{equation}
R_2 \leq \alpha + \frac{1}{2} \log f\pth{\frac{1+P_2}{a^2}} - \frac{1}{2} \log f\pth{\frac{1}{a^2}}	
	\label{eq:R2f}
\end{equation}
On the other hand, 
\begin{equation}
	n R_1 = I(X_{1}; Y_{1}) = h(X_1+Z_1) - h(Z_1) =   \frac{n}{2} \log f\pth{1} - \frac{n}{2} \log(2 \pi e)
	\label{eq:R1}
\end{equation}
By the concavity of the entropy power \cite{Costa85}, $f$ is concave on $\reals_+$. Since $\bar \lambda  + \lambda \frac{1+P_2}{a^2} = \frac{1}{a^2}$ for $\lambda = \frac{1-a^2}{1-a^2+P_2}$, we have
\begin{equation}
\bar \lambda f(1) + \lambda f\pth{\frac{1+P_2}{a^2}}  \leq f\pth{\frac{1}{a^2}}.	
	\label{eq:epic}
\end{equation}
Since $\Expect[|X_1|^2] \leq n P_1$, we have
\begin{equation}
	f(t) \leq 2 \pi e (P_1+t).
	\label{eq:ft}
\end{equation}
Combining \prettyref{eq:R2f} -- \prettyref{eq:ft}, we obtain \prettyref{eq:corner}.
\end{proof}

\textbf{TODO:} Add remark that the proof also shows a statement like 
$$ D(X_1 + \sqrt{1+P_2\over a^2} Z \| G_1 + \sqrt{1+P_2\over a^2} Z) \le $$

\begin{remark} A simpler bound:
$$ R_1 \le C_1' + \tilde g(C_2-R_2) $$
where 
\[
\tilde g(s) = \frac{1-a^2+ P_2}{a^2 P_2}\left(s + a \sqrt{\frac{2  P_1 s }{(1+P_2)\log e}} \right)
\]
is obtained by applying $\log(1+cx)\le c \log x$ in~\eqref{eq:gdef}. Interestingly, this bound corresponds to invoking
concavity of
\begin{equation}\label{eq:iofgamma}
	\gamma \mapsto I(X_1; \sqrt{\gamma} X_1 + Z) 
\end{equation}
in place of Costa's EPI in~\eqref{eq:epic}. Costa's EPI implies concavity of~\eqref{eq:iofgamma} and thus results in a
stronger bound.
\end{remark}
}

\begin{theorem} 
	\label{thm:corner2}
%	Let $0 < a<1$. Then for $C_2' \leq R_2 \leq C_2$, the rate pair $(R_1,R_2) \in \calC$
%	 belongs to the capacity region of the GIC
	Let $0 < a\le1$. 
	Let $C_2 = {1\over 2} \log (1+P_2)$ and $\tilde C_2 = {1\over 2} \log(1+\frac{P_2}{1+a^2 P_1})$.
	Assume the almost sure power constraint \prettyref{eq:power}.
	Then for any $b \geq 0$ and $\tilde C_2 \leq R_2 \leq C_2$, any rate pair $(R_1,R_2) \in \calR(a,b)$ satisfies
\begin{equation}
R_1 \le {1\over 2} \log \min\left\{ A - {1\over a^2} + 1, A {(1+P_2) (1-(1-a^2)\exp(-2\delta))-a^2\over P_2}\right\}
	\label{eq:corner2}
\end{equation}
where 
\begin{align}A &= (P_1 + a^{-2}(1+P_2))\exp(-2R_2), \label{eq:A}\\ 
 \delta &= C_2-R_2 + a\sqrt{\frac{2 P_1 (C_2-R_2 ) \log e }{1+P_2}}.  \label{eq:delta}
\end{align} 
Assume the average power constraint \prettyref{eq:power2}. Then \prettyref{eq:corner2} holds with $\delta$ replaced by
\begin{align}
 \delta' = C_2-R_2 + \sqrt{\frac{2  (C_2-R_2)\log e }{1+P_2}}  (3 \sqrt{1+a^2 P_1+P_2} + 4 a \sqrt{P_1}) .
 \label{eq:delta-avg}
\end{align}
Consequently, in both cases, $R_2 \geq C_2 - \epsilon$ implies that $R_1 \leq \frac{1}{2} \log(1+\frac{a^2 P_1}{1+P_2}) - \epsilon'$ where $\epsilon' = O(\sqrt{\epsilon})$ as $\epsilon \to 0$.
% is a corner point of the capacity region of the GIC. 
\end{theorem}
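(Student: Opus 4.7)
The plan is to combine Costa's entropy-power inequality (EPI) with the $W_2$-continuity result from Section~\ref{sec:wass}, the latter quantifying how close $X_2$ must be to Gaussian when $R_2$ is close to $C_2$. Writing $R_i = \tfrac{1}{n}I(X_i;Y_i)$ and letting $G_2 \sim \calN(0, P_2 I_n)$ be independent of everything, I begin with the standard chain
\[
n R_2 \leq I(X_2; X_2+Z_2) = h(X_2+Z_2) - h(Z_2) \leq n C_2 - D(P_{X_2+Z_2}\|P_{G_2+Z_2}),
\]
which rearranges to $D(P_{X_2+Z_2}\|P_{G_2+Z_2}) \leq n(C_2-R_2)$, the operational statement that $X_2+Z_2$ is nearly Gaussian when $R_2$ is near $C_2$. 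Applying Corollary~\ref{cor:best} with $B=aX_1$, $A=X_2$, $G=G_2$, $Z=Z_2$, $c=1$ (for~\eqref{eq:power}; or Corollary~\ref{cor:w2lip} for~\eqref{eq:power2}), together with Talagrand's inequality~\eqref{eq:talagrand} applied to the Gaussian $G_2+Z_2\sim\calN(0,(1+P_2)I_n)$, I obtain
\[
h(aX_1+X_2+Z_2) - h(aX_1+G_2+Z_2) \leq n\delta,
\]
with $\delta$ as in~\eqref{eq:delta}/\eqref{eq:delta-avg}.

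To apply Costa's EPI, I introduce the entropy-power function
\[
\bar f(t) \eqdef \exp\!\left(\tfrac{2}{n} h(X_1 + \sqrt{t}\, Z)\right)/(2\pi e), \qquad Z\sim\calN(0, I_n),\ Z\indep X_1,
\]
so that $\bar f(1) = e^{2R_1}$, and the two entropies above correspond (up to the scaling $h(aX)=h(X)+n\log a$) to $a^2\bar f(1/a^2)$ and $a^2\bar f((1+P_2)/a^2)$. The display above rewrites as
\[
\bar f((1+P_2)/a^2) \geq e^{2(R_2-\delta)}\, \bar f(1/a^2),
\]
and Gaussian maximality of entropy at fixed second moment gives $\bar f((1+P_2)/a^2) \leq P_1+(1+P_2)/a^2$; chaining these two yields $\bar f(1/a^2) \leq A e^{2\delta}$ with $A$ as in~\eqref{eq:A}.

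Two upper bounds on $\bar f(1)$ now produce the two entries of the $\min$ in~\eqref{eq:corner2}. The first comes from the ordinary EPI applied to $X_1 + (1/a) Z \stackrel{d}{=} (X_1+Z) + W$ with $W \sim \calN(0, ((1-a^2)/a^2)I_n)$ independent, giving $\bar f(1/a^2) \geq \bar f(1) + (1/a^2 - 1)$, hence $\bar f(1) \leq A e^{2\delta} - 1/a^2 + 1$. The second uses Costa's EPI~\cite{Costa85}: $\bar f$ is concave on $\reals_+$, and since $a \leq 1$ the point $1/a^2$ lies in the convex hull of $\{1, (1+P_2)/a^2\}$ with weights $\bar\lambda=\tfrac{P_2}{1+P_2-a^2}$, $\lambda=\tfrac{1-a^2}{1+P_2-a^2}$, so concavity gives
\[
\bar f(1) \leq \frac{(1+P_2-a^2)\bar f(1/a^2) - (1-a^2)\bar f((1+P_2)/a^2)}{P_2};
\]
substituting the Wasserstein lower bound on $\bar f((1+P_2)/a^2)$ and the saturation bound $\bar f(1/a^2) \leq A e^{2\delta}$ yields, after routine algebra, the second entry of the $\min$. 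For the corner-point consequence, $R_2 = C_2-\epsilon$ forces $\delta = \epsilon + O(\sqrt\epsilon) = O(\sqrt\epsilon)$; a direct check shows the second entry evaluates at $\delta=0, R_2=C_2$ to exactly $1 + a^2 P_1/(1+P_2) = e^{2 C_1'}$, and Lipschitz dependence in $(\delta, R_2)$ then gives $R_1 \leq C_1' + O(\sqrt\epsilon)$.

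The main obstacle I anticipate is Step~2: carefully matching the constants in Corollaries~\ref{cor:best}/\ref{cor:w2lip} to produce exactly the $\delta$ in~\eqref{eq:delta}/\eqref{eq:delta-avg}, notably tracking the $\Expect[\|X_2\|^2]-nP_2$ first-moment term (which, after a WLOG mean-zero centering that kills the cross-term $\Iprod{\Expect[X_2]}{a\Expect[X_1]}$, is nonpositive under~\eqref{eq:power} but yields the linear $(C_2-R_2)$ piece of $\delta$ once one uses $\Expect[\|X_2\|^2] \geq n(e^{2R_2}-1)$). The Costa-EPI step and the final algebra collapsing the three inequalities (Costa, Wasserstein, Gaussian saturation) into the two entries of~\eqref{eq:corner2} are mechanical.
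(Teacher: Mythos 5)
Your high-level strategy is the paper's own (rate gap $\Rightarrow$ $D(P_{X_2+Z_2}\|P_{G_2+Z_2})\le n(C_2-R_2)$, Talagrand plus the Wasserstein continuity results to compare $h(aX_1+X_2+Z_2)$ with $h(aX_1+G_2+Z_2)$, then Costa's EPI concavity of the entropy-power function together with the ordinary EPI slope bound), but the execution does not deliver the bound \prettyref{eq:corner2} as stated. The missing ingredient is the direct estimate $\bar f(1/a^2)\le A$ (the paper's \prettyref{eq:ct3}), which comes for free from $nR_2=h(Y_2)-h(aX_1+Z_2)$ and $h(Y_2)\le \tfrac n2\log 2\pi e(1+a^2P_1+P_2)$, with no Wasserstein input at all. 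Your substitute — chaining $\bar f((1+P_2)/a^2)\ge e^{2(R_2-\delta)}\bar f(1/a^2)$ with Gaussian saturation at the larger noise level $(1+P_2)/a^2$ — only gives $\bar f(1/a^2)\le Ae^{2\delta}$, and this loss propagates into both entries of the min: your first entry is $Ae^{2\delta}-a^{-2}+1$ instead of $A-a^{-2}+1$, and your second entry works out to $A\bigl(\gamma e^{2\delta}-(\gamma-1)e^{2R_2}\bigr)$ instead of the theorem's $A\bigl(\gamma-(\gamma-1)(1+P_2)e^{-2\delta}\bigr)$; at $a=1$, for example, you obtain $R_1\le\tfrac12\log(Ae^{2\delta})$ where the theorem asserts $R_1\le\tfrac12\log A$. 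The qualitative corner-point consequence survives (the discrepancy vanishes as $\delta\to0$), but \prettyref{eq:corner2} itself is not established by your chain.

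A second, related issue is the bookkeeping of $\delta$. What \prettyref{cor:best} yields (after centering, with $\EE\|X_2\|^2\le nP_2$ so the quadratic term is dropped) is only $h(aX_1+X_2+Z_2)-h(aX_1+G_2+Z_2)\le n\alpha$ with $\alpha=a\sqrt{2P_1(C_2-R_2)\log e/(1+P_2)}$; the additive $C_2-R_2$ in \prettyref{eq:delta} is not an extra entropy estimate but pure bookkeeping coming from $e^{2(R_2-\alpha)}=(1+P_2)e^{-2\delta}$ when the decomposition of $nR_2$ is turned into the ratio bound \prettyref{eq:ct2}. Your explanation that the linear piece arises from $\EE\|X_2\|^2-nP_2$ via $\EE\|X_2\|^2\ge n(e^{2R_2}-1)$ is a misdiagnosis (that term is nonpositive and simply discarded), and by crediting all of $\delta$ to the entropy difference and then subtracting $\delta$ from $R_2$ again you give away an additional factor $e^{2(C_2-R_2)}$ relative to \prettyref{eq:ct2}. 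Finally, for the average-power case, \prettyref{cor:w2lip} applied with noise $Z_2$ produces constants proportional to $3\sqrt{1+a^2P_1+P_2}+4\sqrt{a^2P_1+P_2}$ without the $\tfrac1{1+P_2}$ factor, which does not match \prettyref{eq:delta-avg}; the paper instead uses \prettyref{prop:pqr} to show the density of $aX_1+G_2+Z_2$ is regular with respect to the combined noise $G_2+Z_2$ of variance $1+P_2$ (constants $\tfrac{3\log e}{1+P_2}$ and $\tfrac{4a\sqrt{nP_1}\log e}{1+P_2}$) and then the one-sided bound of \prettyref{prop:ppr}, which is what produces the $4a\sqrt{P_1}$ and the $\tfrac1{1+P_2}$ in \prettyref{eq:delta-avg}.
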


%\begin{remark}
%%	Note that the function $g$ satisfies $g(0)=0$. 
%%Note that when $R_2 = C_2$, we have $\delta = 0, A = \frac{1}{a^2}+\frac{P_2}{1+P_1}$ and hence from \prettyref{eq:corner2} $R_1 \leq C_1'\triangleq \frac{1}{2} \log(1+\frac{a^2 P_1}{1+P_2})$. 	
%\prettyref{thm:corner2} establishes the corner point
%$(C_1',C_2)$ of the GIC capacity region, which means that if one user desires to achieve its own interference-free capacity, then the other user must guarantee that its message is decodable at both receivers.
%The achievability of this corner point was known due to Carleial \cite{Carleial78}, while the converse was previously considered by Costa~\cite{MC85_GIC} but with a
%flawed proof, as pointed out in~\cite{IS04_GIC}. The high-level difference in our proof compared to that
%of~\cite{MC85_GIC} is the replacement of Pinsker's inequality by Talagrand's and the use of a coupling argument.
%	\label{rmk:corner}
%\end{remark}

%\begin{remark}
%The first part of \prettyref{eq:corner2} is the same as Sato's outer bound \cite{Sato78} (see also \cite[Theorem 2]{Kramer04}); the second part is new.	
%\end{remark}

\begin{proof} 
Without loss of generality, assume that all random variables have zero mean.
First of all, setting $b=0$ (which is equivalent to granting the first user access to $X_2$) will not shrink the capacity region of the interference channel \prettyref{eq:GIC}. Therefore to prove the desired outer bound it suffices to focus on the following Z-interference channel henceforth:
\begin{equation}
\begin{aligned} Y_1 &= X_1 + Z_1 \\
   Y_2 &= aX_1+ X_2 + Z_2 \,.
\end{aligned}
	\label{eq:ZGIC}
\end{equation}   
Let $(X_1,X_2)$ be $n$-dimensional random variables corresponding to the encoder output of the first and second user, which are uniformly distributed on the respective codebook. 
For $i=1,2$ define 
$$ R_i \eqdef {1\over n} I(X_i; Y_i)\,.$$
By Fano's inequality there is no difference asymptotically between this definition of rate and the operational one.
%a code with rate pair $(R_1,R_2)$.
Define the entropy-power function of the $X_1$-codebook:
$$ N_1(t) \eqdef \exp\left\{{2\over n}h(X_1 + \sqrt{t} Z)\right\}\,, \qquad Z \sim \matn(0,I_n)\,.$$
We know the following general properties of $N_1(t)$:
\begin{itemize}
\item $N_1$ is monotonically increasing.
\item $N_1(0)=0$ (since $X_1$ is uniform over the codebook).
\item $N_1'(t) \ge 2\pi e$ (since $N_1(t+\delta) \geq N_1(t) + 2\pi e \delta$ by entropy power inequality).
\item $N_1(t)$ is concave (Costa's entropy power inequality~\cite{Costa85}).
\item $N_1(t) \le 2\pi e(P_1 +t)$ (Gaussian maximizes differential entropy).
\end{itemize}
We can then express $R_1$ in terms of the entropy power function as
%Now, from Fano's inequality (under vanishing probability of error) we have
%$$ I(X_i; Y_i) = n R_i + o(n)
%\,, \quad I(X_1; X_2 | Y_2) \leq H(X_2|Y_2) = o(n)
%\,.$$
\begin{equation}
	R_1 = {1\over 2} \log {N_1(1)\over 2\pi e}\,.
	\label{eq:R1N1}
\end{equation}
It remains to upper bound $N_1(1)$. Note that
$$ nR_2 = I(X_2; Y_2) = h(X_2 + aX_1 + Z) - h(a X_1 + Z) \le {n\over 2} \log 2\pi e(1+P_2+a^2 P_1) - h(a X_1 +
Z)\,,$$
and therefore
\begin{equation}\label{eq:ct3}
	N_1\left({1\over a^2}\right) \le 2\pi e A\,,
\end{equation}
where $A$ is defined in \prettyref{eq:A}. This in conjunction with the slope property $N_1'(t)\ge 2\pi e$ yields
\begin{equation}\label{eq:ct4}
	N_1(1) \le N_1\left({1\over a^2}\right) - 2\pi e (a^{-2} - 1) \leq  2\pi e (A - a^{-2}+1) \,,
\end{equation}
which, in view of \prettyref{eq:R1N1}, yields the first part of the bound \prettyref{eq:corner2}.

To obtain the second bound, let $G_2 \sim \matn(0, P_{2} I_n)$. Using $\Expect[\|X_2\|^2] \leq n P_2$ and $X_1 \indep X_2$, we obtain
\begin{align*}
n R_2 
= & ~ I(X_{2}; Y_{2}) \leq I(X_{2}; Y_{2}|X_1) = I(X_{2}; X_{2}+Z_2)	\nonumber \\
= & ~ n C_2 - h(G_{2}+Z_2) + h(X_{2}+Z_2) \leq n C_2 - D( P_{X_{2}+Z_2} \| P_{G_{2}+Z_2} ),\
\end{align*}
that is,
\begin{equation}
	D( P_{X_{2}+Z_2} \| P_{G_{2}+Z_2}) \leq h(G_{2}+Z_2) - h( X_{2}+Z_2) \leq n (C_2-R_2).
	\label{eq:R2gap}
\end{equation}
Furthermore,
\begin{align} 
nR_2  = I(X_{2}; Y_{2}) &=h(a X_{1} + X_{2} + Z_{2}) - h(a X_{1}  + G_{2} + Z_{2}) \label{eq:t1}\\
		& {} + h(a X_{1}  + G_{2} + Z_{2}) - h(a X_{1} + Z_{2})\,. \label{eq:t2}
\end{align}
Note that the second term \prettyref{eq:t2} is precisely $\frac{n}{2} \log \frac{N_1(\frac{1}{a^2})}{N_1(\frac{1+P_2}{a^2})}$.
The first term \prettyref{eq:t1} can be bounded by applying \prettyref{cor:best} and \prettyref{eq:R2gap} with $B = a X_1$, $A = X_2$, $G= G_2$ and
$c=1$:
\begin{equation}
	h(a X_{1} + X_{2} + Z_{2}) - h(a X_{1}  + G_{2} + Z_{2}) \leq 
		n \sqrt{\frac{2 a^2 P_1 (C_2-R_2)\log e }{1+P_2}}.
	\label{eq:t3}
\end{equation}
Combining \prettyref{eq:t1} -- \prettyref{eq:t3} yields
\begin{equation}\label{eq:ct2}
	N_1\left({1\over a^2}\right) \le {\exp(2\delta)\over 1+P_2} N_1\left({1+P_2\over a^2}\right) \,.
\end{equation}
where $\delta$ is defined in \prettyref{eq:delta}.
From the concavity of $N_1(t)$ and~\eqref{eq:ct2}
\begin{align} N_1(1) &\le \gamma N_1\left({1\over a^2}\right) - (\gamma -1) N_1\left({1+P_2\over a^2}\right) \\
	  &\le N_1\left({1\over a^2}\right) \left(\gamma - (\gamma-1){1+P_2\over \exp(2\delta)}\right) , \label{eq:ct5}
\end{align}
where $\gamma = 1 + {1 - a^2\over P_2} > 1$. In view of \prettyref{eq:R1N1}, upper bounding $N_1\left({1/a^2}\right)$ in~\eqref{eq:ct5} via~\eqref{eq:ct3} we get after some simplifications the second part of~\eqref{eq:corner2}.

The outer bound for average power constraint \prettyref{eq:power2} follows analogously with \prettyref{eq:t3} replaced by \prettyref{eq:t4} below:
By \prettyref{prop:pqr}, the density of $a X_{1}  + G_{2} + Z_{2}$ is $(\frac{3\log e}{1+P2}, \frac{4a \log e \sqrt{nP_1}}{1+P2})$-regular. Applying \prettyref{prop:ppr} to \prettyref{eq:t3}, we have $h(a X_{1} + X_{2} + Z_{2}) - h(a X_{1}  + G_{2} + Z_{2}) \leq \Delta$, where
\[
\Delta =  (3 \sqrt{1+a^2 P_1+P_2} + 4 a \sqrt{P_1}) \frac{\log e}{1+P_2} \sqrt{n} W_2(P_{a X_{1} + X_{2} + Z_{2}} ,P_{a X_{1}  + G_{2} + Z_{2}}).
\]
Again using the fact that $W_2$ distance is non-decreasing under convolutions and invoking Talagrand's inequality, we have
$$ W_2(P_{a X_{1} + X_{2} + Z_{2}} ,P_{a X_{1}  + G_{2} + Z_{2}}) \leq W_2(P_{X_{2}+Z_2}, P_{G_{2}+Z_2}) \leq \sqrt{{2 (1+P_2)\over \log e} D(
P_{X_{2}+Z_2} \| P_{G_{2}+Z_2})}\,, $$ 
which yields
\begin{equation}
	h(a X_{1} + X_{2} + Z_{2}) - h(a X_{1}  + G_{2} + Z_{2}) \leq 
		n \sqrt{\frac{2  (C_2-R_2)\log e }{1+P_2}}  (3 \sqrt{1+a^2 P_1+P_2} + 4 a \sqrt{P_1}) .
	\label{eq:t4}
\end{equation}	
This yields the outer bound with $\delta'$ defined in \prettyref{eq:delta-avg}.

Finally, in both cases, when $R_2 \to C_2$, we have $\delta \to 0$ and $A \to \frac{1}{a^2}+\frac{P_2}{1+P_1}$ and hence from
\prettyref{eq:corner2} $R_1 \leq C_1'$. 
%Since $(C_1',C_2)$ can be achieved using the Han-Kobayashi scheme \cite{Costa85} (see also \cite[Proposition
%1]{Carleial78}), the proof is completed. 
%Since $X_1$ and $X_2$ are independent, we have on one hand $$ I(X_1; Y_2) = I(X_1; Y_2 | X_2) - I(X_1; X_2 | Y_2) =
%h(aX_1 + Z) - h(Z) - o(n) $$
%and on the other hand
%$$ I(X_1; Y_2) \le h(X_2 + a X_1 + Z_2) - h(G_2 + Z_2) + n(C_2 - R_2).$$
\end{proof}

\begin{remark} 
The first part of the bound~\eqref{eq:corner2} coincides with Sato's outer bound \cite{Sato78} and \cite[Theorem 2]{Kramer04} by Kramer, which \cite[Theorem 2]{Kramer04} was obtained by reducing the Z-interference channel to the degraded broadcast channel; the second part of \eqref{eq:corner2} 
is new, which settles the missing corner point of the capacity region (see \prettyref{sec:corner} for discussions).	Note that our estimates on $N_1(1)$ in the proof of Theorem~\ref{thm:corner2} are tight in the sense that there exists a concave
function $N_1(t)$ satisfying the listed general properties, estimates~\eqref{eq:ct2} and \eqref{eq:ct3} as well as attaining
the minimum of~\eqref{eq:ct4} and~\eqref{eq:ct5} at $N_1(1)$. Hence, tightening the bound via this method would
require inferring more information about $N_1(t)$.
\end{remark}

\begin{remark}
The outer bound \prettyref{eq:corner2} relies on Costa's EPI. To establish the second statement about corner point, it is sufficient to invoke the concavity of $\gamma \mapsto I(X_2; \sqrt{\gamma} X_2 + Z_2)$ \cite[Corollary 1]{guo.immse}, which is strictly weaker than Costa's EPI.
	\label{rmk:Igamma}
\end{remark}

The outer bound \prettyref{eq:corner2} is evaluated on Fig.~\ref{fig:eval} for the case of $b=0$ (Z-interference), where we also plot (just for reference) the simple Han-Kobayashi inner bound for the Z-GIC \prettyref{eq:ZGIC} attained by choosing $X_1 = U+V$ with $U\indep V$ jointly Gaussian. This achieves rates:
\begin{align}
\begin{cases}
	R_1 = {1\over 2} \log (1+P_1 - s) + {1\over 2} \log \left(1+ {a^2 s\over 1+a^2 (P_1-s) + P_2}\right)\\
  	 R_2 = {1\over 2} \log \left(1+ {P_2\over 1+a^2(P_1-s)}\right)
\end{cases}
\quad 0\le s\le P_1.
\end{align}   
%which passes through the corner point $(C_1',C_2)$ when $s=P_1$. 
For more sophisticated Han-Kobayashi bounds see~\cite{IS04_GIC,costa2011noisebergs}.
\begin{figure}
\centering
\includegraphics[width=.5\textwidth]{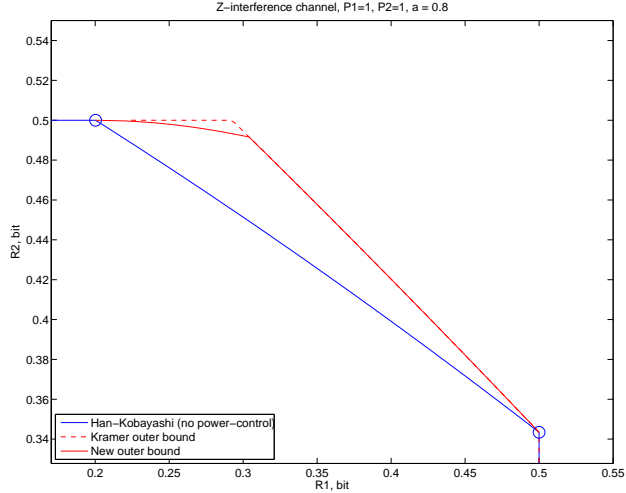}
\caption{Illustration of the ``missing corner point'': The bound in Theorem~\ref{thm:corner2} establishes the location
of the upper corner point, as conjectured by Costa~\cite{MC85_GIC}. The bottom corner point has been established by Sato \cite{Sato78}.}
\label{fig:eval}
\end{figure}

%\subsection{Gaussian interference channel: Discussion and evaluation}
\subsection{Corner points of the capacity region}
\label{sec:corner}

The two corner points of the capacity region are defined as follows:
\begin{align} C_1'(a,b) &\eqdef \max\{R_1: (R_1,C_2) \in\matr(a,b)\}\,,\\
   C_2'(a,b) &\eqdef \max\{R_2: (C_1,R_2) \in\matr(a,b)\}\,,
\end{align} 
where $C_i = {1\over2}\log(1+P_i)$. As a corollary,
\prettyref{thm:corner2} completes the picture of the corner points for the capacity region of GIC for all
values of $a,b\in\mreals_+$ under the average power constraint \prettyref{eq:power2}. We note that the new result here is the proof of $C_1'(a,b) = \gacap{a^2 P_1\over 1+P_2}$ for $0 < a \leq 1$ and $b \geq 0$. The interpretation is that if one user desires to achieve its own interference-free capacity, then the other user must guarantee that its message is decodable at both receivers.
The achievability of this corner point was previously known, while the converse was previously considered by Costa~\cite{MC85_GIC} but with a
flawed proof, as pointed out in~\cite{IS04_GIC}. The high-level difference between our proof and that
of~\cite{MC85_GIC} is the replacement of Pinsker's inequality by Talagrand's and the use of a coupling argument.%
\footnote{After circulating our initial draft, we were informed that authors of~\cite{Bustin-ISIT14} posted an updated
manuscript~\cite{bustin2014effect} that also proves Costa's conjecture. Their method is based on the analysis of the
minimum mean-square error (MMSE) properties of good channel codes, but we were not able to verify all the details. A further update is in~\cite{bustin2015optimal}.}

Below we present a brief account of the corner points in various cases; for an extensive discussion see~\cite{IS13-GIC-corner}.
We start with a few simple observations about the capacity region $\matr(a,b)$:
\begin{itemize} 
\item 
%MAC achievability: 
Any rate pair satisfying the following belongs to $\matr(a,b)$:
\begin{equation}
	\begin{aligned} 
	R_1 &\le {1\over2} \log(1+P_1 \min(1,a^2))\,\\
	   R_2 &\le {1\over2} \log(1+P_2 \min(1,b^2))\,\\
	   R_1+R_2 &\le {1\over 2} \log (1+ \min(P_1 + b^2 P_2, P_2 + a^2 P_1)) ,
\end{aligned}	   
	\label{eq:mac-mac}
\end{equation}
which corresponds to the intersection of two Gaussian multiple-access (MAC) capacity regions, namely, $(X_1,X_2) \to Y_1$ and $(X_1,X_2) \to Y_2$. 
These rate pairs correspond to the case when each receiver decodes both messages. 
%	There are also natural variations in which one receiver decodes both messages, while the other treats unknown message as Gaussian
%	noise. We omit the details.

\item For $a>1$ and $b>1$ (strong interference) the capacity region is known to coincide with \prettyref{eq:mac-mac} \cite{Carleial75,Sato81}. So, without loss of generality we assume $a\le 1$ henceforth.
\item Replacing either $a$ or $b$ with zero can only enlarge the region (genie argument).
\item If $b\ge 1$ then for any $(R_1,R_2)\in\matr(a,b)$ we have \cite{Sato81}
	\begin{equation}\label{eq:bt1}
		R_1 + R_2 \le \gacap{b^2 P_2 + P_1}\,.
	\end{equation}	
	This follows from the observation that in this case $I(X_1, X_2; Y_1) = H(X_1,X_2)-o(n)$, since conditioned on $X_1$, $Y_2$ is a noisier
	observation of $X_2$ than $Y_1$.
%	 and summation of powers.
\apxonly{\item Similarly, if $a\ge 1$ then for any $(R_1,R_2)\in\matr(a,b)$ we have
	\begin{equation}\label{eq:bt11}
		R_1 + R_2 \le \gacap{a^2 P_1 + P_2}\,.
	\end{equation}
}
%\item Treat interference as noise (TIN): 

\end{itemize}

For the top corner, we have the following:
\begin{equation}\label{eq:bt2}
	C_1'(a,b) = \begin{cases}
	\gacap{a^2 P_1\over 1+P_2}, & 0<a\le1, b \ge 0\\
	C_1, & a=0, b=0 \mbox{~or~} b \ge \sqrt{1+P_1}\\
	\gacap{P_1 + (b^2 -1) P_2\over 1+P_2}, & a=0, 1 < b < \sqrt{1+P_1}\\
	\gacap{P_1\over 1+b^2 P_2}, & a=0, 0<b \le 1.
	\end{cases} 
\end{equation}	
Note that for any $b \geq 0$, $a \mapsto C_1'(a,b)$ is discontinuous as $a\downarrow 0$. To verify~\eqref{eq:bt2} we consider each case separately:
\begin{enumerate}
\item For $a>0$ the converse bound follows from \prettyref{thm:corner2}. For achievability, we consider two cases. When
$b\le 1$, we have ${a^2P_1\over 1+P_2}\le \frac{P_1}{1+b^2 P_2}$ and therefore treating interference $X_2$ as noise at the first
receiver and using a Gaussian MAC-code for $(X_1,X_2)\to Y_2$ works.
%we have that ${b^2P_2\over 1+P_1}\le P_2$ and therefore treating interference $X_2$ as noise (TIN) at first
%receiver and using the Gaussian MAC-code for $(X_1,X_2)\to Y_2$ works. 
For $b > 1$, the achievability follows from the MAC inner bound \prettyref{eq:mac-mac}. Note that since $ \gacap{P_1 + b^2 P_2} \ge \gacap{P_2 + a^2 P_1} $, a Gaussian MAC-code that works for $(X_1,X_2)\to Y_2$ will also work for $(X_1,X_2)\to Y_1$.
Alternatively, the achievability also follows from Han-Kobayashi inner bound (see, \eg, \cite[Theorem 6.4]{nit-book} with $(U_1,U_2)=(X_1,X_2)$ for $b \geq 1$ and $(U_1,U_2)=(X_1,0)$ for $b \leq 1$).
\item For $a=0$ and $b \ge \sqrt{1+P_1}$ the converse is obvious, while for achievability we have that ${b^2P_2\over 1+P_1}\le P_2$ and therefore $X_2$ is decodable at $Y_1$.
\item For $a=0$ and $1 < b < \sqrt{1+P_1}$ the converse is~\eqref{eq:bt1} and the achievability is just the MAC code
$(X_1,X_2)\to Y_1$ with rate $R_2=C_2$.
\item For $a=0$ and $0<b\le 1$ the result follows from the treatment of $C_2'(a,b)$ below by interchanging $a\leftrightarrow
b$ and $P_1\leftrightarrow P_2$.
\end{enumerate}

The bottom corner point is given by the following:
\begin{equation}
C_2'(a,b) = \begin{cases}
		\gacap{P_2\over 1+a^2 P_1}, &0\le a\le 1, b=0\mbox{~or~}b\ge\sqrt{1+P_1\over 1+a^2 P_1}\\
		\gacap{b^2 P_2\over 1+P_1}, &0\le a\le 1, 1 < b < \sqrt{1+P_1\over 1+a^2 P_1}\\
		\gacap{b^2 P_2\over 1+P_1}, &0\le a\le 1, 0<b\le 1
	\end{cases}	
	\label{eq:bcorner}
\end{equation}
which is discontinuous as $b\downarrow 0$ for any fixed $a \in [0,1]$.  We treat each case separately:
\begin{enumerate}
\item The case of $C_2'(a,0)$ is due to Sato \cite{Sato78} (see also \cite[Theorem 2]{Kramer04}). The converse part also follows
from~\prettyref{thm:corner2} (for $a=0$ there is nothing to prove). For the achievability, we notice that under $b\ge \sqrt{1+P_1\over 1+a^2 P_1}$ we have
$ {b^2 P_2 \over 1+P_1} > {P_2 \over 1+a^2 P_1} $
and thus $X_2$ at rate $C_2'(a,0)$ can be decoded and canceled from $Y_1$ by simply treating $X_1$ as
Gaussian noise (as usual, we assume Gaussian random codebooks). Thus the problem reduces to that of $b=0$. For $b=0$,
the Gaussian random coding achieves the claimed result if the second receiver treats $X_1$ as Gaussian noise. % at $Y_2$.
\item % If $1 < b < \sqrt{1+P_1\over 1+a^2 P_1}$, 
The converse follows from~\eqref{eq:bt1} and for the achievability we use the Gaussian MAC-code
$(X_1,X_2)\to Y_1$ and treat $X_1$ as Gaussian interference at $Y_2$. 
\item If $b\in(0,1]$, we apply results on $C_1'(a,b)$ in \prettyref{eq:bt2} by interchanging $a\leftrightarrow b$ and $P_1\leftrightarrow P_2$.
\end{enumerate}

\section{Discrete version}
\label{sec:discrete}

\subsection{Bounding entropy and information via Ornstein's distance}

Fix a finite alphabet $\matx$ and an integer $n$. On the product space $\matx^n$ we define the Hamming distance
$$ d_H(x,y) = \sum_{j=1}^n \indc{x_j \neq y_j}\,,$$
and consider the corresponding Wasserstein distance $W_1$. In fact, $\frac{1}{n} W_1(P,Q)$ is known as Ornstein's $\bar d$-distance \cite{GNS75,Marton86}, namely,
\begin{equation}
	\bar d(P,Q) = \frac{1}{n} \inf \Expect[d_H(X,Y)],
	\label{eq:dbar}
\end{equation} 
where the infimum is taken over all couplings $P_{XY}$ of $P$ and $Q$. 
For $n=1$, this coincides with the total variation, which is also expressible as $\TV(P,Q)=\frac{1}{2} \int |dP-dQ|$ for $P,Q$ on $\calX$.

For a pair of distributions $P,Q$ on $\matx^n$ we may ask  the following questions:
\begin{enumerate}
\item Does $D(P\|Q)$ control the entropy difference $H(P)-H(Q)$?
\item Does $\bar d(P,Q)$ control the entropy difference $H(P)-H(Q)$?
\end{enumerate}
Recall that in the Euclidean space the answer to both questions was negative unless the distributions satisfy certain
regularity conditions. For discrete alphabets the answer to the first question is still negative in general (see \prettyref{sec:intro} for a counterexample); nevertheless, the answer to the second one turns out to be positive:

\begin{proposition}\label{prop:dew} Let $P$ and $Q$ be distributions on $\matx^n$ and let
$$ F_{\matx}(x) \eqdef x \log (|\matx|-1) + x \log {1\over x} + (1-x) \log {1\over 1-x}\,, \quad  0\leq x \leq 1.$$
Then 
\begin{equation}\label{eq:jpt}
	|H(P) - H(Q)| \le n F_{\matx}(\bar d(P,Q))\,.
\end{equation}
\end{proposition}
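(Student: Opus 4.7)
The plan is to run a Fano-type argument on the optimal coupling. Let $P_{XY}$ be a coupling of $P$ and $Q$ achieving the infimum in \prettyref{eq:dbar}, so that $X\sim P$, $Y\sim Q$ and $\frac{1}{n}\Expect[d_H(X,Y)] = \bar d(P,Q)$. The symmetric identity $H(P) - H(Q) = H(X) - H(Y) = H(X|Y) - H(Y|X)$ reduces the task to bounding both conditional entropies by the same quantity $n F_\matx(\bar d(P,Q))$.

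Next, I would tensorize via the chain rule and the fact that conditioning reduces entropy:
\begin{equation*}
  H(X|Y) \;=\; \sum_{i=1}^n H(X_i \mid Y, X^{i-1}) \;\le\; \sum_{i=1}^n H(X_i \mid Y_i).
\end{equation*}
Now apply the classical Fano inequality to each coordinate: if $\epsilon_i \eqdef \Prob[X_i \neq Y_i]$, then since $X_i,Y_i$ take values in $\matx$,
\begin{equation*}
  H(X_i \mid Y_i) \;\le\; h(\epsilon_i) + \epsilon_i \log(|\matx|-1) \;=\; F_\matx(\epsilon_i),
\end{equation*}
where $h$ denotes the binary entropy function. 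Exactly the same bound works for $H(Y|X)$ with the same $\epsilon_i$'s since $\Prob[Y_i \ne X_i] = \epsilon_i$.

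To finish, observe that $F_\matx(x) = h(x) + x\log(|\matx|-1)$ is the sum of a concave and a linear function, hence concave on $[0,1]$. Jensen's inequality therefore gives
\begin{equation*}
  \frac{1}{n}\sum_{i=1}^n F_\matx(\epsilon_i) \;\le\; F_\matx\!\left(\frac{1}{n}\sum_{i=1}^n \epsilon_i\right) \;=\; F_\matx(\bar d(P,Q)),
\end{equation*}
since $\frac{1}{n}\sum_i \epsilon_i = \frac{1}{n}\Expect[d_H(X,Y)] = \bar d(P,Q)$. Multiplying by $n$ and combining with the symmetric bound on $H(Y|X)$ yields \prettyref{eq:jpt}.

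There is no real obstacle here; the argument is essentially the proof of Fano's inequality combined with single-letterization. The only point to be careful about is that $F_\matx$ must be applied to the \emph{average} per-coordinate error probability, which requires concavity, and that the decomposition $H(X|Y) \le \sum_i H(X_i|Y_i)$ must use conditioning-reduces-entropy rather than independence, since the coordinates of the coupling need not be independent.
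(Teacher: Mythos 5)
Your proof is correct and follows essentially the same route as the paper's: take the $\bar d$-optimal coupling, bound $H(X|Y)$ (and symmetrically $H(Y|X)$) by a single-letterized Fano-type bound, and finish with Jensen's inequality using the concavity of $F_{\matx}$. The paper merely packages the single-letterization through an auxiliary max-entropy function $f_n$ (which makes the argument work for any translation-invariant additive distance), but this is the same substance as your chain-rule plus coordinatewise Fano step.
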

\begin{proof} In fact, the statement holds for any translation-invariant distance $d(\cdot,\cdot)$ on $\calX$ extended
additively to $\calX^n$, \ie, $d(x,x')=\sum_{i=1}^n d(x_i,x'_i)$ for any $x,x'\in \calX^n$. Indeed, define
$$ f_n(s) \eqdef \max_{P_X} \left\{{1\over n}H(X): \EE[d(X, x_0)] \le n s \right\}\,,$$
where $x_0\in \matx^n$ is an arbitrary fixed string. It is easy to see that $s \mapsto f_n(s)$ is concave since
$P\mapsto H(P)$ is. Furthermore, writing $X=(X_1,\ldots,X_n)$ and applying chain-rule for entropy we get
$$ f_n(s) = f_1(s)\,.$$
Thus, letting $X,Y$ be distributed according to the $\bar d$-optimal coupling of $P$ and $Q$, we get
\begin{align} 
H(X) - H(Y) &\le H(X,Y)- H(Y) = H(X|Y) \\
		&\le n \EE\left[f_n(\EE[d(X,Y)|Y])\right] \label{eq:jpt1}\\
		&\le n f_n(\bar d(P,Q))\,,\label{eq:jpt2}
\end{align}
where~\eqref{eq:jpt1} is by definition of $f_n(\cdot)$ and~\eqref{eq:jpt2} is by Jensen's inequality. Finally, for the Hamming
distance we have $f_1(s) = F_{\matx}(s)$ by Fano's inequality. 
\end{proof}

Notice that the right-hand side of~\eqref{eq:jpt} behaves like $n \bar d \log \frac{1}{\bar d}$ when $\bar d(P,Q)$ is small. This super-linear dependence is in fact sharp.\footnote{To see this, consider $Q = \Bern(p)^{\otimes n}$ and choose $P$ to be the output distribution of the optimal lossy compressor for $Q$ at average distortion $\delta n$. By definition, $\bar d(P,Q) \leq \delta$. On the other hand,  $H(P) = n (h(p) - h(\delta)+o(1))$ as $n\diverge$ and hence $|H(P)-H(Q)| = n (h(\delta)+o(1))$, which asymptotically meets the upper bound \prettyref{eq:jpt} with equality.}
Nevertheless, if certain regularity of distributions is assumed, the estimate \prettyref{eq:jpt} can be improved to be
linear in $\bar d(P,Q)$. The next result is the analog of \prettyref{prop:ppr} in the discrete space. We formulate it in
a form convenient for applications in multi-user information theory. 

\begin{prop} 
\label{prop:dbar}
Let $P_{Y|X,A}$ be a two-input blocklength-$n$ memoryless channel, namely
\[
P_{Y|X,A}(y|x,a) = \prod_{j=1}^n W(y_j|x_j, a_j),
\]
 where $W(\cdot|\cdot)$ is a stochastic matrix and $y\in \maty^n,
x\in\matx^n, a\in\mata^n$.
Let $X,A,\tilde A$ be independent $n$-dimensional discrete random vectors. 
Let $Y$ and $\tY$ be the outputs generated by $(X, A)$ and $(X, \tA)$, respectively. 
Then
 \begin{align} |H(Y) - H(\tY)| &\le c n \bar d(P_{Y}, P_{\tY}) \label{eq:dbarH}\\
 	D(P_{Y} \| P_{\tY}) + D(P_{\tY} \| P_{Y}) &\le 2 c n \bar d(P_{Y}, P_{\tY}) \label{eq:dbarD}
	  \\
 	|I(X; Y) - I(X; \tY)| &\le 2 c n \Expect[\bar{d}(P_{Y|X},P_{\tY|X})] \label{eq:dbarI}
\end{align}
where
\begin{align} 
c & ~\eqdef \max_{x, a, y, y'} \log {W(y|x,a)\over W(y'|x,a)} , \label{eq:c}\\
\Expect[\bar{d}(P_{Y|X},P_{\tY|X})] & ~\eqdef \sum_{x\in \matx^n} P_X(x) \bar d(P_{Y|X=x}, P_{\tY|X=x}).
%   \eta &\eqdef \max_{x, a,  a'} \TV(P_{Y_1|X_1=x,A_1=a}, P_{Y_1|X_1=x, A_1=a'})\,. 
\end{align}
\end{prop}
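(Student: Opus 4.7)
The plan is to mirror the proof of Proposition~\ref{prop:ppr}, with the role of $(c_1,c_2)$-regularity of the density played by a direct Lipschitz bound on $\log P_Y$ (and $\log P_{\tY}$) with respect to Hamming distance. Concretely, I would first prove the key lemma: for any $y,y'\in\maty^n$,
\begin{equation}\label{eq:disc-lip}
	|\log P_Y(y) - \log P_Y(y')| \le c\, d_H(y,y'),
\end{equation}
and the same with $Y$ replaced by $\tY$. By telescoping it suffices to verify \eqref{eq:disc-lip} when $y$ and $y'$ differ only in a single coordinate $j$. Expanding $P_Y(y)=\sum_{x,a} P_X(x) P_A(a) W(y_j|x_j,a_j)\prod_{i\ne j}W(y_i|x_i,a_i)$ and comparing term-by-term with $P_Y(y')$ (which differs only in replacing $W(y_j|x_j,a_j)$ by $W(y_j'|x_j,a_j)$ inside the sum), the uniform ratio bound $W(y_j|x_j,a_j)/W(y_j'|x_j,a_j)\le \mathrm{base}^c$ yields $P_Y(y)/P_Y(y')\le \mathrm{base}^c$. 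Independence of $(X,A)$ is used here to factor the sum, and the same argument applies to $\tY$ and (as needed below) to the conditionals $P_{Y|X=x},P_{\tY|X=x}$.

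Next I would couple $Y$ and $\tY$ by the $\bar d$-optimal coupling, so that $\EE[d_H(Y,\tY)] = n\,\bar d(P_Y,P_{\tY})$. Starting from the identity
\begin{equation}\label{eq:disc-id}
	H(Y) - H(\tY) + D(P_Y\|P_{\tY}) = \EE\!\left[\log\frac{P_{\tY}(\tY)}{P_{\tY}(Y)}\right],
\end{equation}
applying \eqref{eq:disc-lip} to $\log P_{\tY}$ gives $|\log(P_{\tY}(\tY)/P_{\tY}(Y))| \le c\, d_H(Y,\tY)$ pointwise, hence the right-hand side of \eqref{eq:disc-id} is at most $c n\,\bar d(P_Y,P_{\tY})$; dropping the nonnegative divergence yields $H(Y)-H(\tY)\le c n\,\bar d(P_Y,P_{\tY})$, and swapping the roles of $Y$ and $\tY$ gives \eqref{eq:dbarH}. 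For \eqref{eq:dbarD}, I would add \eqref{eq:disc-id} to its swapped version; the left side becomes $D(P_Y\|P_{\tY})+D(P_{\tY}\|P_Y)$ and the right side is bounded by $2c\,\EE[d_H(Y,\tY)]=2cn\,\bar d(P_Y,P_{\tY})$ using \eqref{eq:disc-lip} for both $\log P_{\tY}$ and $\log P_Y$.

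For the mutual-information bound \eqref{eq:dbarI}, I would write $I(X;Y)-I(X;\tY) = [H(Y)-H(\tY)] - [H(Y|X)-H(\tY|X)]$ and bound each bracket separately. The second bracket equals $\EE_X[H(P_{Y|X})-H(P_{\tY|X})]$, and applying the already-established bound \eqref{eq:dbarH} to the conditional distributions (for each fixed $x$, the Lipschitz property of $\log P_{Y|X=x}$ follows from the same single-coordinate argument since $P_{Y|X=x}(y)=\sum_a P_A(a)\prod_i W(y_i|x_i,a_i)$ still factors appropriately) gives $|H(Y|X)-H(\tY|X)|\le cn\,\EE[\bar d(P_{Y|X},P_{\tY|X})]$. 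For the first bracket, I would use the convexity/contraction estimate $\bar d(P_Y,P_{\tY}) \le \EE[\bar d(P_{Y|X},P_{\tY|X})]$, obtained by averaging the conditional optimal couplings under $P_X$ (which is shared by $Y$ and $\tY$). Combining the two brackets yields \eqref{eq:dbarI}.

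The main obstacle is the Lipschitz claim \eqref{eq:disc-lip}; all subsequent steps are formally parallel to the Euclidean case once that is in hand. The proof of \eqref{eq:disc-lip} is where independence of $X,A$ (and the memoryless product form of $W$) is used essentially, because it allows the single-coordinate ratio of the marginals $P_Y(y)/P_Y(y')$ to be dominated termwise by the pointwise ratio $W(\cdot|x_j,a_j)/W(\cdot|x_j,a_j)$, whose log is controlled by $c$.
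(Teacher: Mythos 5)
Your proposal is correct and follows essentially the same route as the paper: establish that $y\mapsto \log P_Y(y)$ (and its conditional versions) is $c$-Lipschitz in Hamming distance, then run the coupling argument through the identity $H(Y)-H(\tY)+D(P_Y\|P_{\tY})=\EE[\log(P_{\tY}(\tY)/P_{\tY}(Y))]$ exactly as in Proposition~\ref{prop:ppr}, and obtain \prettyref{eq:dbarI} from the decomposition into unconditional and conditional entropy differences together with $\bar d(P_Y,P_{\tY})\le \EE[\bar d(P_{Y|X},P_{\tY|X})]$. The only difference is cosmetic: the paper cites the Lipschitz property of $\log$ of a mixture of product distributions from \cite{PV12-optcodes}, whereas you prove it directly by a single-coordinate telescoping argument, which is a valid (and self-contained) substitute.
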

\begin{proof} 
Given any stochastic matrix $G$, define $L(G)\triangleq \max_{u,u',v} \log \frac{G(u|v)}{G(u'|v)}$.
Recall the following fact from \cite[Eqn.~(58)]{PV12-optcodes} about mixtures of product distributions: Let $U$ and $V$ be $n$-dimensional discrete random vector connected by a product channel, that is, $P_{V|U}=\prod_{i=1}^n P_{V_i|U_i}$. Then the mapping $v\mapsto \log P_V(v)$ is $L$-Lipschitz with respect to the Hamming distance, where $L=\max_{j\in[n]} L(P_{V_i|U_i})$. 
Consider another pair $(\tU,\tV)$ connected by the same channel, \ie, $P_{\tV|\tU}=P_{V|U}$.
%Let $\tU \xrightarrow{P_{V|U}} \tV$ be another distribution that give rise to $\tV$, \ie, $P_{\tV|}$
Then Lipschitz continuity implies that
$\EE\left[\left| \log \frac{P_{V}(V)}{P_{V}(\tV)} \right|\right] \le L \EE[d_H(V,\tV)]$ for any coupling $P_{V\tV}$. Optimizing over the coupling and in view of \prettyref{eq:dbar}, we obtain
$$ \EE\left[\left| \log \frac{P_{V}(V)}{P_{V}(\tV)} \right|\right] \le L n \bar d(P_{V}, P_{\tV})\,.$$
Repeating the proof of~\eqref{eq:ppr2}--\eqref{eq:ppr4}, we have
\begin{align}
	|H(V)-H(\tV)| \leq &~L n \bar{d}(P_V,P_\tV) \label{eq:Hstab} \\
%	\max_{j\in[n]} \max_{u,u',v} \log \frac{P_{V_j|U_j}(u|v)}{P_{V_j|U_j}(u'|v)}.
 	D(P_{V} \| P_{\tV}) + D(P_{\tV} \| P_{V}) \leq &~ 2 c n \bar d(P_{V}, P_{\tV}) \label{eq:Dstab}	
\end{align}
Applying \prettyref{eq:Hstab} and \prettyref{eq:Dstab} to $Y$ and $(X,A)$ gives \prettyref{eq:dbarH} and \prettyref{eq:dbarD} with $L=c$ defined in \prettyref{eq:c}.

To bound the mutual information, we first notice
$$ |I(X;Y) - I(X; \tY)| \le |H(Y) - H(\tY)| + |H(Y|X) - H(\tY|X)|\,. $$
Applying~\eqref{eq:Hstab} conditioned on $X=x$ we get
\[
	|H(Y|X=x)-H(\tY|X=x)| \le c_x n \bar d(P_{Y|X=x}, P_{\tY|X=x})\,,
\]
where $c_x = \max_{j\in[n]} \max_{y,y',a} \log \frac{W(y|x_j,a)}{W(y'|x_j,a)}$. Note that $c_x \leq c$ for any $x$, averaging over $P_X$ gives
\begin{equation}\label{eq:rgf1}
	|H(Y|X)-H(\tY|X)| \le cn \Expect[\bar d(P_{Y|X}, P_{\tY|X})]\,.
\end{equation}
From the convexity of $(P,Q) \mapsto \bar d(P,Q)$, which holds for any Wasserstein distance, we have
$ \bar d(P_{Y}, P_{\tY}) \le \Expect[\bar d(P_{Y|X}, P_{\tY|X})]$
and so the left-hand side of~\eqref{eq:rgf1} also bounds $|H(Y)-H(\tY)|$ from above.
\end{proof}

\subsection{Marton's transportation inequality}

In this section we discuss how previous bounds (\prettyref{prop:dew} and \ref{prop:dbar}) in terms of the $\bar{d}$-distance can be converted to bounds in terms of KL divergence. This is possible when $Q$
is a product distribution, thanks to Marton's transportation inequality~\cite[Lemma 1]{Marton86}. We formulate this together with a few 
other properties of the $\bard$-distance in the following lemma proved in \prettyref{app:dbar}. 
\begin{lemma}
\label{lmm:dbar}~~~~~~~~
	\begin{enumerate}
	\item (Marton's transportation inequality~\cite{Marton86}): For any pair of distributions $P$ and $Q=\prod_{i=1}^n Q_i$ on
	$\matx^n$,
	\begin{equation}
		\bar d\left(P,\, Q\right) \leq \sqrt{\frac{D(P \| Q)}{2 n\log e}}.
		\label{eq:marton1}
	\end{equation}
	\item (Tensorization) $\bar{d}(\prod_{i=1}^n P_i, \prod_{i=1}^n Q_i) \leq \frac{1}{n} \sum_{i=1}^n \TV(P_i,Q_i)$.
	
	\item (Contraction) For $P_{XY}$ and $Q_{XY}$ such that $P_{Y|X}=Q_{Y|X}=\prod_{i=1}^n P_{Y_i|X_i}$, 
	\begin{equation}
	\bar{d}(P_Y,Q_Y) \leq \max_{i\in [n]} \etaTV(P_{Y_i|X_i})   \bar{d}(P_X,Q_X) .
		\label{eq:dbar-contract}
	\end{equation}
	where $\etaTV(W)$ is Dobrushin's contraction coefficient of a Markov kernel $W$ defined as $\etaTV(W)=\sup_{x,x'} \TV(W(\cdot|x), W(\cdot|x'))$.
\end{enumerate}
\end{lemma}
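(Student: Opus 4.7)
All three statements are coupling statements; the plan is to build explicit couplings and use Pinsker's inequality as the sole quantitative input. Parts 2 and 3 are direct coupling constructions, while part 1, the transportation inequality proper, is the main obstacle and requires the inductive argument of Marton.

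For tensorization (part 2), I would couple the $n$ coordinates independently, using in each coordinate the maximal TV-coupling achieving $\Prob[X_i\neq Y_i]=\TV(P_i,Q_i)$. Then $\Expect[d_H(X,Y)]=\sum_i\TV(P_i,Q_i)$, and dividing by $n$ yields the claim. For contraction (part 3), I would take an optimal $\bar d$-coupling $(X,X')$ of $(P_X,Q_X)$ and lift it through the product channel coordinatewise: whenever $X_i=X'_i$ use a common sample for $Y_i=Y'_i$, and whenever $X_i\neq X'_i$ use the TV-maximal coupling of $P_{Y_i|X_i}(\cdot|X_i)$ and $P_{Y_i|X_i}(\cdot|X'_i)$, whose disagreement probability is at most $\etaTV(P_{Y_i|X_i})$ by the very definition of Dobrushin's coefficient. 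Taking expectations gives
\[
\Expect[d_H(Y,Y')]\;\le\;\max_i \etaTV(P_{Y_i|X_i})\cdot \Expect[d_H(X,X')],
\]
which is \prettyref{eq:dbar-contract} after normalization by $n$.

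The hard part is part 1, Marton's inequality. The base case $n=1$ is the standard Pinsker inequality $\TV(P,Q)\le\sqrt{D(P\|Q)/(2\log e)}$. For general $n$ I would reproduce Marton's inductive construction based on the chain rule
\[
D(P\|Q)\;=\;D(P_{X_1}\|Q_1)+\Expect_{X_1\sim P_{X_1}}\bigl[D\bigl(P_{X_{2:n}|X_1}(\cdot|X_1)\,\big\|\,\textstyle\prod_{i\ge 2}Q_i\bigr)\bigr],
\]
and build a two-stage coupling: couple $X_1$ with $Y_1\sim Q_1$ via the TV-maximal coupling (so that Pinsker controls $\Prob[X_1\ne Y_1]$), then conditionally on the realized value of $X_1$ apply the induction hypothesis to couple $P_{X_{2:n}|X_1}(\cdot|X_1)$ with $\prod_{i\ge 2}Q_i$.

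The subtle step, and the only real obstacle, is aggregating these two stages without losing a factor: a naive triangle inequality produces a sum $\sqrt{D_1}+\sqrt{(n-1)\Expect D_{\text{rest}}}$ rather than the single square-root $\sqrt{n(D_1+\Expect D_{\text{rest}})}$ we need. Following Marton~\cite{Marton86}, I would instead combine the two stages via Cauchy--Schwarz (together with an appropriate weighting across coordinates in the inductive step), which precisely distributes the divergence budget over all $n$ coordinates and yields
\[
\Expect[d_H(X,Y)]^2\;\le\;\frac{n}{2\log e}\,D(P\|Q).
\]
Dividing by $n^2$ and taking square roots then gives~\prettyref{eq:marton1}.
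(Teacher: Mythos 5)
Your proposal is correct, and parts 2 and 3 match the paper's proof in substance, but the treatment differs in two places worth noting. For the contraction step, the paper first uses the tensorization bound to get $\dbar(P_{Y|X=x},P_{Y|X=x'})\le \frac{1}{n}\sum_i \etaTV(P_{Y_i|X_i})\indc{x_i\neq x_i'}$ and then invokes Dobrushin's coupling theorem \cite{RLD70} (see also \cite[Proposition 18]{PW14a}) to glue these conditional couplings onto the optimal $\dbar$-coupling of $(P_X,Q_X)$; you instead build the lifted coupling by hand, sampling $(Y_i,Y_i')$ coordinatewise from the TV-maximal coupling of $P_{Y_i|X_i}(\cdot|X_i)$ and $P_{Y_i|X_i}(\cdot|X_i')$ when $X_i\neq X_i'$ and from a common sample otherwise. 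On a finite alphabet this explicit construction is perfectly valid (the product channel structure guarantees the marginals of $Y$ and $Y'$ come out right), so you get the same bound $\Expect[d_H(Y,Y')]\le \max_i\etaTV(P_{Y_i|X_i})\,\Expect[d_H(X,X')]$ without appealing to the general coupling theorem — slightly more elementary, at the cost of redoing what that theorem packages. For part 1 the paper simply cites Marton and proves nothing; you sketch the actual inductive proof (chain rule, per-coordinate Pinsker, then Cauchy--Schwarz with weights $1$ and $n-1$ to merge $\sqrt{D_1}+\sqrt{(n-1)\Expect D_{\rm rest}}\le\sqrt{n\,D(P\|Q)}$), and that sketch is sound: the weighted Cauchy--Schwarz step does exactly the aggregation you worry about, so your more self-contained route goes through.
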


If we assume that $D(P \| Q) = \epsilon n$ for some small $\epsilon$, then combining \prettyref{eq:jpt} and \prettyref{eq:marton1} gives
\[
	|H(P) - H(Q)| \le n F_{\matx}\left( \sqrt{{D(P \| Q) \over2 n \log e} } \right)\,,
\]
where the right-hand side behaves as $n \sqrt{\epsilon} \log \frac{1}{\epsilon}$ when $\epsilon \to 0$. This estimate has a
one-sided improvement (here again $Q$ must be a product distribution):
\begin{equation}
H(P) - H(Q) \le \sqrt{\frac{2 n D(P \| Q)}{\log e}} \log |\calX| 
	\label{eq:chang}
\end{equation}
(see~\cite{cchang} for $n=1$ and~\cite[Appendix H]{WV10} for the general case).
\apxonly{

Strangely, with methods of this paper we can obtain an incomparable to~\eqref{eq:chang} one-sided estimate. Again, when $Q=Q_1^{\otimes n}$ we have
$$ H(P)-H(Q) \le c W_1(P,Q) - D(P\|Q) \le  
		c \sqrt{{n\over2\log e} D(P\|Q)} - D(P\|Q)\,,$$
where $c = \max_{a_i\in \supp Q}\log {Q_1(a_1)\over Q_1(a_2)}$.
}
\apxonly{

Another observation: Although~\prettyref{prop:dew} allows one to prove $H(P)\approx H(Q)$ when $\bar d(P,Q)\ll
1$ without any regularity assumptions, it is not possible to prove $D(P\|Q)\ll n$ without such assumptions. Indeed, just
taking $P$ to be uniform on some linear subspace of $\FF_2^n$ and $Q$ to be uniform on the translate of the subspace
provides a counter-example.
}

Switching to the setting in \prettyref{prop:dbar}, let us consider the case where $\tA$ has
i.i.d.~components, \ie, $P_{\tA}=P_0^{\otimes n}$. 
Define
\begin{align} 
   \etaTV &\eqdef \max_{x, a,  a'} \TV(W(\cdot|x,a), W(\cdot|x, a'))\,,
\end{align}
which is the maximal Dobrushin contraction coefficients among all channels
$W(\cdot|\cdot, x)$ indexed by $x \in \calX$.
Then
\begin{equation}
	\bar d(P_Y, P_\tY) \le 	\Expect[\bar d(P_{Y|X}, P_{\tY|X})] \leq
	\etaTV \bar d(P_A, P_\tA) \le \etaTV \sqrt{\frac{D(P_{A} \| P_{\tA})}{2 n \log e}}\,,
	\label{eq:dbar-etatv}
\end{equation}
where the left inequality is by convexity of the $\bar d$-distance as a Wasserstein distance, the middle inequality is by \prettyref{lmm:dbar},
and the right inequality is via~\eqref{eq:marton1}.
An alternative to the estimate \prettyref{eq:dbar-etatv} is the following:
\begin{align} 
\bar d(P_Y, P_\tY) &\le 	\Expect[\bar d(P_{Y|X}, P_{\tY|X})] \label{eq:gb1}\\
		   &\le \Expect\qth{ \sqrt{{1 \over2n\log e} D(P_{Y|X} \| P_{\tY|X})}}
		   	\label{eq:gb2}\\
		   &\le \sqrt{{1\over2n\log e} D(P_{Y|X} \| P_{\tY|X}|P_X)}
		   	\label{eq:gb3}\\
		   &\le \sqrt{{1\over2n\log e} \etaKL D(P_{A} \| P_{\tA})}
		   	\label{eq:gb4}
\end{align}
where~\eqref{eq:gb2} is by~\eqref{eq:marton1} since $P_{\tY|X=x}$ is a product distribution as $\tA$ has a product distribution,~\eqref{eq:gb3} is
by Jensen's inequality, and~\eqref{eq:gb4} is by the tensorization property of the strong data-processing constant for divergence~\cite{AG76}:
$$ \etaKL \eqdef \max_{x, Q_0} {D(\sum_a Q_0(a) W(\cdot|x,a)   \| \sum_a P_0(a) W(\cdot |x, a))\over
					D(P_0\|Q_0)} \,.$$
To conclude, in the regime of $D(P_A\|P_\tA) \leq \epsilon n$ for some small $\epsilon$ our main \prettyref{prop:dbar} yields
\begin{equation}
 |H(Y) - H(\tY)| \simleq n \sqrt{\epsilon} 
	\label{eq:HY-twosided}
\end{equation}
matching the behavior of~\eqref{eq:chang}. However, the estimate \prettyref{eq:HY-twosided} is stronger, because a) it is two-sided and b) $P_\tY$ can be a
mixture of product distributions (since $X$ in \prettyref{prop:dbar} may be arbitrary).

\apxonly{
\textbf{Remarks:}
\begin{enumerate}
\item Notice that the actual sequential Pinsker produces coupling:
$$ W_1(P_{\tilde X}, P_{G}) \le \sqrt{1\over 2\log e} \sum_j \EE[\sqrt{D(P_{\tilde X_j|\tilde
X^{j-1}=X^{j-1}}|P_G)}]\,. $$
\item Also for one-directional estimate $H(Y)-H(Y_G)\le \cdots$ we can further improve the constant $c$ by replacing
it with the one for the channel $\sum_a P_{Y|X=x,A=a} P_G(a)$.
\end{enumerate}
}

\subsection{Application: corner points for discrete interference channels}

In order to apply~\prettyref{prop:dbar} to determine corner points of capacity regions of discrete memoryless interference channels (DMIC) we
will need an auxiliary tensorization result. This result appears to be a rather standard exercise for degraded channels
and so we defer the proof to \prettyref{app:tensoh}.

\begin{prop}\label{prop:tensoh} 
Given channels $P_{A|X}$ and $P_{B|A}$ on finite alphabets, define
\begin{equation}\label{eq:tensoh}
	F_c(t) \eqdef \max\{H(X|A,U)\colon H(X|B,U)\le t, U\to X \to A \to B\}\,.
\end{equation}
Then the following hold:
\begin{enumerate}
	\item (Property of $F_c$) The function $F_c:\mreals_+\to\mreals_+$ is concave, non-decreasing and $F_c(t)\le t$. Furthermore, $F_c(t)<t$ for all $t>0$, provided that $P_{B|A}$ and $P_{A|X}$ satisfy
	\begin{equation}
	P_{B|A=a} \not\perp P_{B|A=a'}, \quad \forall a \neq a'
	\label{eq:strict1}
\end{equation}
	 and 
	 \begin{equation}
	P_{A|X=x}\neq P_{A|X=x'}, \quad \forall x\neq x',
	\label{eq:strict2}
\end{equation}
respectively.

\item (Tensorization) For any blocklength-$n$ Markov chain
$X^n\to A^n \to B^n$, where $P_{A^n|X^n} = P_{A|X}^{\otimes n}$ and $P_{B^n|A^n}=P_{B|A}^{\otimes n}$ are $n$-letter memoryless channels, we have
\begin{equation}
H(X^n|A^n) \le n F_c\left({1\over n} H(X^n|B^n)\right)\,.	
	\label{eq:sl}
\end{equation}
\end{enumerate}
\end{prop}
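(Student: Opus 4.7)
\textbf{Proof plan for Proposition~\ref{prop:tensoh}.}

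\emph{Part 1 (properties of $F_c$).} Monotonicity is immediate since the feasible set in~\eqref{eq:tensoh} only grows with $t$. For concavity, I would take two feasible tuples $(U_j,X_j,A_j,B_j)$ achieving values close to $F_c(t_j)$ for $j=1,2$, and time-share: let $T\in\{1,2\}$ with $\Pr[T=j]=\lambda_j$, and set $U=(T,U_T)$, $(X,A,B)=(X_T,A_T,B_T)$. The Markov chain $U\to X\to A\to B$ is preserved and both the constraint $H(X|B,U)$ and the objective $H(X|A,U)$ mix linearly in $\lambda$. The inequality $F_c(t)\le t$ follows from data processing: since $U\to X\to A\to B$ is Markov, $I(X;A|U)\ge I(X;B|U)$, and subtracting from $H(X|U)$ gives $H(X|A,U)\le H(X|B,U)\le t$.

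For the strict inequality under~\eqref{eq:strict1} and~\eqref{eq:strict2}, suppose $F_c(t)=t>0$ is attained by some $(U,X,A,B)$. Equality $H(X|A,U)=H(X|B,U)$ forces equality in DPI, i.e.\ $I(X;A|B,U)=0$, which means $X\perp A \mid (B,U)$. Since the bound is attained in expectation over $u$ and pointwise $H(X|A,U{=}u)\le H(X|B,U{=}u)$, I may condition on a single value $u$ with $H(X|A,U{=}u)>0$ and assume no auxiliary $U$. The two facts $X\to A\to B$ and $X\perp A\mid B$ together give $P_{X|A=a}(x)=P_{X|B=b}(x)$ whenever $(a,b)\in\supp P_{A,B}$. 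Condition~\eqref{eq:strict1} says $\supp P_{B|A=a}\cap \supp P_{B|A=a'}\neq\emptyset$ for all $a\neq a'$, so any two rows $P_{X|A=a}$ and $P_{X|A=a'}$ share a common value $P_{X|B=b}$ and therefore coincide. Hence $X\perp A$, so $P_{A|X=x}=P_A$ for every $x\in\supp X$. Condition~\eqref{eq:strict2} then forces $|\supp X|=1$, contradicting $H(X|A)=H(X)>0$.

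\emph{Part 2 (tensorization).} The plan is to introduce, for each $i\in[n]$, the auxiliary
\[
U_i \eqdef (X^{i-1},\, B^n_{\setminus i}).
\]
I first verify the Markov chain $U_i\to X_i\to A_i\to B_i$: since the channels are memoryless with independent noises, $A_i$ is a function of $X_i$ and an innovation independent of all other noise variables and of $X_{\setminus i}$ conditionally on $X_i$; thus $U_i$, being a function of $(X_{\setminus i},\text{other noises})$, is independent of $A_i$ given $X_i$. Applying the definition of $F_c$ index-by-index,
\[
H(X_i \mid A_i, U_i) \;\le\; F_c\!\big(H(X_i \mid B_i, U_i)\big).
\]
Crucially, $H(X_i\mid B_i,U_i)=H(X_i\mid X^{i-1},B^n)$, so summing over $i$ and applying Jensen's inequality using the concavity established in Part 1 yields
\[
\sum_{i=1}^n H(X_i\mid A_i,U_i) \;\le\; n\,F_c\!\left(\tfrac{1}{n}\sum_{i=1}^n H(X_i\mid X^{i-1},B^n)\right) \;=\; n\,F_c\!\left(\tfrac{1}{n}H(X^n\mid B^n)\right),
\]
where the last equality is the chain rule. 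To close the loop, I lower-bound the left side by $H(X^n|A^n)$: since $A_{\setminus i}\to B_{\setminus i}$ is a (memoryless) degradation, adding $A_{\setminus i}$ to the conditioning only reduces entropy, so $H(X_i\mid X^{i-1},A^n)\le H(X_i\mid X^{i-1},A_i,B^n_{\setminus i})=H(X_i\mid A_i,U_i)$; summing via the chain rule gives $H(X^n\mid A^n)\le \sum_i H(X_i\mid A_i,U_i)$. Combining these two estimates delivers~\eqref{eq:sl}.

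The main obstacle is the strict-inequality claim $F_c(t)<t$ under~\eqref{eq:strict1}--\eqref{eq:strict2}: one must carefully combine the two Markov structures $X\to A\to B$ and $X\perp A\mid B$ (given $U$) to propagate the degeneracy all the way back to $X$, and each of the two regularity conditions is used once in this chain (condition~\eqref{eq:strict1} to transfer equality of $P_{X|A=a}$ across different $a$'s, and condition~\eqref{eq:strict2} to conclude $X$ is deterministic). The tensorization itself is then a clean Jensen/data-processing bookkeeping exercise once the correct auxiliary $U_i$ is chosen.
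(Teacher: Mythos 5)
Your proposal is correct, but both halves take a genuinely different route from the paper's proof. For the strict inequality $F_c(t)<t$, the paper first reduces to a ternary $U$ via Caratheodory and compactness (so the equality case is attained), and then invokes the \emph{strong} data-processing inequality of $P_{B|A}$: condition \eqref{eq:strict1} yields a contraction coefficient $\eta<1$ with $I(X;B|U)\le \eta\, I(X;A|U)$, so equality forces $I(X;A|U)=0$, and \eqref{eq:strict2} then gives $H(X|U)=0$. You instead run an elementary equality-case analysis of the ordinary DPI: equality gives $I(X;A|B,U)=0$, and reading \eqref{eq:strict1} as overlap of the supports of $P_{B|A=a}$ you conclude directly that all posteriors $P_{X|A=a}$, $a\in\supp P_A$, coincide, hence $X\indep A$ and \eqref{eq:strict2} forces $X$ deterministic — this avoids the SDPI machinery entirely, at the price of taking for granted that the maximum defining $F_c$ is attained (which the paper secures via the cardinality bound on $U$ and compactness; you should state or import that step). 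For tensorization, the paper proves $n=2$ with the auxiliaries $U=A_2$ and $U=X_1$ and inducts, and it needs the monotonicity of $F_c$ at the end; your one-shot choice $U_i=(X^{i-1},B^n_{\setminus i})$ makes the $B$-side terms telescope \emph{exactly} to $H(X^n|B^n)$ by the chain rule (so monotonicity is not even used), while the $A$-side is handled by the observation that $(X^{i-1},A_i,B^n_{\setminus i})$ is a degraded version of $(X^{i-1},A^n)$ with respect to $X_i$; your phrase ``adding $A_{\setminus i}$ to the conditioning'' is slightly compressed — the precise chain is $H(X_i|X^{i-1},A^n)=H(X_i|X^{i-1},A^n,B^n_{\setminus i})\le H(X_i|X^{i-1},A_i,B^n_{\setminus i})$, using $B^n_{\setminus i}\indep X_i$ given $(X^{i-1},A^n)$ — but it is correct. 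Your tensorization is arguably cleaner and works for all $n$ without induction; the paper's induction keeps the auxiliaries small and mirrors standard single-letterization arguments.
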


\begin{remark}
	Neither of the sufficient condition  \prettyref{eq:strict1} and \prettyref{eq:strict2} for strict inequality is superfluous, as can be seen from the example $B=A$ and $A \indep X$, respectively; in both cases $F_c(t)=t$.
%	\label{rmk:}
\end{remark}

\apxonly{
Example of $F_c$ in BSC: The region without $U$ is a convex curve parameterized by $X\sim \Bern(p)$. And the region with $U$ is simply the epigraph of the curve.
}

The important consequence of~\prettyref{prop:tensoh} is the following implication:\footnote{This is the analog of the
following property of Gaussian channels, exploited in~\prettyref{thm:corner2} in the form of Costa's EPI: For
i.i.d.~Gaussian $Z$ and $t_1<t_2<t_3$ we have
$$I(X; X+t_2 Z) = I(X; X+t_3 Z) + o(n) \implies I(X; X+ t_1 Z) =I(X; X+t_3 Z) + o(n)\,.$$
This also follows from the concavity of $\gamma \mapsto I(X; \sqrt{\gamma} X + Z)$.}
\begin{coro}
	\label{cor:tensoX}
Let $X^n\to A^n\to B^n$, where the memoryless channels $P_{A|X}$ and $P_{B|A}$ of blocklength $n$ satisfy the conditions \prettyref{eq:strict1} and \prettyref{eq:strict2}.
	Then there exists a continuous function $g: \reals_+ \to \reals_+$ satisfying $g(0)=0$, such that for all $n$
	\begin{equation}\label{eq:tensoX}
	I(X^n;A^n) \leq I(X^n;B^n) + \epsilon n \quad\implies\quad H(X^n) \leq I(X^n;B^n) + g(\epsilon) n\,,
\end{equation}
\end{coro}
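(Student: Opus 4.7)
The plan is to rewrite both sides of the target implication using the chain-rule identity
\[
H(X^n) = I(X^n;A^n) + H(X^n|A^n) = I(X^n;B^n) + H(X^n|B^n),
\]
so that the hypothesis $I(X^n;A^n) \leq I(X^n;B^n) + \epsilon n$ becomes
\[
H(X^n|B^n) - H(X^n|A^n) \leq \epsilon n,
\]
while the desired conclusion $H(X^n)\le I(X^n;B^n) + g(\epsilon)n$ becomes $H(X^n|B^n) \leq g(\epsilon) n$. Thus everything reduces to controlling the single quantity $t_n \eqdef \tfrac{1}{n} H(X^n|B^n)$ from an assumption on $t_n - \tfrac{1}{n} H(X^n|A^n)$.

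This is exactly where \prettyref{prop:tensoh} enters: its tensorization part gives $\tfrac{1}{n} H(X^n|A^n) \leq F_c(t_n)$, and combining with the rewritten hypothesis yields
\[
\phi(t_n) \eqdef t_n - F_c(t_n) \leq \epsilon.
\]
So the problem is to find a function $g$ such that $\phi(t) \leq \epsilon$ forces $t \leq g(\epsilon)$. From part~1 of \prettyref{prop:tensoh}, $F_c$ is concave and non-decreasing with $F_c(0)=0$ (note $F_c(0)=0$ because $H(X|B,U)=0$ means $X$ is a function of $(B,U)$, and together with $U\to X\to A\to B$ this forces $X$ to be a function of $(A,U)$ as well), and under the standing hypotheses \eqref{eq:strict1}--\eqref{eq:strict2} one has $F_c(t)<t$ for every $t>0$. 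Therefore $\phi:[0,\log|\matx|]\to\mreals_+$ is continuous, non-negative, vanishes only at $0$, and is defined on a compact interval.

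I would then set
\[
\tilde g(\epsilon) \eqdef \sup\{t \in [0, \log|\matx|] : \phi(t) \leq \epsilon\},
\]
which is clearly non-decreasing with $\tilde g(\epsilon)\ge t_n$ whenever $\phi(t_n)\le\epsilon$. The key claim is $\lim_{\epsilon\downarrow 0}\tilde g(\epsilon) = 0$: if along some sequence $\epsilon_k\to 0$ one had $\tilde g(\epsilon_k)\ge\delta>0$, then by compactness there exist $t_k\ge\delta$ with $\phi(t_k)\le\epsilon_k$, and passing to a convergent subsequence $t_k\to t^*\ge\delta$ gives $\phi(t^*)=0$ by continuity, contradicting $\phi(t)>0$ for $t>0$. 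Finally, to produce an actual continuous $g$ (since a monotone $\tilde g$ might have countably many jumps), I would majorize $\tilde g$ by any continuous non-decreasing function $g$ with $g(0)=0$, \eg\ via a standard mollification or linear interpolation over a sequence $\epsilon_k\downarrow 0$; this preserves the inequality $t_n\le g(\epsilon)$.

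The only real obstacle is verifying that $F_c(0)=0$ and that strict inequality $F_c(t)<t$ on $(0,\infty)$ indeed follow from \eqref{eq:strict1}--\eqref{eq:strict2}, which is exactly what part~1 of \prettyref{prop:tensoh} supplies; everything after that is a continuity/compactness argument.
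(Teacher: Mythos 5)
Your proposal is correct and follows essentially the same route as the paper: rewrite the hypothesis via the chain rule as $\tfrac1n H(X^n|B^n)\le \tfrac1n H(X^n|A^n)+\epsilon$, apply the tensorization bound of \prettyref{prop:tensoh} to get $t\le F_c(t)+\epsilon$ with $t=\tfrac1n H(X^n|B^n)$, and then invert. The only difference is the last step: the paper exploits the concavity of $F_c$ to note that $t\mapsto t-F_c(t)$ is convex, strictly increasing and positive on $(0,\infty)$, and simply defines $g$ as its inverse (automatically continuous, concave, $g(0)=0$), which bypasses your compactness and mollification argument.
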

\begin{proof}
By \prettyref{prop:tensoh}, we have $F_c(t)<t$ for all $t>0$. This together with the concavity of $F_c$ implies that $t \mapsto t - F_c(t)$ is convex, strictly increasing and strictly positive on $(0,\infty)$. Define $g$ as the inverse of $t \mapsto t - F_c(t)$, which is increasing and concave and satisfies $g(0)=0$.
Since $I(X^n;A^n) \leq I(X^n;B^n) + \epsilon n$, the tensorization result \prettyref{eq:tensoh} yields
$$H(X^n|B^n)\le H(X^n|A^n) + \epsilon n \leq n F_c\pth{\frac{1}{n} H(X^n|B^n)} + \epsilon n,$$
\ie, $ t \leq F_c(t) + \epsilon$, where $t\eqdef {1\over n} H(X^n|B^n) $. Then $t \leq g(\epsilon)$ by definition, completing the proof.
\end{proof}

We are now ready to state a non-trivial example of corner points for the capacity region of DMIC.
The proof strategy mirrors that of \prettyref{thm:corner2}, with \prettyref{cor:best} and Costa's EPI replaced by
\prettyref{prop:dbar} and \prettyref{cor:tensoX}, respectively. 
%\begin{corollary} 
%Let $P_2=\left[{1-\delta}, {\delta\over2}, {\delta\over 2}\right]$ and 
%$P_3=\left[{2-\delta\over 4}, {2-\delta\over4}, {\delta\over 2}\right]$, where $\delta \in (0,1)$. 
%Consider the two-user DMIC:
%\begin{align} Y_1 &= X_1\,,\\
%   Y_2 &= X_2 + X_1 + Z_2 \mod 3\,,
%\end{align}
%where $X_1 \in \{0,1,2\}^n$, $X_2\in \{0,1\}^n$ and $Z_2 \sim P_2^{\otimes n}$ are independent. The maximal rate
%achievable by user 2 is $C_2 = H(P_3)-H(P_2)$. At this rate the maximal rate of user 1 is $C_1' = \log 3 - H(P_3)$. 
%\end{corollary}
\begin{theorem}
\label{thm:corner-discrete} 
Consider the two-user DMIC:
\begin{align} Y_1 &= X_1\,,\\
   Y_2 &= X_2 + X_1 + Z_2 \mod 3\,,
\end{align}
where $X_1 \in \{0,1,2\}^n$, $X_2\in \{0,1\}^n, Z_2 \in \{0,1,2\}^n$ are independent and $Z_2 \sim P_2^{\otimes n}$ is i.i.d.~for some non-uniform $P_2$ containing no zeros. The maximal rate
achievable by user 2 is 
\begin{equation}
C_2 = \max_{\supp(Q) \subset \{0,1\}} H(Q * P_2) - H(P_2).
	\label{eq:C2-discrete}
\end{equation}
At this rate the maximal rate of user 1 is 
\begin{equation}
C_1' = \log 3 - \max_{\supp(Q) \subset \{0,1\}} H(Q * P_2).
	\label{eq:C1p-discrete}
\end{equation}
\end{theorem}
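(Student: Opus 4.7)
The plan is to establish achievability via a standard Han--Kobayashi argument and then prove the converse by mirroring the proof of \prettyref{thm:corner2}, replacing \prettyref{cor:best} by \prettyref{prop:dbar} (together with Marton's inequality, \prettyref{lmm:dbar}) and Costa's EPI by the tensorization result \prettyref{cor:tensoX}. For achievability of $(C_1', C_2)$, take $X_1 \iiddistr \Unif\{0,1,2\}$ and $X_2 \iiddistr Q^*$, where $Q^*$ attains \prettyref{eq:C2-discrete}: receiver $1$ recovers $X_1$ from $Y_1 = X_1$, and receiver $2$ jointly decodes $(X_1,X_2)$ from $Y_2$, which succeeds since $(C_1',C_2)$ lies on the sum-rate face of the MAC region at $Y_2$ (indeed $C_1' + C_2 = \log 3 - H(P_2) = I(X_1,X_2;Y_2)$ with this input, and the individual-rate constraints are loose).

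Turning to the converse, write $\gamma = C_2 - R_2$, let $\tilde X_2 \iiddistr Q^*$ be independent of everything else, and set $A = X_1 + Z_2$ and $B = X_1 + \tilde X_2 + Z_2$, so that $X_1 \to A \to B$ is an $n$-letter memoryless Markov chain with single-letter kernels $P_{A|X_1=x}(\cdot) = P_2(\cdot - x)$ and $P_{B|A=a}(\cdot) = Q^*(\cdot - a)$. The KKT characterization of the capacity-achieving input (namely $D(P_2(\cdot-a)\,\|\,Q^**P_2) = C_2$ for each $a \in \{0,1\}$) yields the identity $I(X_2;X_2+Z_2) + D(P_{X_2+Z_2}\,\|\,(Q^**P_2)^{\otimes n}) = nC_2$, so the genie-aided bound $nR_2 \leq I(X_2;Y_2|X_1) = I(X_2;X_2+Z_2)$ gives
\[
D(P_{X_2+Z_2}\,\|\,(Q^**P_2)^{\otimes n}) \leq n\gamma.
\]
Marton's inequality then yields $\bar d(P_{X_2+Z_2}, (Q^**P_2)^{\otimes n}) \leq \sqrt{\gamma/(2\log e)}$, and coupling by a common $X_1$ (mod-$3$ shifts preserve Hamming distances) transfers the same bound to $\bar d(P_{X_1+X_2+Z_2}, P_{X_1+\tilde X_2+Z_2})$. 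Applying \prettyref{prop:dbar} to the two-input memoryless channel $W(y|x_1,x_2) = P_2(y - x_1 - x_2 \bmod 3)$, whose constant $c = \max_{z,z'} \log[P_2(z)/P_2(z')]$ is finite since $P_2 > 0$, then gives
\[
|H(X_1+X_2+Z_2) - H(X_1+\tilde X_2+Z_2)| \leq cn\sqrt{\gamma/(2\log e)}.
\]
Combining this estimate with the other MAC bound $nR_2 \leq I(X_2;Y_2) = H(X_1+X_2+Z_2) - H(X_1+Z_2)$ and the identity $H(\tilde X_2+Z_2) - H(Z_2) = nC_2$ produces $I(X_1;A) \leq I(X_1;B) + \delta n$ with $\delta = \gamma + c\sqrt{\gamma/(2\log e)} = O(\sqrt{\gamma})$.

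To conclude via \prettyref{cor:tensoX}, its hypotheses must be verified: strict concavity of entropy together with non-uniformity of $P_2$ forces $\supp(Q^*) = \{0,1\}$, so for distinct $a,a' \in \{0,1,2\}$ the cyclic supports $\{a, a+1\}$ and $\{a',a'+1\}$ of $P_{B|A=a}$ and $P_{B|A=a'}$ always overlap (verifying \prettyref{eq:strict1}); and non-uniformity of $P_2$ also forces $P_2(\cdot-x) \neq P_2(\cdot-x')$ for $x \neq x'$ (verifying \prettyref{eq:strict2}). Thus \prettyref{cor:tensoX} produces a continuous $g$ with $g(0) = 0$ and $H(X_1) \leq I(X_1;B) + g(\delta)n$; bounding $I(X_1;B) = H(B) - nH(Q^**P_2) \leq n\log 3 - nH(Q^**P_2) = nC_1'$ then yields $R_1 \leq \tfrac{1}{n}H(X_1) \leq C_1' + g(\delta) \to C_1'$ as $\gamma \downto 0$, completing the converse. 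The principal technical obstacle is the implicit nature of the modulus $g$: \prettyref{cor:tensoX} supplies only qualitative continuity at $0$, which is enough for the corner-point assertion but would not yield a quantitative outer bound analogous to \prettyref{eq:corner2} without a more explicit analysis of the function $F_c$ appearing in \prettyref{prop:tensoh}.
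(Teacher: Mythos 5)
Your proposal is correct and takes essentially the same route as the paper's proof: the genie bound together with the capacity-achieving output distribution to control $D(P_{X_2+Z_2}\|P_{\tX_2+Z_2})$, Marton's inequality \prettyref{eq:marton1} and the common-$X_1$ coupling to control $\bar d$, \prettyref{prop:dbar} for Wasserstein-continuity, \prettyref{cor:tensoX} for the tensorized strong data-processing step, and a MAC code at $Y_2$ for achievability. The only cosmetic difference is bookkeeping: you invoke the entropy estimate \prettyref{eq:dbarH} together with $nR_2\le I(X_2;Y_2)+o(n)$, whereas the paper uses the mutual-information estimate \prettyref{eq:dbarI} together with Fano applied to $H(X_2|Y_2)$; both yield $I(X_1;A)\le I(X_1;B)+O(\sqrt{\epsilon})\,n$ and the same conclusion.
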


\begin{remark}
	As an example, consider $P_2=\left[{1-\delta}, {\delta\over2}, {\delta\over 2}\right]$ where $\delta \neq 0,1,\frac{1}{3}$. Then the maximum in \prettyref{eq:C2-discrete} is achieved by $Q=[\frac{1}{2},\frac{1}{2}]$. Therefore $C_2=H(P_3) - H(P_2)$ and $C_1'=\log 3 - H(P_3)$, where $P_3=\left[{2-\delta\over 4}, {2-\delta\over4}, {\delta\over 2}\right]$. Note that in the case of $\delta=\frac{1}{3}$, where \prettyref{thm:corner-discrete} is not applicable, we simply have $C_2 = 0$ and $C_1'=\log 2$ since $X_2 \indep Y_2$. Therefore the corner point is discontinuous in $\delta$.
\end{remark}

\begin{remark}
	\prettyref{thm:corner-discrete} continues to hold even if cost constraints are imposed. Indeed, if $X_2 \in \{0,1,2\}^n$ is required to satisfy 
	\[
	\sum_{i=1}^n \sfb(X_{2,i}) \leq n B
	\]
	for some cost function $\sfb: \{0,1,2\} \to \reals$. Then the maximum in \prettyref{eq:C2-discrete} and
	\prettyref{eq:C1p-discrete} is taken over all $Q$ such that $\Expect_Q[\sfb(U)] \leq B$. Note that taking
	$B=\infty$ is equivalent to dropping the constraint $X_2\in\{0,1\}^n$ in \prettyref{eq:C2-discrete}. In this
	case, $C_1'=0$ which can be shown by a simpler argument not involving \prettyref{prop:dbar}.
\end{remark}

\begin{proof}
\apxonly{
Steps:
\begin{enumerate}
\item $\max I(X_2; Y_2|X_1) = nC_2$ achieved by uniform $X_2$ because $Z_2 \stackrel{d}{=}-Z_2$.
\item Under such distribution on $X_2$ we have $(X_2+Z_2) \eqdef Z_3 \sim P_3^{\otimes n}$.
\item Achievability: The rate pair $(C_1', C_2)$ is achievable by an MAC-code for $X_1X_2 \to Y_2$ with both $X_1$ and $X_2$ uniform on their respective alphabet.
\item Proceed to converse. Usual argument shows
	$$ D(P_{X_2 + Z_2}\| P_{Z_3}) = o(n) $$
\item Marton
	$$ \bar d(P_{X_2 + Z_2}, P_{Z_3})=o(1) $$
\item Convolution reduces Wasserstein distance:
	$$ \bar d(P_{X_1+X_2 + Z_2}, P_{X_1+Z_3})=o(1) $$
\item \prettyref{prop:dbar}
	$$ I(X_1; Y_2) = I(X_1; X_1 + Z_3) + o(n)\,.$$
\item Usual stuff
	$$ I(X_1; Y_2 | X_2) = I(X_1; Y_2) + o(n) $$
\item From~\eqref{eq:tensoX}
	$$ nR_1 = H(X_1) \le I(X_1; X_1 + Z_3) + o(n) \le nC_1' + o(n) $$
\end{enumerate}
}

We start with the converse.
 Given a sequence of codes with vanishing probability of error and rate pairs $(R_1,R_2)$, where $R_2=C_2 - \epsilon$, we show that $R_1 \leq C_1'-\epsilon'$, where $\epsilon' \to 0$ as $\epsilon \to 0$.
%  First note that
% $$\max_{X_2} I(X_2; Y_2|X_1) = \max_{X_2} I(X_2; X_2+Z_2) = nC_2\,,$$ achieved by $X_2$ uniform on $\{0,1,2\}^n$.
% Under such distribution on $X_2$ we have $X_2+Z_2 \eqdef Z_3 \sim P_3^{\otimes n}$.
Let $Q_2$ be the maximizer of \prettyref{eq:C2-discrete}, \ie, the capacity-achieving distribution of the channel $X_2 \mapsto X_2+Z_2$.
Let $\tilde X_2 \in \{0,1\}^n$ be distributed according to $Q_2^n$. Then $\tilde X_2+Z_2  \sim P_3^{\otimes n}$, where $P_3 = Q_2 * P_2$.
By Fano's inequality,
\begin{align}
n(C_2 - \epsilon + o(1)) = n (R_2+o(1)) = & ~ I(X_2;Y_2) 	\nonumber \\
\leq & ~ I(X_2; Y_2|X_1) = I(X_2; X_2+Z_2)	\\
= & ~ n C_2 - D(P_{X_2 + Z_2}\| P_{\tX_2+Z_2}),
\end{align}
that is,
	$$ D(P_{X_2 + Z_2}\| P_{\tX_2+Z_2}) \leq n \epsilon + o(n). $$
Since $P_{\tX_2+Z_2} = P_3^{\otimes n}$ is a product distribution, 
Marton's inequality \prettyref{eq:marton1} yields
	$$ \bar d(P_{X_1+X_2 + Z_2}, P_{X_1+\tX_2+Z_2}) \leq \bar d(P_{X_2 + Z_2}, P_{\tX_2+Z_2}) \leq \sqrt{\frac{\epsilon}{2 \log e}} + o(1).  $$
Applying \prettyref{eq:dbarI} in \prettyref{prop:dbar} and in view of the translation invariance of the $\bar{d}$-distance, we obtain
	\begin{align}
	|I(X_1; Y_2) - I(X_1; X_1 + \tX_2+Z_2)| 
= & ~ |I(X_1; X_1+X_2+Z_2) - I(X_1; X_1+\tX_2 + Z_2)| 	\nonumber \\
\leq & ~ 2 c n \Expect[\bar{d}(P_{X_1+X_2+Z_2|X_1}, P_{X_1+\tX_2 + Z_2|X_1}) ]	\nonumber \\
= & ~ 2 c n \bar{d}(P_{X_2+Z_2}, P_{\tX_2 + Z_2}) 	\nonumber \\
\leq & ~ 	(\alpha \sqrt{\epsilon} + o(1)) n, \nonumber 
\end{align}
	where $c = \max_{z,z'\in\{0,1,2\}} \log \frac{P_2(z)}{P_2(z')}$ and
	$\alpha = \frac{2c}{\sqrt{2\log e}} $ are finite since $P_2$ contains no zeros by assumption. 
On the other hand, 
	$$ I(X_1; X_1+Z_2) = I(X_1; Y_2 | X_2) = I(X_1; Y_2) + I(X_1;X_2|Y_2) = I(X_1; Y_2) + o(n), $$
	where $I(X_1;X_2|Y_2) \leq H(X_2|Y_2) = o(n)$ by Fano's inequality. Combining the last two displays, we have
	\[
	I(X_1; X_1+\tX_2 + Z_2) \leq  I(X_1; X_1+Z_2) + (\alpha \sqrt{\epsilon} + o(1)) n.
	\]
 Next we apply \prettyref{cor:tensoX}, with $X=X_1 \to A=X_1+Z_2 \to B=A+\tX_2$. To verify the conditions, note that the channel $P_{A|X}$ is memoryless and additive with non-uniform noise distribution $P_2$, which satisfies the condition \prettyref{eq:strict2}. Similar, the channel $P_{B|A}$ is memoryless and additive with noise distribution $Q_2$ , which is the maximizer of \prettyref{eq:C2-discrete}. Since $P_2$ is not uniform, $Q_2$ is not a point mass. Therefore $P_{B|A}$ satisfies \prettyref{eq:strict1}. Then \prettyref{cor:tensoX} yields 
	$$ nR_1 = H(X_1) \le I(X_1; X_1 + \tX_2+Z_2) + g(\alpha \sqrt{\epsilon}) n \le nC_1' + o(n), $$
	where the last inequality follows from the fact that $\max_{X_1} I(X_1; X_1 + \tX_2+Z_2) = n C_1'$ attained by $X_1$ uniform on $\{0,1,2\}^n$.
	
Finally, note that the rate pair $(C_1', C_2)$ is achievable by a random MAC-code for $(X_1,X_2) \to Y_2$, with $X_1$ uniform
on $\{0,1,2\}^n$ and $X_2 \sim Q_2^{\otimes n}$.
\apxonly{
Details: the MAC capacity region contains
\begin{align}
R_1 \leq & ~ I(X_1;Y_2|X_2) = I(X_1;X_1+Z_2) = \log 3 - H(P_2)	\\
R_2 \leq & ~ I(X_2;Y_2|X_1) = I(X_2;X_2+Z_2) = C_2 = H(Q_2*P_2) - H(P_2) \\
R_1+R_2 \leq & ~ I(X_1X_2;Y_2) = I(X_1+X_2;X_1+X_2+Z_2) = \log 3 - H(P_2),
\end{align}
which is satisfied by $(C_1',C_2)$.}
\end{proof}

\apxonly{
\subsection{Other GIC stuff}
As an aside, we provide a converse which recovers the known corner point due to Sato \cite{Sato78}:
\begin{theorem} Suppose $a\le 1$. Then the capacity region of the GIC satisfies:
\begin{equation}
R_{2} \le  C_2' + \frac{1+P_1}{1/a^{2}+P_1}  (C_1-R_1) \,. 	
	\label{eq:sato}
\end{equation}
 where $C_1 = {1\over 2} \log (1+P_{1})$ and $C_2' = {1\over 2} \log(1+{P_{2}\over a^2 P_{1} +1})$.
\label{thm:sato}
\end{theorem}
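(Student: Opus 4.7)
The proof would follow the same template as \prettyref{thm:corner2}, and it turns out that classical (Shannon) EPI alone suffices --- neither Costa's EPI nor the Wasserstein machinery of \prettyref{sec:wass} is needed, since the linearized bound \eqref{eq:sato} involves no ``missing corner point'' fine structure. First, reduce to the Z-interference channel by setting $b=0$: this can only enlarge $\matr(a,b)$ (it amounts to revealing $X_2$ to receiver 1). As in \prettyref{thm:corner2}, I would track user 1's entropy-power profile
$$ N_1(t) \eqdef \exp\!\left\{\tfrac{2}{n}\, h(X_1+\sqrt{t}\,Z)\right\}, \qquad Z\sim\matn(0,I_n),$$
so that $R_1 = \tfrac12\log[N_1(1)/(2\pi e)] + o(1)$ by Fano's inequality. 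The only general property of $N_1$ used below is the slope bound $N_1(t_2)-N_1(t_1)\ge 2\pi e\,(t_2-t_1)$ for $t_2\ge t_1$, which is Shannon's EPI applied to $(X_1+\sqrt{t_1}Z) + \sqrt{t_2-t_1}Z'$.

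For $R_2$, I would apply the MAC-style decomposition at receiver 2,
$$ nR_2 \le I(X_1,X_2;Y_2) - I(X_1;Y_2\mid X_2).$$
Gaussian max-entropy bounds the first term by $\tfrac{n}{2}\log 2\pi e(1+a^2P_1+P_2)$; scaling of differential entropy gives $h(aX_1+Z_2)=\tfrac{n}{2}\log[a^2 N_1(1/a^2)]$, so the second term equals $\tfrac{n}{2}\log[a^2 N_1(1/a^2)/(2\pi e)]$. Since $a\le 1$ so that $1/a^2\ge 1$, the slope bound yields $N_1(1/a^2)\ge N_1(1)+2\pi e(1/a^2-1)$; substituting $N_1(1)=2\pi e\exp(2R_1)$ produces the intermediate (non-linear, concave) outer bound
$$ R_2 \le \tilde g(R_1) \eqdef \tfrac12\log\frac{1+a^2P_1+P_2}{a^2\exp(2R_1)+1-a^2}.$$

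To recover the linear statement \eqref{eq:sato}, I would linearize: $R_1 \mapsto \log(a^2 e^{2R_1}+1-a^2)$ is convex, so $\tilde g$ is concave and lies below every tangent. A direct calculation at $R_1=C_1$ shows $\tilde g(C_1)=\tfrac12\log\frac{1+a^2P_1+P_2}{1+a^2P_1}=C_2'$ and $\tilde g'(C_1) = -\tfrac{a^2(1+P_1)}{1+a^2P_1} = -\tfrac{1+P_1}{1/a^{2}+P_1}$, so
$$ R_2 \le \tilde g(C_1)+\tilde g'(C_1)(R_1-C_1) = C_2' + \frac{1+P_1}{1/a^{2}+P_1}(C_1-R_1),$$
which is \eqref{eq:sato}. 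The only mildly delicate step is checking that the tangent slope simplifies to the ratio appearing in the statement (a one-line algebra verifying $a^2(1+P_1)/(1+a^2P_1)=(1+P_1)/(1/a^2+P_1)$); everything else is essentially a rephrasing of Sato's original classical argument \cite{Sato78}, with the EPI-slope bound playing the role that Costa's concavity played in \prettyref{thm:corner2}.
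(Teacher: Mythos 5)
Your proof is correct, but it takes a genuinely different route from the paper's own argument for this bound. The paper proves it with divergence contraction: it first writes $R_1 \le C_1 - \tfrac1n D(P_{X_1+Z_1}\|P_{G_1+Z_1})$ with $G_1\sim\matn(0,P_1 I_n)$, then passes this divergence deficit through the extra noise $W\sim\calN(0,(1/a^2-1)I_n)$ using the strong data-processing inequality for divergence with a Gaussian reference, $D(P_X*\calN(0,s)\|\calN(0,s+t))\le \tfrac{t}{t+s}D(P_X\|\calN(0,s))$, which is precisely where the coefficient $\tfrac{1+P_1}{1/a^2+P_1}$ comes from, and finally converts the bound on $D(P_{aX_1+Z_2}\|P_{aG_1+Z_2})$ into a bound on $I(X_2;Y_2)=h(X_2+aX_1+Z_2)-h(aX_1+Z_2)$ by a max-entropy/cross-entropy step; no entropy-power inequality is used. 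You instead recycle the entropy-power-function bookkeeping of \prettyref{thm:corner2}: after exponentiating, your curved bound $R_2\le\tilde g(R_1)$ is exactly the first (Sato--Kramer) part of \eqref{eq:corner2}, obtained from the same two ingredients used there (the decomposition $nR_2\le I(X_1,X_2;Y_2)-I(X_1;Y_2\mid X_2)$ with Gaussian max-entropy, plus the Shannon-EPI slope $N_1'\ge 2\pi e$), and the linear statement then follows from your tangent-line computation at $R_1=C_1$ --- essentially the observation, made elsewhere in the paper, that this linear bound is implied by Kramer's outer bound. What each buys: your route needs only the classical EPI and actually yields the stronger curved region, of which the linear inequality is a relaxation tight only at the tangency point; the paper's route avoids the EPI entirely and exhibits $\tfrac{1+P_1}{1/a^2+P_1}$ as a divergence-contraction coefficient, in keeping with its general program of propagating KL-closeness through the interference. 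Two cosmetic points: when you say Gaussian max-entropy bounds ``the first term'' by $\tfrac n2\log 2\pi e(1+a^2P_1+P_2)$ you really mean $h(Y_2)$, and $I(X_1,X_2;Y_2)$ requires subtracting $h(Z_2)$, which your final formula for $\tilde g$ correctly reflects; and the case $a=0$ (allowed by the statement) should be dispatched separately as trivial, since $N_1(1/a^2)$ is then undefined but the claimed bound reduces to $R_2\le C_2$.
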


\begin{remark}
	The converse in \prettyref{thm:sato} is superseded by Kramer's outer bound \cite[Theorem 2]{Kramer04}: $R_1 \leq \frac{1}{2}\log(1+P_1')$ and $R_2 \leq \frac{1}{2}\log(1+\frac{a^2 P_2'}{1+a^2 P_1'})$ for some $P_1',P_2'>0$ such that $P_1'+P_2'=P_1+P_2/a^2$. Setting $\alpha = \frac{1+P_1}{1/a^{2}+P_1} \in (0,1)$, this implies $\alpha R_1+R_2 \leq \sup_{P_1'+P_2'=P_1+P_2/a^2}  [\frac{1}{2}\log(1+P_1') + \frac{1}{2}\log(1+\frac{a^2 P_2'}{1+a^2 P_1'})] = \alpha C_1+C_2'$ achieved at $P_1'=P_1$, which is exactly \prettyref{eq:sato}.
	
	\label{rmk:kramer}
\end{remark}

\begin{proof}
Similarly to the previous proof, consider
\begin{align} 
R_{1} = {1\over n} I(X_1; Y_1) & \leq C_1 -{1\over n} D(X_1 + Z_1 \| G_1 + Z_1)\,,
\end{align}	
where $G_1 \sim \matn(0, P_{1} I_n)$. We use the strong data processing inequality of divergence with Gaussian
base measure~\cite[Theorem 6]{EC98}: 
For any zero-mean $X$ with variance $t$,
$$D(P_X * \calN(0,s) \| \calN(0,s+t)) \leq \frac{t}{t+s} D(P_X\|\calN(0,s))\,.$$
This is also what the standard estimate~\eqref{eq:ptt1} yields.

	Since $a\le 1$, we have
\begin{align}
\frac{1}{n} D(aX_1 + Z_1 \| aG_1 + Z_1)
= & ~ \frac{1}{n} D(X_1 + Z_1 + W \| G_1 + Z_1 + W)	\nonumber \\
\leq  & ~ \frac{1+P_1}{1/a^{2}+P_1} \frac{1}{n} D(X_1 + Z_1 \| G_1 + Z_1)\\
	\leq  & ~ \frac{1+P_1}{1/a^{2}+P_1} (C_1-R_1) \label{eq:bdd1}
\end{align}
where $W \sim \calN(0,(1/a^2-1)I_n)$.

Next, we have\footnote{If $\EE[|X_1|^2] = nP_{1}$, then \prettyref{eq:bdd2} holds with equality.} 
\begin{align} I(X_{2}; Y_{2}) &=h(X_{2} + a X_{1} + Z_{2}) - h(a X_{1} + Z_{2}) \nonumber \\
		& \leq h(X_{2} + a X_{1} + Z_{2}) - h(a G_{1} + Z_{2}) + D(a G_{1} + Z_{2} \| a X_{1} + Z_{2})\label{eq:bdd2}  \\
		& \leq n C_2' + \frac{1+P_1}{1/a^{2}+P_1} n (C_1-R_1) \label{eq:bdd3}
\end{align}
where \prettyref{eq:bdd3} follows from $\EE[|X_1|^2] \leq nP_{1}$ and \prettyref{eq:bdd1}.
%Next, we have
%\begin{align} I(X_{2}; Y_{2}) &=h(X_{2} + a X_{1} + Z_{2}) - h(X_{2} + a G_{1}  +Z_{2})\\
%		& {} + h(X_{2} + a G_{1}  + Z_{2}) - h(a G_{1} + Z_{2}) \\
%		& {} + h(a G_{1} + Z_{2}) - h(a X_{1} + Z_{2})\,.
%\end{align}
%The third term is upper-bounded\footnote{Equals, if $\EE[|X_1|^2] = nP_{1}$.} by $D(aX_1 + Z_1 \|
%aG_1 + Z_1)$ and is thus bounded by~\eqref{eq:bdd1}. The second term is bounded by $nC_1'$ (capacity of the AWGN
%channel). For the first term we apply Corollary~\ref{cor:best} with $B = X_2$, $A = a X_1$ and $G=a G_1$ and
%$c=a$.
\end{proof}
}

\apxonly{
\appendices
\section{$W_1$-regularity of smoothed differential entropy}
	\label{sec:hw1}
This appendix collects a few other regularity results for differential entropy of Gaussian convolutions in the scalar case. The downside is that these results do not  tensorize to $n$-dimensional spaces.

\begin{theorem}
Let $X_1,X_2$ be such that $\Expect[X_1^2],\Expect[X_2^2] \leq \sigma^2$. Let $Z\sim\calN(0,\sigma_Z^2)$. Then
\begin{equation}
|h(X_1+Z) - h(X_2+Z)| \leq  \kappa \pth{\frac{\sigma^2+\sigma_Z^2}{\sigma_Z^4}}^{1/5}  W_1^{2/5}(P_{X_1},P_{X_2}),
	\label{eq:hw1}
\end{equation}
where $\kappa$ is some universal constant.
	\label{thm:hw1}
\end{theorem}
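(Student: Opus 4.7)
The plan is to follow the template of Proposition~\ref{prop:ppr} with a truncation-and-optimization step to handle the fact that $\log p_{V_2}$ is only locally Lipschitz, not globally. I would start from the one-sided identity $h(V_1)-h(V_2) \le \Expect[\log p_{V_2}(V_2)-\log p_{V_2}(V_1)]$ (valid under any coupling of $V_1,V_2$, with $V_i=X_i+Z$), taking expectation under the monotone rearrangement, which in one dimension is simultaneously $W_1$- and $W_2$-optimal. By Proposition~\ref{prop:pqr}, $p_{V_2}$ is $(c_1,c_2)$-regular with $c_1=3\log e/\sigma_Z^2$ and $c_2\le 4\sigma\log e/\sigma_Z^2$, so $|\nabla \log p_{V_2}(v)|\le c_1|v|+c_2$.

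For a parameter $R>0$, I would replace $\log p_{V_2}$ by a Lipschitz cutoff $f_R$ that agrees with $\log p_{V_2}$ on $[-R,R]$ and is extended affinely with slope $\pm L_R$ outside, where $L_R=c_1R+c_2$. Since $f_R$ is globally $L_R$-Lipschitz, Kantorovich--Rubinstein duality gives the bulk bound $\Expect[f_R(V_2)-f_R(V_1)]\le L_R\,W_1(P_{V_1},P_{V_2})\le L_R\,W_1(P_{X_1},P_{X_2})$, using that convolution with $p_Z$ is a $W_1$-contraction in the last step. The error from replacing $\log p_{V_2}$ by $f_R$ is supported on $\{|v|>R\}$ where $|\log p_{V_2}(v)-f_R(v)|$ grows like $c_1v^2/2$, contributing at most $(c_1/2)\Expect[V_i^2\Indc\{|V_i|>R\}]$ (plus lower-order terms).

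The hard part will be estimating the tail moment $\Expect[V_i^2\Indc\{|V_i|>R\}]$ under only the second-moment assumption on $X_i$: a priori this quantity has no uniform rate of decay in $R$. The Gaussian smoothing by $Z$ is essential. I would condition on $X_i$ and split $\{|V_i|>R\}$ into the Gaussian-tail part $\{|X_i|\le R/2,\,|Z|>R/2\}$ (where $Z$'s Gaussian tail gives exponential decay) and the polynomial-tail part $\{|X_i|>R/2\}$ (controlled by Chebyshev against $\sigma^2$, giving probability $\le 4\sigma^2/R^2$); interpolating these two regimes should produce a tail bound of the form $C(\sigma^2,\sigma_Z^2)/R^{2/3}$.

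Finally, I would balance bulk $\asymp RW_1/\sigma_Z^2$ against tail $\asymp B/R^{2/3}$ and optimize over $R$: the optimal choice $R^\ast\asymp(B\sigma_Z^2/W_1)^{3/5}$ yields a bound of order $(\sigma^2+\sigma_Z^2)^{1/5}\sigma_Z^{-4/5}W_1^{2/5}$, matching both the exponent and prefactor in the claim. The delicate parts are (i) extracting the tail exponent $\gamma=2/3$ via the Gaussian conditioning argument---since without this the $2/5$ cannot arise---and (ii) tracking constants through the optimization to reproduce the stated prefactor $((\sigma^2+\sigma_Z^2)/\sigma_Z^4)^{1/5}$; the non-tensorizing nature of the argument can be traced to the fact that the tail-moment control relies on comparing polynomial and Gaussian tails in a one-dimensional, CDF-based way.
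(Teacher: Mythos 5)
Your bulk estimate (cutting off $\log p_{V_2}$ at radius $R$ and using Kantorovich--Rubinstein together with the $(c_1,c_2)$-regularity from Proposition~\ref{prop:pqr}) is sound, but the tail step---the one you yourself flag as the hard part---fails, and it cannot be repaired under the stated hypotheses. With only $\EE[X_i^2]\le\sigma^2$, the truncated second moment $\EE[V_i^2\indc{|V_i|>R}]$, $V_i=X_i+Z$, admits \emph{no} decay in $R$ at all: take $X_i$ to put mass $\sigma^2/(4R^2)$ at the point $2R$ and the remaining mass at $0$; then $\EE[X_i^2]=\sigma^2$, while conditionally on $X_i=2R$ the event $\{|V_i|>R\}$ has probability close to one and $V_i^2\asymp R^2$ there, so $\EE[V_i^2\indc{|V_i|>R}]\gtrsim\sigma^2$ uniformly in $R$. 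The Gaussian smoothing does not help, because the heavy tail comes from $X_i$, not from $Z$: your Chebyshev step bounds $\PP[|X_i|>R/2]$ by $4\sigma^2/R^2$, but on that very event $V_i^2$ is itself of order $R^2$ (or larger), so the product does not decay, and the claimed $R^{-2/3}$ interpolation has no basis. Consequently your decomposition yields bulk $+\,O(c_1\sigma^2)$, an additive error that does not vanish as $W_1\to0$, and the optimization over $R$ collapses.

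The paper's proof avoids exactly this by truncating in the \emph{density value} rather than in space. Writing $f_i$ for the density of $Y_i=X_i+Z$, it splits $\log\frac1t=\log\frac{1}{t\vee c}+\log\frac{c}{t}\indc{t<c}$. The first piece is handled by the facts $\Lip(f_i)\le\Lip(p_Z)\asymp\sigma_Z^{-2}=:L$, $\|f_1-f_2\|_\infty\le L\,W_1(P_{X_1},P_{X_2})$ (couple $X_1,X_2$ optimally inside the convolution), and $\Lip\bigl(t\mapsto\log\tfrac{1}{t\vee c}\bigr)=1/c$, giving a bulk term of order $L W_1/c$. The second piece is the low-density entropy contribution $\int f_i\log\frac{c}{f_i}\indc{f_i<c}$, and the key lemma is that any density $p$ with $\int y^2p\le s$ satisfies $\int p\log\frac{c}{p}\indc{p<c}\lesssim s^{1/3}c^{2/3}$: by the symmetric rearrangement inequality the level set $\{t<p<c\}$ has Lebesgue measure at most $(12 s/t)^{1/3}$, \emph{wherever it sits on the line}. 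Optimizing over $c$ balances $LW_1/c$ against $s^{1/3}c^{2/3}$ with $s=\sigma^2+\sigma_Z^2$ and gives the stated $s^{1/5}(LW_1)^{2/5}$. The conceptual point your approach misses is that what must be truncated is smallness of the density, not largeness of $|v|$: a second-moment constraint controls the measure of low-density level sets everywhere, but it gives no rate for tail second moments.
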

\begin{remark}
Observations:
\begin{enumerate}
	\item Note that both sides are invariant if $X_1,X_2,Z$ are scaled simultaneously. 
	\item In the special case of $X_2\sim \calN(0,\sigma^2)$ and $\Expect[X_1^2]=\sigma^2$, we have a different bound by coupling
	\[
	|h(X_1+Z) - h(X_2+Z)| = D(P_{X_1+Z}\|P_{X_2+Z}) \leq \frac{1}{2 \sigma_Z^2} W_2^{2}(P_{X_1},P_{X_2}).	
	\]
	which appears to be incomparable to \prettyref{eq:hw1}.
\end{enumerate}

%	\label{rmk:}
\end{remark}

\begin{proof}
Let $Y_i=X_i+Z$ with density $f_i$. Denote the pdf of $Z$ by $p_{Z}$. Then $\Lip(p_Z) = \frac{1}{\sigma^2_Z} \Lip(\varphi) = \frac{1}{\sqrt{2 e \pi } \sigma^2_Z} \triangleq L$. Then we have
\begin{equation}
\|f_1 - f_2\|_\infty = \sup_y |\Expect[f_Z(y-X_1) - f_Z(y-X_2)]| \leq L W_1(P_{X_1},P_{X_2}),	
	\label{eq:linf}
\end{equation}
 by applying the optimal coupling that gives $\Expect[|X_1-X_2|]=W_1(P_{X_1},P_{X_2})$, and
	\begin{equation}
	\Lip(f_i) \leq \Lip(p_Z) = L.
	\label{eq:lip}
\end{equation} 

Fix $c>0$.	Let $g(t) = \log \frac{1}{t \vee c}$. Then $\log \frac{1}{t} = g(t) + \log \frac{c}{t} \indc{t<c}$. Then
	\begin{align}
	& ~ |h(Y_1) - h(Y_2)|	\nonumber \\
= & ~ |\Expect[\log f_1(Y_1)] - \Expect[\log f_2(Y_2)]|	\nonumber \\
\leq & ~ |\Expect[g(f_1(Y_1)) - g(f_2(Y_2))]| + \Big|\Expect\Big[\log \frac{f_1(Y_1)}{c} \indc{f_1(Y_1) \leq c}\Big]\Big| + \Big|\Expect\Big[\log \frac{f_2(Y_2)}{c} \indc{f_2(Y_2) \leq c}\Big]\Big| \label{eq:b2}
\end{align}
For the first term, note that $\Lip(g) = \frac{1}{c}$. Furthermore, 
\begin{equation}
|f_1(Y_1) - f_2(Y_2)| \leq|f_1(Y_1) - f_2(Y_1)| + |f_2(Y_1) - f_2(Y_2)| \leq \|f_1 - f_2 \|_\infty + \Lip(f_2)|Y_1 - Y_2|.
	\label{eq:fY12}
\end{equation}
Optimizing over the coupling of $Y_1,Y_2$, we have
\begin{equation}
|\Expect[g(f_1(Y_1)) - g(f_2(Y_2))]| \leq \Lip(g) (\|f_1 - f_2 \|_\infty + \Lip(f_2) W_1(Y_1, Y_2)) \leq \frac{2L}{c} W_1(X_1,X_2).	
	\label{eq:a}
\end{equation}

To bound the remaining terms, we show that
\begin{equation}
\sup \sth{\int_\reals p \log \frac{c}{p} \indc{p < c} \diff y\colon \int_\reals y^2 p \diff y \leq s, p \geq 0} \leq	\kappa' s^{1/3} c^{2/3}.
	\label{eq:opt}
\end{equation}
for some universal constant $\kappa'$. 
Note that
\begin{align}
\int_\reals p \log \frac{c}{p} \indc{p < c} \diff y &= \int_\reals \int_0^\infty \log \frac{c}{e t} \diff t \indc{t <
p(y) < c} \diff y = \int_0^c \log \frac{c}{e t}  \cdot m\{y: t < p(y) < c\} \diff t\\
	&\le \int_0^{c/e} \log \frac{c}{e t}  \cdot m\{y: t < p(y) < c\} \diff t
	\label{eq:yp}
\end{align}
Put $E_t = \{y: t < p(y) < c\}$. Then $$s \geq \int_\reals y^2 p \diff y \geq t \int_{E_t} y^2 \diff y \geq t
\int_{-m(E_t)/2}^{m(E_t)/2} y^2 \diff y = \frac{t}{12} m(E_t)^3,$$ where the last inequality follows from the symmetric
rearrangement inequality \cite[Theorem 3.4]{lieb.loss}. Therefore $m(E_t) \leq (\frac{12s}{t})^{1/3}$. Plugging into \prettyref{eq:yp}, we have
\[
\int_\reals p \log \frac{c}{p} \indc{p < c} \diff y \leq s^{1/3} c^{2/3} \cdot \left(12\over e^2\right)^{1/3} \frac{9}{4}\,.
\]
In view of \prettyref{eq:a}, applying \prettyref{eq:opt} with $s=\sigma^2+\sigma_Z^2$ to the second and third terms in \prettyref{eq:b2} and optimizing over $c$, we have
\[
|h(Y_1) - h(Y_2)| \leq \inf_{c>0} \sth{\frac{2L}{c} W_1(P_{X_1},P_{X_2}) + \kappa' s^{1/3} c^{2/3}} =  \kappa  s^{1/5} (L W_1(P_{X_1},P_{X_2}))^{2/5}.\qedhere
\]
\end{proof}

\subsection{Extensions}
%	\label{sec:}
A variant of \prettyref{thm:hw1} is the following result (still scale-invariant) involving total variation. The proof follows the same argument using the Hamming distance in lieu of the $L_1$-distance.
\begin{theorem}
Under the same assumption of \prettyref{thm:hw1}, for another universal constant $\kappa$,
\begin{equation}
|h(X_1+Z) - h(X_2+Z)| \leq  \kappa \pth{1+\frac{\sigma^2}{\sigma_Z^2}}^{1/5}  \TV^{2/5}(P_{X_1},P_{X_2}).	
	\label{eq:htv}
\end{equation}
	\label{thm:htv}
\end{theorem}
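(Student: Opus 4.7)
The plan is to mirror the proof of \prettyref{thm:hw1} by replacing every $W_1$-based estimate with an analogous $\TV$-based estimate, which corresponds to using the maximal (Hamming-distance) coupling $\Prob[X_1\neq X_2]=\TV(P_{X_1},P_{X_2})$ in place of the $W_1$-optimal coupling. The two ingredients of \prettyref{thm:hw1} that used $W_1$ were the sup-norm estimate $\|f_1-f_2\|_\infty\le L\,W_1$ from~\eqref{eq:linf} and the Lipschitz estimate~\eqref{eq:lip} for $f_i$; for the TV version, I will instead use
\[
 \|f_1-f_2\|_\infty\le 2M\,\TV(P_{X_1},P_{X_2}),\qquad \|f_1-f_2\|_1=2\TV(P_{Y_1},P_{Y_2})\le 2\TV(P_{X_1},P_{X_2}),
\]
where $M\eqdef \|p_Z\|_\infty=\tfrac{1}{\sqrt{2\pi}\sigma_Z}$ and the first inequality follows by bounding $|p_Z(y-X_1)-p_Z(y-X_2)|\le 2M\,\indc{X_1\neq X_2}$ inside the maximal coupling, while the second is data processing.

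Next I would re-run the truncation argument with $g(t)=\log\frac{1}{t\vee c}$ for $c\in(0,M)$ to be tuned, writing $-\log f_i=g(f_i)+\log\frac{c}{f_i}\indc{f_i<c}$. The cross-term is rewritten as
\[
 \int f_1 g(f_1)-\int f_2 g(f_2)=\int g(f_1)(f_1-f_2)+\int f_2(g(f_1)-g(f_2)),
\]
and bounded via $\|g\|_\infty\le\log(M/c)$ and $\Lip(g)=1/c$ by $2\log(M/c)\TV+\tfrac{2M}{c}\TV$. The two tail terms $\int f_i\log\frac{c}{f_i}\indc{f_i<c}$ are controlled exactly as in \prettyref{eq:opt} by $\kappa' s^{1/3}c^{2/3}$ with $s=\Expect[Y_i^2]\le\sigma^2+\sigma_Z^2$; this step uses only the second moment bound and the symmetric rearrangement inequality, so it carries over verbatim.

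Finally I would optimize over $c$. Normalizing $\sigma_Z=1$ (the bound is scale-invariant under $(X_i,Z)\mapsto(\lambda X_i,\lambda Z)$), balancing the dominant terms $\frac{M}{c}\TV$ and $s^{1/3}c^{2/3}$ gives $c\sim \TV^{3/5} s^{-1/5}$, yielding a bound of order $s^{1/5}\TV^{2/5}\sim(1+\sigma^2/\sigma_Z^2)^{1/5}\TV^{2/5}$; unscaling recovers \prettyref{eq:htv}.

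The only real nuisance is the extra $\log(M/c)\TV$ term that has no counterpart in \prettyref{thm:hw1}: with the above choice of $c$ it contributes $O(\TV\log(1/\TV))$ (up to constants depending on $\sigma_Z$, which are swept into scale-invariance), and since $\TV\log(1/\TV)=o(\TV^{2/5})$ as $\TV\to 0$ and is bounded for $\TV=\Theta(1)$, it is absorbed into the universal constant $\kappa$. This is the main (but modest) obstacle; everything else is a direct transcription of the proof of \prettyref{thm:hw1} with the distance relabeled.
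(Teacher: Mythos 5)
Your proposal is correct and follows essentially the same route as the paper's own proof, which likewise transcribes the argument of \prettyref{thm:hw1} with the $W_1$-based estimates replaced by $\|f_1-f_2\|_\infty \le \|p_Z\|_\infty \TV(P_{X_1},P_{X_2})$, keeping the same truncation $g(t)=\log\frac{1}{t\vee c}$, the same second-moment/rearrangement tail bound, and the same optimization over $c$. The only (harmless) deviation is your $L^1$-splitting of the cross term, which introduces the extra $\TV\log(M/c)$ piece that you then absorb (note this step implicitly uses $\int(f_1-f_2)=0$ to reduce to the oscillation of $g$ on $(0,M]$); the paper instead bounds $|f_1(Y_1)-f_2(Y_2)|\le \|f_1-f_2\|_\infty + \|p_Z\|_\infty\indc{Y_1\neq Y_2}$ under the maximal coupling of $(Y_1,Y_2)$, so no logarithmic term arises at all.
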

\begin{proof}
The proof follows the same argument in that of \prettyref{thm:hw1}, except that \prettyref{eq:linf} and \prettyref{eq:fY12} are respectively replaced by
	\[
\|f_1 - f_2\|_\infty = \sup_y |\Expect[f_Z(y-X_1) - f_Z(y-X_2)]| \leq \|p_Z\|_\infty \TV(P_{X_1},P_{X_2}),	
	\]
	and
\[
|f_1(Y_1) - f_2(Y_2)| \leq|f_1(Y_1) - f_2(Y_1)| + |f_2(Y_1) - f_2(Y_2)| \leq \|f_1 - f_2 \|_\infty + \|p_Z\|_\infty \indc{Y_1 \neq Y_2},
	\]
	where we have used $\|f_i\|_\infty \leq \|p_Z\|_\infty = \frac{1}{\sqrt{2 \pi} \sigma_Z}$.
\end{proof}

Note that the proof of \prettyref{thm:hw1} does not crucially rely on the Gaussianity on $Z$. It only uses the simple fact that $\Lip(p * \mu) \leq \Lip(p)$ for any distribution $\mu$. In particular, we can replace $Z$ by any Gaussian mixture. In particular, we have
\begin{theorem}
Let $(X,\hat X),B,Z$ be independent such that $\Expect[X^2],\Expect[\hat{X}^2] \leq \sigma^2$, $\Expect[B^2] \leq \sigma_B^2$ and $Z\sim\calN(0,\sigma_Z^2)$. Then
\begin{equation}
|h(X+B+Z) - h(\hat{X}+B+Z)| \leq  \kappa \pth{\frac{\sigma^2+\sigma_B^2+\sigma_Z^2}{\sigma_Z^4}}^{1/5}  W_1^{2/5}(P_{X},P_{\hat{X}}).	
	\label{eq:hw1b}
\end{equation}
	\label{thm:hw1b}
\end{theorem}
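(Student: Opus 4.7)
The plan is to mirror the argument of Theorem~\ref{thm:corner2} (the Gaussian corner point) but in the discrete world, with Proposition~\ref{prop:dbar} replacing Corollary~\ref{cor:best} and Corollary~\ref{cor:tensoX} replacing Costa's EPI. The achievability direction is the easy half: take $X_1$ uniform on $\{0,1,2\}^n$ and $X_2\sim Q_2^{\otimes n}$, where $Q_2$ attains the maximum in~\eqref{eq:C2-discrete}. Since $Y_1=X_1$ forces the rate pair $(C_1',C_2)$ to lie inside the MAC region for $(X_1,X_2)\to Y_2$, one checks that the three MAC constraints $R_1\le I(X_1;Y_2|X_2)$, $R_2\le I(X_2;Y_2|X_1)$, $R_1+R_2\le I(X_1,X_2;Y_2)$ are all satisfied by direct computation using $H(X_1+\tilde X_2+Z_2)=\log 3$ under uniform $X_1$.

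For the converse, fix a sequence of codes with vanishing error and $R_2=C_2-\epsilon$. Let $\tilde X_2\sim Q_2^{\otimes n}$, so that $\tilde X_2+Z_2\sim P_3^{\otimes n}$ with $P_3=Q_2\ast P_2$. A one-line Fano/divergence argument conditioning on $X_1$ yields
\[
D(P_{X_2+Z_2}\|P_{\tilde X_2+Z_2})\le n\epsilon+o(n).
\]
Since the reference is a product measure, Marton's inequality~\eqref{eq:marton1} converts this to $\bar d(P_{X_2+Z_2},P_{\tilde X_2+Z_2})\le\sqrt{\epsilon/(2\log e)}+o(1)$, and by translation invariance the same bound holds for $\bar d(P_{X_1+X_2+Z_2},P_{X_1+\tilde X_2+Z_2})$. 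I then invoke~\eqref{eq:dbarI} of Proposition~\ref{prop:dbar} (using the additive channel $W(y|x,a)=P_2(y-x-a)$, whose finite log-ratio constant $c=\max_{z,z'}\log P_2(z)/P_2(z')$ is well-defined because $P_2$ has no zeros) to conclude
\[
|I(X_1;Y_2)-I(X_1;X_1+\tilde X_2+Z_2)|\le \alpha\sqrt{\epsilon}\,n+o(n).
\]
Next, Fano gives $I(X_1;Y_2|X_2)=I(X_1;Y_2)+o(n)$, and conditioning on $X_2$ leaves $I(X_1;X_1+Z_2)$, so
\[
I(X_1;X_1+Z_2)\le I(X_1;X_1+\tilde X_2+Z_2)+\alpha\sqrt{\epsilon}\,n+o(n).
\]

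The final step is to turn this two-stage mutual-information inequality into a bound on $H(X_1)$, which is exactly the job of Corollary~\ref{cor:tensoX} with $X=X_1$, $A=X_1+Z_2$, $B=A+\tilde X_2$. This produces $H(X_1)\le I(X_1;X_1+\tilde X_2+Z_2)+g(\alpha\sqrt{\epsilon})n\le nC_1'+o(n)$, proving $R_1\le C_1'-\epsilon'$ with $\epsilon'=o(1)$ as $\epsilon\to 0$. The step I expect to be the main subtlety is verifying the nondegeneracy hypotheses~\eqref{eq:strict1}--\eqref{eq:strict2} needed for Corollary~\ref{cor:tensoX}: $P_{A|X}$ is additive with noise $P_2$, and $P_{A|X=x}\ne P_{A|X=x'}$ for $x\ne x'$ holds trivially; $P_{B|A}$ is additive with noise $Q_2$, and to get $P_{B|A=a}\not\perp P_{B|A=a'}$ for $a\ne a'$ we need $Q_2$ not to be a point mass. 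This is precisely where the hypothesis that $P_2$ is non-uniform enters: if $P_2$ were uniform then $Q_2$ could (and would) be a point mass and the argument would break down (consistent with the degenerate example in the remark after the theorem). Given these conditions, the function $g$ from Corollary~\ref{cor:tensoX} is continuous with $g(0)=0$, which closes the converse.
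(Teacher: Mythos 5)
There is a fundamental mismatch: the statement you were asked to prove is the scalar smoothing estimate $|h(X+B+Z)-h(\hat X+B+Z)| \leq \kappa\,\bigl((\sigma^2+\sigma_B^2+\sigma_Z^2)/\sigma_Z^4\bigr)^{1/5} W_1^{2/5}(P_X,P_{\hat X})$, but your argument proves a different result altogether, namely the corner point of the discrete interference channel (Theorem \ref{thm:corner-discrete} in the paper), via Fano, Marton's inequality, Proposition \ref{prop:dbar} and Corollary \ref{cor:tensoX}. None of the objects in the statement --- the scalar variables $X,\hat X$ with bounded second moments, the independent perturbation $B$, the Gaussian $Z$, the $W_1$ distance between the \emph{unsmoothed} laws $P_X$ and $P_{\hat X}$, or the exponent $2/5$ --- appears anywhere in your proof, so it cannot establish the claimed inequality. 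Note also that the statement is not reachable by the paper's main machinery (regular densities plus Talagrand), since here one only controls $W_1(P_X,P_{\hat X})$ rather than $W_2$ or a divergence, and the second moments of $X$ and $\hat X$ need not match.

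A correct proof (the paper's own route) treats $X+B+Z$ as $X$ convolved with the fixed ``noise'' $B+Z$, whose density $p_{B+Z}=p_Z * P_B$ inherits the Lipschitz constant $L=\frac{1}{\sqrt{2e\pi}\,\sigma_Z^2}$ of the Gaussian density, because convolving with any probability measure does not increase the Lipschitz constant. Writing $f_1,f_2$ for the densities of $X+B+Z$ and $\hat X+B+Z$, one gets $\|f_1-f_2\|_\infty \le L\,W_1(P_X,P_{\hat X})$ by evaluating $p_{B+Z}(y-\cdot)$ along an optimal $W_1$-coupling of $X$ and $\hat X$. One then splits $\log\frac1{f_i}$ at a truncation level $c$: the truncated part $\log\frac{1}{f_i\vee c}$ is $\frac1c$-Lipschitz in the density value, giving a contribution of order $\frac{L}{c}W_1(P_X,P_{\hat X})$; the low-density tail $\int f_i \log\frac{c}{f_i}\indc{f_i<c}$ is bounded by $\kappa' s^{1/3}c^{2/3}$ with $s=\sigma^2+\sigma_B^2+\sigma_Z^2$, using the second-moment constraint and a symmetric rearrangement argument. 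Optimizing over $c$ produces exactly the $s^{1/5}(L\,W_1)^{2/5}$ dependence in the statement. Your proposal contains no analogue of any of these steps, so as a proof of this theorem it is not a partial argument with a gap but an answer to a different question.
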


\section{A standalone account}
%	\label{sec:}
Define the following $n$-dimensional random vectors:
\begin{align*}
Y = & ~ B + X + Z	\nonumber \\
\tilde Y = & ~ B + \tilde X + Z
\end{align*}
where $B,X,\tilde X,Z$ are independent, $Z \sim N(0,I_n)$, $\tilde X$ is Gaussian with the same mean $\mu_X$ as $X$ and covariance matrix $s I_n$.
%, where $s = \frac{1}{n} \Expect[\|X\|^2]$. 
\begin{theorem}
Assume that $\|B\| \leq \sqrt{n \beta}$ almost surely and $\Expect[\|X\|^2] \leq \Expect[\|\tX\|^2]$.
Then
\begin{equation}
	h(Y) - h(\tilde Y) \leq \sqrt{\frac{2\beta n}{1+s} D(X+Z \| \tilde X+ Z)}.
	\label{eq:ch}
\end{equation}	
	\label{thm:ch}
\end{theorem}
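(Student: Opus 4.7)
}

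My plan is to combine the transport argument of Proposition~\ref{prop:best} with Talagrand's inequality \eqref{eq:talagrand}, very much as in Corollary~\ref{cor:best}, while exploiting the mean-matching and second-moment hypotheses to eliminate the ``drift'' terms. Set $U = Y = B + X + Z$ and $V = \tilde Y = B + \tilde X + Z$. Since $\tilde X + Z \sim \calN(\mu_X, (1+s) I_n)$ is Gaussian, the formula \eqref{eq:pqrX} yields
$$ \nabla \log p_V(v) = \frac{\log e}{1+s}\bigl(\mu_X + \hat B(v) - v\bigr), \qquad \hat B(v) \eqdef \Expect[B\mid V=v], $$
and the almost-sure bound on $B$ forces $\|\hat B(v)\| \le \sqrt{n\beta}$ for all $v$.

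Integrating along the segment from $u$ to $v$ as in \eqref{eq:s1}, and using Cauchy--Schwarz on the $\hat B$-term, I obtain
$$ \log \frac{p_V(v)}{p_V(u)} \leq \frac{\log e}{1+s}\left[ \iprod{\mu_X}{v-u} + \sqrt{n\beta}\,\|v-u\| - \tfrac{1}{2}(\|v\|^2 - \|u\|^2)\right]. $$
Taking expectation under the $W_1$-optimal coupling of $P_U$ and $P_V$ and invoking \eqref{eq:ppr2a}, the linear term $\iprod{\mu_X}{\Expect V - \Expect U}$ vanishes because $\Expect X = \Expect \tilde X = \mu_X$ (and $\Expect B$ is common), while $\Expect\|V\|^2 - \Expect\|U\|^2 = \Expect\|\tilde X\|^2 - \Expect\|X\|^2 \ge 0$ by hypothesis. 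Therefore
$$ h(Y) - h(\tilde Y) \leq \frac{\sqrt{n\beta}\,\log e}{1+s}\, W_1(P_Y, P_{\tilde Y}). $$

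The final step is to convert $W_1$ into a divergence bound. Since $W_1 \leq W_2$ and the $W_2$-distance is non-increasing under convolution (already used in Corollary~\ref{cor:best}),
$$ W_2(P_Y, P_{\tilde Y}) = W_2\bigl(P_B * P_{X+Z},\, P_B * P_{\tilde X + Z}\bigr) \leq W_2(P_{X+Z}, P_{\tilde X + Z}). $$
Because $\tilde X + Z$ is Gaussian with covariance $(1+s)I_n$, Talagrand's inequality \eqref{eq:talagrand} (which is translation-invariant and applies to any Gaussian target) gives $W_2^2(P_{X+Z}, P_{\tilde X + Z}) \leq \frac{2(1+s)}{\log e}\, D(P_{X+Z}\|P_{\tilde X+Z})$. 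Substituting yields the claimed bound (up to the $\log e$ unit factor that cancels when $D$ is measured in nats).

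The main obstacle, and the reason a naive application of Corollary~\ref{cor:best} does not suffice, is the non-centered Gaussian $\tilde X + Z$: I need to carry the mean $\mu_X$ through the gradient computation and then observe that the mean-matching hypothesis $\Expect X = \Expect \tilde X$ causes the resulting linear term to cancel under \emph{any} coupling, not just the optimal one. Once that cancellation is in place, the second-moment inequality makes the quadratic term work in our favor, and the rest is a routine combination of the $W_2$-contraction under convolution with Talagrand's inequality.
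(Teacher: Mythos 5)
Your proof is correct and follows essentially the same route as the paper's: both hinge on the fact that $\nabla\log p_{\tilde Y}$ is the Gaussian drift plus the bounded term $\Expect[B\mid \tilde Y]$ (giving the Lipschitz constant $\sqrt{n\beta}/(1+s)$), use the mean-matching and second-moment hypotheses to discard the linear and quadratic terms along with a nonnegative KL term, and then chain $W_1\le W_2$, monotonicity under convolution with $P_B$, and Talagrand's inequality. The only cosmetic difference is that you run the primal coupling/path-integral argument of Proposition~\ref{prop:best} via \eqref{eq:ppr2a}, whereas the paper bounds $\int (p-\tilde p)\log\frac{\varphi_{1+s}}{\tilde p}$ directly through the Kantorovich dual form \eqref{eq:w1dual} of $W_1$.
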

\begin{proof}
Denote the density of $Y,\tilde Y$ and $Z$ by $p, \tilde p$ and $\varphi$, respectively. Note that 
\begin{align}
%h(Y) - h(\tY)
%= & ~ \int p \log \frac{1}{p} - \int \tp \log \frac{1}{\tp}	\nonumber \\
%= & ~ \int (p-\tp) \log \frac{1}{p} + \int \tp \log \frac{\tp}{p} 	\\
%= & ~ \int (p-\tp) \log \frac{\varphi}{p} + (\Expect \|Y\|^2 - \Expect \|\tY\|^2)   + D(\tp\|p) 	\\
%= & ~ \int (p-\tp) \log \frac{\varphi}{p} + D(\tp\|p) 
h(Y) - h(\tY)
= & ~ \int p \log \frac{1}{p} - \int \tp \log \frac{1}{\tp}	\nonumber \\
= & ~ \int (p-\tp) \log \frac{1}{\tp} - \int p \log \frac{p}{\tp} 	\nonumber \\
= & ~ \int (p-\tp) \log \frac{\varphi_{1+s}}{\tp} + \frac{\Expect \|Y\|^2 - \Expect \|\tY\|^2}{2(1+s)}   - D(p\|\tp) 	\nonumber \\
\leq & ~ \int (p-\tp) \log \frac{\varphi_{1+s}}{\tp}   \label{eq:hd}
\end{align}
where we used the fact that $\Expect \|Y\|^2 - \Expect \|\tY\|^2 = \Expect \|X\|^2 - \Expect \|\tX\|^2 \leq 0$.
%Since $D(p\|\tp) =  D(B+X+Z \|B+ \tilde X+ Z) \leq D(X+Z \| \tilde X+ Z) \leq n \delta$, to prove \prettyref{eq:ch}, 
Note that $\tilde p$ equals the convolution of $P_B$ and $\calN(0,(1+s) I_n)$. Then
$\nabla \log \frac{\varphi_{1+s}}{\tilde p}(y)  = - \frac{1}{1+s} \Expect[B|\tY=y]$ and $\|B\| \leq \sqrt{n\beta}$ a.s.,
%\begin{equation}
%\|X+B\| \leq \|X\|+\|B\| \leq \sqrt{n}(\sqrt{\alpha}+\sqrt{\beta}) \triangleq L \quad\text{a.s.}
%	\label{eq:power}
%\end{equation}
we have $\|\nabla \log \frac{\varphi_{1+s}}{\tp}(y)\| \leq \frac{\sqrt{n\beta} }{1+s} \triangleq L$ for almost all, and hence all, $y \in \reals$ as $\log \frac{\varphi_{1+s}}{\tp}$ is smooth; in other words, $\log \frac{\varphi_{1+s}}{\tp}: \reals\to\reals$ is $L$-Lipschitz. Therefore
\begin{equation}
\left|\int (p-\tp) \log \frac{\varphi_{1+s}}{\tp} \right| \leq L W_1(Y,\tY).	
	\label{eq:ch3}
\end{equation}
Next we bound $W_1(Y,\tY)$ by applying Talagrand's inequality \cite{Talagrand96}: Note that $\tX + Z$ is Gaussian with covariance matrix $(1+s)I_n$. Then
\begin{align}
W_1(Y,\tY)
\leq & ~ W_1(X+Z,\tX+Z)	\label{eq:w1} \\
\leq & ~ W_2(X+Z,\tX+Z)	\label{eq:w2} \\
\leq & ~ \sqrt{2(1+s) D(X+Z\|\tX+Z)}	\label{eq:w3} 
\end{align}
Assembling \prettyref{eq:hd}, \prettyref{eq:ch3} and  \prettyref{eq:w3} yields the desired \prettyref{eq:ch}.
\end{proof}
}

%\newpage
\section*{Acknowledgment}
Explaining that the ``missing corner point'' requires proving of~\eqref{eq:costa_req}, as well as the majority of our
knowledge on interference channels were provided by Prof.~Chandra Nair. We acknowledge his scholarship and
patience deeply.

The research of Y.P. has been supported in part by the Center for Science of Information (CSoI),
an NSF Science and Technology Center, under grant agreement CCF-09-39370 and by the NSF CAREER award under grant
agreement CCF-12-53205.  
The research of Y.W. has been supported in part by NSF grants IIS-14-47879, CCF-14-23088 and CCF-15-27105.
This work would not be possible without the generous support of the Simons Institute for the Theory
of Computing and California SB-420.

\appendix
\section{Proof of \prettyref{lmm:dbar}}
	\label{app:dbar}
\begin{proof}
To prove the tensorization inequality, let $(X,Y)=(X_i,Y_i)_{i=1}^n$ be independent and individually distributed as the optimal coupling of $(P_i,Q_i)$. Then 
$\Expect[d_H(X,Y)]=\sum_{i=1}^n \prob{X_i\neq Y_i} = \sum_{i=1}^n \TV(P_i,Q_i)$.

To show \prettyref{eq:dbar-contract}, 
let $\pi_{X,Y,\tX,\tY}$ be an arbitrary coupling of $P_{XY}$ and $Q_{XY}$ so that $(X,\tX)$ is distributed according to the optimal coupling of $\dbar(P_X,Q_X)$, that is, $\Expect_{\pi}[d_H(X,\tX)]=n\dbar(P_X,Q_X)$.
By the first inequality we just proved, for any $x,x'\in\calX^n$, 
\[
\dbar(P_{Y|X=x}, P_{Y|X=\tx}) \leq \frac{1}{n} \sum_{i=1}^n \TV(P_{Y_i|X_i=x_i}, P_{Y_i|X_i=\tx_i}) \leq \frac{1}{n} \sum_{i=1}^n \etaTV(P_{Y_i|X_i}) \indc{x_i\neq \tx_i} \leq \frac{\eta d_H(x,\tx)}{n} .
\]
where $\eta = \max_{i\in[n]} \etaTV(P_{Y_i|X_i})$ and the middle inequality follows from Dobrushin's contraction coefficient.
Applying Dobrushin's contractoin \cite{RLD70} (see \cite[Proposition 18]{PW14a}, with $\rho=\frac{1}{n} d_H$ and $r=\eta\rho$), there exists a coupling $\pi'_{X,Y,\tX,\tY}$ of $P_{XY}$ and $Q_{XY}$, so that $\pi'_{X\tX}=\pi_{X\tX}$ and $\Expect_{\pi'}[d_H(Y,\tY)] \leq \eta \Expect_{\pi}[d_H(X,\tX)] = n \eta \dbar(P_X,Q_X)$, concluding the proof.
\end{proof}

\section{Proof of \prettyref{prop:tensoh}}
\label{app:tensoh}
\begin{proof} Basic properties of $F_c$ follow from standard arguments. To show the strict inequality $F_c(t)<t$ under the conditions \prettyref{eq:strict1} and \prettyref{eq:strict2}, we first
notice that $F_c$ is simply the concave envelope of the
set of achievable pairs $(H(X|A),H(X|B))$ obtained by iterating over all $P_X$. By Caratheodory's theorem, it is sufficient to consider a ternary-valued
$U$ in the optimization defining $F_c(t)$. Then the set of achievable pairs $(H(X|A,U), H(X|B,U))$ is convex
and compact (as the continuous image of the compact set of distributions $P_{U,X}$). Consequently, to have $F_c(t)=t$
there must exist a distribution $P_{U,X}$, such that
\begin{equation}\label{eq:th0}
	H(X|A,U) = H(X|B,U)=t\,.
\end{equation}
We next show that under the extra conditions on $P_{B|A}$ and $P_{A|X}$ we must have $t=0$. Indeed, 
\prettyref{eq:strict1} guarantees the channel $P_{B|A}$ satisfies the strong data processing inequality  (see, \eg, \cite[Exercise 15.12 (b)]{ckbook2} and \cite[Section 1.2]{PW14a} for a survey)
that there exists~$\eta<1$ such that 
\begin{equation}\label{eq:th1}
	I(X;B|U)\le \eta I(X; A|U).
\end{equation}
 From~\eqref{eq:th0} and~\eqref{eq:th1} we infer
that $I(X;A|U)=0$, or equivalently
$$ D(P_{A|X}\|P_{A|U} | P_{U,X})=0\,.$$
On the other hand, the condition \prettyref{eq:strict2} ensures that then we must have $H(X|U)=0$. Clearly, this implies $t=0$
in~\eqref{eq:th0}.

To show the single-letterization statement \eqref{eq:sl}, we only consider the case of $n=2$ since the
generalization is straightforward by induction. Let 
$ X^2\to A^2\to B^2 $
be a Markov chain with blocklength-$2$ memoryless channel in between. We have
\begin{align} H(X^2|B^2) &= H(X_1|B^2) + H(X_2|B^2,X_1)\\
	&= H(X_1|B^2) + H(X_2|B_2, X_1)\label{eq:th2}\\
	&\ge H(X_1|B_1,A_2) + H(X_2|B_2,X_1)\label{eq:th3} 
\end{align}
where~\eqref{eq:th2} is because $B_2 \to X_2 \to X_1 \to B_1$ and hence $I(X_2;B_1|X_2B_2)=0$, 
%$B_1-X_1-X_2-B_2$ 
and~\eqref{eq:th3} is because $B_1\to X_1\to A_2 \to B_2$. Next consider the chain
\begin{align} H(X | A^2) &= H(X_1|A^2) + H(X_2|A^2,X_1)\\
	        &= H(X_1|A^2) + H(X_2 | A_2,X_1)\label{eq:th4}\\
		&\leq F_c(H(X_1|B_1,A_2)) + F_c(H(X_2|B_2,X_1))\label{eq:th5}\\
		&\le 2F_c\left({1\over2}H(X_1|B_1,A_2) + {1\over2}H(X_2|B_2,X_1)\right)\label{eq:th6}\\
		&\le 2F_c\left({1\over2}H(X^2|B^2)\right)\label{eq:th7}
\end{align}
where~\eqref{eq:th4} is by $A_2\to X_2 \to X_1 \to A_1$ and hence $I(X_2;A_1|X_1,A_2)=0$,~\eqref{eq:th5} is by the definition of $F_c$ and since we have both
$ A_2\to X_1\to A_1\to B_1$ and $X_1\to X_2\to A_2\to B_2$,
\eqref{eq:th6} is by the concavity of $F_c$, and finally~\eqref{eq:th7} is by the monotonicity of $F_c$ and \eqref{eq:th3}.
\end{proof}

%\bibliographystyle{alpha}
%\bibliography{IEEEabrv,strings,reports,wu_big}

\begin{thebibliography}{GSSV05}

\bibitem[AG76]{AG76}
R.~Ahlswede and P.~G{\'a}cs.
\newblock Spreading of sets in product spaces and hypercontraction of the
  {M}arkov operator.
\newblock {\em Ann. Probab.}, pages 925--939, 1976.

\bibitem[BPS14]{Bustin-ISIT14}
Ronit Bustin, H~Vincent Poor, and Shlomo Shamai.
\newblock The effect of maximal rate codes on the interfering message rate.
\newblock In {\em Proc. 2014 IEEE Int. Symp. Inf. Theory (ISIT)}, pages 91--95,
  Honolulu, HI, USA, July 2014.

\bibitem[BPS15a]{bustin2014effect}
Ronit Bustin, H~Vincent Poor, and Shlomo Shamai.
\newblock The effect of maximal rate codes on the interfering message rate.
\newblock {\em arXiv preprint arXiv:1404.6690v4}, Apr 2015.

\bibitem[BPS15b]{bustin2015optimal}
Ronit Bustin, H~Vincent Poor, and Shlomo Shamai.
\newblock Optimal point-to-point codes in interference channels: An incremental
  {I-MMSE} approach.
\newblock {\em arXiv preprint arXiv:1510.08213}, Oct 2015.

\bibitem[Car75]{Carleial75}
A.~Carleial.
\newblock A case where interference does not reduce capacity (corresp.).
\newblock {\em {IEEE} Trans. Inf. Theory}, 21(5):569--570, Sep 1975.

\bibitem[CK11]{ckbook2}
Imre Csiszar and J{\'a}nos K{\"o}rner.
\newblock {\em Information theory: coding theorems for discrete memoryless
  systems}.
\newblock Cambridge University Press, 2nd edition, 2011.

\bibitem[Cos85a]{Costa85}
Max~H.M. Costa.
\newblock A new entropy power inequality.
\newblock {\em {IEEE} Trans. Inf. Theory}, 31(6):751--760, 1985.

\bibitem[Cos85b]{MC85_GIC}
Max~H.M. Costa.
\newblock On the {G}aussian interference channel.
\newblock {\em {IEEE} Trans. Inf. Theory}, 31(5):607--615, 1985.

\bibitem[Cos11]{costa2011noisebergs}
Max H.~M. Costa.
\newblock Noisebergs in {Z-G}aussian interference channels.
\newblock In {\em Proc. Information Theory and Applications Workshop (ITA)},
  San Diego, CA, February 2011.

\bibitem[CR15]{CR15}
Max~H.M. Costa and Olivier Rioul.
\newblock From almost {G}aussian to {G}aussian: Bounding differences of
  differential entropies.
\newblock In {\em Proc. Information Theory and Applications Workshop (ITA)},
  San Diego, CA, February 2015.

\bibitem[CS07]{cchang}
C.~Chang and A.~Sahai.
\newblock Universal quadratic lower bounds on source coding error exponents.
\newblock In {\em 41st Annual Conference on Information Sciences and Systems},
  pages 714--719, 2007.

\bibitem[Dob70]{RLD70}
R.~L. Dobrushin.
\newblock Definition of random variables by conditional distributions.
\newblock {\em Theor. Probability Appl.}, 15(3):469--497, 1970.

\bibitem[EGK11]{nit-book}
Abbas El~Gamal and Young-Han Kim.
\newblock {\em Network information theory}.
\newblock Cambridge University Press, 2011.

\bibitem[GNS75]{GNS75}
Robert~M. Gray, David~L. Neuhoff, and Paul~C. Shields.
\newblock A generalization of {O}rnstein's $\bar d$ distance with applications
  to information theory.
\newblock {\em The Annals of Probability}, pages 315--328, 1975.

\bibitem[GSSV05]{guo.immse}
D.~Guo, S.~Shamai~(Shitz), and S.~Verd\'u.
\newblock {Mutual Information and Minimum Mean-Square Error in Gaussian
  Channels}.
\newblock {\em {IEEE} Trans. Inf. Theory}, 51(4):1261 -- 1283, Apr. 2005.

\bibitem[Kra04]{Kramer04}
Gerhard Kramer.
\newblock Outer bounds on the capacity of {G}aussian interference channels.
\newblock {\em {IEEE} Trans. Inf. Theory}, 50(3):581--586, 2004.

\bibitem[Mar86]{Marton86}
Katalin Marton.
\newblock A simple proof of the blowing-up lemma (corresp.).
\newblock {\em {IEEE} Trans. Inf. Theory}, 32(3):445--446, 1986.

\bibitem[PV14]{PV12-optcodes}
Y.~Polyanskiy and S.~Verd{\'u}.
\newblock Empirical distribution of good channel codes with non-vanishing error
  probability.
\newblock {\em {IEEE} Trans. Inf. Theory}, 60(1):5--21, January 2014.

\bibitem[PW16]{PW14a}
Yury Polyanskiy and Yihong Wu.
\newblock Dissipation of information in channels with input constraints.
\newblock {\em {IEEE} Trans. Inf. Theory}, 62(1):35--55, January 2016.
\newblock also arXiv:1405.3629.

\bibitem[RC15]{RC15}
Olivier Rioul and Max~H.M. Costa.
\newblock Almost there -- corner points of {G}aussian interference channels.
\newblock In {\em Proc. Information Theory and Applications Workshop (ITA)},
  San Diego, CA, February 2015.

\bibitem[RS13]{raginsky2013concentration}
Maxim Raginsky and Igal Sason.
\newblock Concentration of measure inequalities in information theory,
  communications, and coding.
\newblock {\em Found. and Trends in Comm. and Inform Theory}, 10(1-2):1--247,
  2013.

\bibitem[Sas04]{IS04_GIC}
Igal Sason.
\newblock On achievable rate regions for the {G}aussian interference channel.
\newblock {\em {IEEE} Trans. Inf. Theory}, 50(6):1345--1356, 2004.

\bibitem[Sas15]{IS13-GIC-corner}
Igal Sason.
\newblock On the corner points of the capacity region of a two-user {G}aussian
  interference channel.
\newblock {\em {IEEE} Trans. Inf. Theory}, 61(7):3682--3697, July 2015.

\bibitem[Sat78]{Sato78}
Hiroshi Sato.
\newblock On degraded {G}aussian two-user channels (corresp.).
\newblock {\em {IEEE} Trans. Inf. Theory}, 24(5):637--640, 1978.

\bibitem[Sat81]{Sato81}
Hiroshi Sato.
\newblock The capacity of the {G}aussian interference channel under strong
  interference (corresp.).
\newblock {\em {IEEE} Trans. Inf. Theory}, 27(6):786--788, 1981.

\bibitem[Tal96]{Talagrand96}
M.~Talagrand.
\newblock Transportation cost for {Gaussian} and other product measures.
\newblock {\em Geometric and Functional Analysis}, 6(3):587--600, 1996.

\bibitem[Vil03]{villani.topics}
C.~Villani.
\newblock {\em {Topics in optimal transportation}}.
\newblock American Mathematical Society, Providence, RI, 2003.

\bibitem[WV10]{WV10}
Yihong Wu and Sergio Verd\'u.
\newblock R{\'e}nyi information dimension: Fundamental limits of almost
  lossless analog compression.
\newblock {\em {IEEE} Trans. Inf. Theory}, 56(8):3721--3748, Aug. 2010.

\bibitem[WV12]{mmse.functional.IT}
Yihong Wu and Sergio Verd\'u.
\newblock Functional properties of {MMSE} and mutual information.
\newblock {\em {IEEE} Trans. Inf. Theory}, 58(3):1289 -- 1301, Mar. 2012.

\end{thebibliography}

\end{document}